\newcommand{\be}{\begin{equation}}
\newcommand{\ee}{\end{equation}}
\newcommand{\I}{\mathrm{I}}
\numberwithin{equation}{section}
\newcounter{dummy} \numberwithin{dummy}{section}
  \theoremstyle{plain}
  \newtheorem*{theorem*}        {Theorem}
	\newtheorem*{conjecture*}   {Conjecture}
  \newtheorem{theorem}[dummy]          {Theorem}
  \newtheorem{lemma}[dummy]              {Lemma}
  \newtheorem*{lemma*}          {Lemma}
  \newtheorem{corollary}[dummy]           {Corollary}
  \newtheorem{remark}[dummy]           {Remark}
  \theoremstyle{remark}
  \theoremstyle{definition}
   \newtheorem{definition}[dummy]          {Definition}
\def\moverlay{\mathpalette\mov@rlay}
\def\mov@rlay#1#2{\leavevmode\vtop{%
		\baselineskip\z@skip \lineskiplimit-\maxdimen
		\ialign{\hfil$\m@th#1##$\hfil\cr#2\crcr}}}
\newcommand{\charfusion}[3][\mathord]{
	#1{\ifx#1\mathop\vphantom{#2}\fi
		\mathpalette\mov@rlay{#2\cr#3}
	}
	\ifx#1\mathop\expandafter\displaylimits\fi}
\begin{document}

\begin{center}
{\LARGE Phase Transitions on 1d Long-Range Ising Models with Decaying Fields: A Direct Proof via Contours}
\vskip.5cm
Lucas Affonso, Rodrigo Bissacot, Henrique Corsini, Kelvyn Welsch
\vskip.3cm
\begin{footnotesize}
Institute of Mathematics and Statistics (IME-USP), University of S\~{a}o Paulo, Brazil\\
\end{footnotesize}
\vskip.1cm
\begin{scriptsize}
emails: lucas.affonso.pereira@gmail.com, rodrigo.bissacot@gmail.com, henriquecorsini@gmail.com, kelvyn.emanuel@gmail.com
\end{scriptsize}
\end{center}

\begin{abstract}
      Following seminal work by J. Fr\"ohlich and T. Spencer on the critical exponent $\alpha=2$, we give a proof via contours of phase transition in the one-dimensional long-range ferromagnetic Ising model in the entire region of decay, where phase transition is known to occur, i.e., polynomial decay $\alpha \in (1,2]$. No assumptions that the nearest-neighbor interaction $J(1)$ is large are made. The robustness of the method also yields proof of phase transition in the presence of a nonsummable external field that decays sufficiently fast.
\end{abstract}

\section{Introduction}

In this paper, we revisit the powerful technique of multiscaled contours used by J. Fr\"ohlich and T. Spencer to prove phase transition for the one-dimensional ferromagnetic Ising model with interaction $J(r)=r^{-2}$. More precisely, we give a proof of phase transition by multiscaled contours to ferromagnetic one-dimensional Ising models, i.e., models with the formal Hamiltonian
\begin{align*}
    H_J(\sigma)=\sum_{x,y \in \mathbb{Z}} - J(\lvert x-y\rvert)\sigma_x\sigma_y,
\end{align*}
where $J(r)\geq 0$ and $\sigma_x\in\{-1,+1\}$, for all $x\in\mathbb{Z}$. For more details, see Section \ref{prelim}. While many of the techniques shown here may be applied to general interactions, we shall deal with $J(r)=r^{-\alpha}$, where $1<\alpha\leq 2$.

In \cite{Frohlich.Spencer.82}, Fr\"ohlich and Spencer were able to prove the existence of phase transition for the one-dimensional ferromagnetic Ising model with the critical decay exponent $\alpha=2$. They accomplished this by adapting the notion of decomposition by neutral multipoles, which they had developed the year before to give the first rigorous proof of BKT transition in \cite{FS81}, to the one-dimensional scenario. In \cite{Imbrie.82}, also published in 1982, introducing a very slight modification to the notion of multiscaled contours created by Fr\"ohlich and Spencer, Imbrie provided a low-temperature expansion to the one-dimensional ferromagnetic Ising model with decay $\alpha=2$ and was able to prove very sharp results regarding the two-point correlation function $\langle \sigma_x;\sigma_y\rangle_\beta$ in the regime of large $\beta$. 

In 2005, Cassandro, Ferrari, Merola, and Presutti revisited the notion of multiscaled contour introduced by Fr\"ohlich and Spencer and were able to prove the existence of phase transition for decaying exponents $\alpha\in [3-\log_23,2]$. They gave the contours a geometrical description as collections of triangles, whose size plays the role of energy estimators in their energy-entropy argument. To do so, they were forced to introduce the hypothesis that the nearest-neighbor interaction $J(1)$ be sufficiently large. Their geometrical approach involving the introduction of triangles, and the energy estimates on their sizes, were shown, by Littin and Picco in \cite{LittinPicco2017}, to be unable to capture the whole region of polynomial decay where phase transition is known, i.e., $\alpha\in (1,2]$. Despite the limitation, the idea of introducing a more geometrical approach allowed the community to obtain some additional results, such as phase separation and the convergence of the cluster expansion \cite{Cassandro.Merola.Picco.Rozikov.14}. Using triangles, Cassandro, Orlandi, and Picco proved, in \cite{Cassandro.Picco.09}, the robustness of phase transition in the presence of a random external field with small variance in the region $\alpha\in[3-\log_23,3/2)$. It is expected that such a result holds for $\alpha\in(1,3/2)$, which remains an open problem.  

In a previous paper \cite{Bissacot-Kimura2018}, Bissacot, Endo, van Enter, Kimura, and Ruszel managed to prove phase transition without the hypothesis that $J(1)\gg 1$ for $\alpha\in (\alpha^*,2]$, where $\zeta(\alpha^*)=2$. Note that $\alpha^*> 3-\log_23$. They were also able to prove the robustness of phase transition under perturbations by a small decaying field. The current paper extends the result to the entire region of polynomial decay, where phase transition is known to occur. We also obtain sharp results regarding phase transition and external fields with polynomial decay.

In \cite{Affonso.2021}, Affonso, Bissacot, Endo, and Handa were able to adapt the notion of multiscaled contours introduced by Fr\"ohlich and Spencer to the multidimensional case. This notion of contours was further refined in \cite{Johanes}, where it was applied to the long-range random field Ising model ($d\geq3$). They managed to capture the whole region of regularity of the interaction ($\alpha>d$), improving upon previous results by Park \cite{Park.88.I,Park.88.II} ($\alpha>3d+1$) and Ginibre, Grossmann and Ruelle \cite{Ginibre.Grossmann.Ruelle.66} ($\alpha>d+1$). The key idea in their construction of the multiscaled contours is the flexibility of the distancing exponent, denoted in their work by $a$, with respect to the decaying exponent $\alpha$. It is precisely this fact that made revisiting the one-dimensional long-range Ising model appealing. Interestingly, it turns out that the parameter $a$ is not crucial to our argument. By changing the way energy estimates are done, the exact same contour definition present in \cite{Frohlich.Spencer.82} is sufficient to prove phase transition for polynomial decay $\alpha\in (1,2]$. 

It is well-known \cite{Rogers1981} that there is no phase transition if
\begin{align}\label{criterion}
    \limsup_{n\rightarrow\infty}\frac{1}{(\log n)^{\frac{1}{2}}}\sum_{r=1}^nrJ(r)<\infty.
\end{align}
In which case, one might wonder if there is phase transition for interactions slightly weaker than $J(r)=r^{-2}$. Indeed, if $J(r)=r^{-2}[\log(1+\log(1+r))]^{-1}$, \eqref{criterion} cannot ascertain the lack of long-range order. Unfortunately, the method of multiscaled contours developed by Fr\"ohlich and Spencer more than forty years ago breaks down precisely at $J(r)=r^{-2}$, i.e., their energy-entropy arguments only works provided $r\mapsto J(r)$ is decreasing and 
\begin{align*}
    \liminf_{r\rightarrow\infty}r^2J(r)>0.
\end{align*}
On the other hand, if $J$ satisfies such a condition and decays polynomially, one can prove phase transition directly by multiscaled contours. Hence, direct proof is obtained via contours for the whole region of polynomial decay, where phase transition is known. This motivates revisiting important results of the one-dimensional theory, whose proofs rely on the hypothesis that $J(1)$ is large, see \cite{Cassandro.Merola.Picco.17, Cassandro.Merola.Picco.Rozikov.14, Cassandro.Picco.09}. We also emphasize that treating $J(1)$ as a perturbative parameter is not an idea confined solely to contour arguments. For example, in \cite{Aizenman1988}, phase transitions in the corresponding percolation models and the discontinuity of magnetization were rigorously established. These findings have recently been revisited and further developed by Duminil-Copin, Garban, and Tassion in \cite{DuminilCopin2024}, which also assumed the $J(1)\gg 1$ to obtain part of the results.

\subsection{Some observations on the multiscaled contour argument}

One must give a reason as to why such a powerful method breaks down precisely when $J(r)=r^{-2}$. It is so because at the critical polynomial decay, the natural object measuring energy coincides with the natural object measuring entropy, the notion of \textit{logarithmic length}. This fact gives the illusion that in \cite{Frohlich.Spencer.82} the entropy and energy estimates are tied from the start. This is but a coincidence. 

In which case, one wonders if it is possible to prove that the magnetization is always trivial for the one-dimensional long-range ferromagnetic Ising model if $r\mapsto J(r)$ is decreasing and 
\begin{align*}
    \liminf_{r\rightarrow\infty}r^2J(r)=0.
\end{align*}

Above the critical exponent ($1<\alpha<2$), it appears natural to try to extend the notion of logarithmic length to $J(r)=r^{-\alpha}$, for example, by the second integral of $r\mapsto J(r)$. This is the case in \cite{Cassandro.05}. Given their geometric interpretation of contours as collections of triangles, $\gamma=\{T_1,...,T_m\}$, and the expectation to produce a Pirogov-Sinai theory for this class of models, they arrive at an energy estimate of the type
\begin{align}\label{italians}
    H_J(\Gamma)-H_J(\Gamma\hspace{-0.1cm}\setminus\hspace{-0.1cm}\gamma)\geq C_\alpha\sum_{k}\lvert T_k\rvert^{2-\alpha} =: C_\alpha L_\alpha(\gamma),
\end{align}
where $L_\alpha(\gamma)$ is the \emph{$\alpha$-length} of the contour $\gamma$. However, to achieve such an estimate, they need to introduce the hypothesis that $J(1)$ be sufficiently large. Unfortunately, for $1<\alpha<2$, it is impossible to obtain an estimate of the type $c_\alpha L_\alpha(\gamma)\leq \mathcal{N}(\gamma)$, such as in Lemma 2.1 in \cite{Frohlich.Spencer.82} for $\alpha=2$, where $\mathcal{N}(\gamma)$ denotes the cover size (see Subsection \ref{scalingcovers}), which is the kind of connection needed to glue entropy and energy estimates. For this reason, their energy-estimate computation becomes much more involved. On the other hand, the triangle method was appropriate to prove \cite{Cassandro.Picco.09} phase transition in the presence of a random external field of sufficiently small variance in the region $(3-\log_23,3/2)$.

\begin{comment}
In general, such a bound cannot hold without the hypothesis that $J(1)$ is large. Let $m,n\in\mathbb{N}$, consider the following collection of spin flips (see Subsection \ref{spinflips} for the identification between spin flips and $\mathbb{Z}+1/2$)
\begin{align*}
    S_{m,n}=\left\{\frac{1}{2},n+\frac{1}{2},...,(2m-1)n+\frac{1}{2}\right\}.
\end{align*}
It is straightforward to check that, no matter the way we partition $S_{m,n}$ into triangles $\gamma_{m,n}=\{T_1,...,T_m\}$, we arrive at
\begin{align*}
    L_\alpha(\gamma_{m,n})\geq mn^{2-\alpha},
\end{align*}
for some constant $c_\alpha>0$. On the other hand, since the system is ferromagnetic, it holds that
\begin{align*}
    H_J(\gamma)\geq H_J(\Gamma)-H_J(\Gamma\hspace{-0.1cm}\setminus\hspace{-0.1cm}\gamma).
\end{align*}
It is cumbersome, but one can check that
\begin{align*}
    \lim_{m\rightarrow\infty}\frac{H_J(\gamma_{m,n})}{mn^{2-\alpha}}=0,
\end{align*}
for any $n\in\mathbb{N}$.
\end{comment}

\subsection{Comparison with the multidimensional case}

In the multidimensional case, such as in \cite{Affonso.2021, Johanes, maia2024phase}, disregarding for a moment the fact that boundaries of configurations are given by incorrect points instead of spin flips, to prove the existence of phase transition, it is sufficient to obtain a bound of the type
\begin{align*}
    H_J(\Gamma)-H_J(\Gamma\hspace{-0.1cm}\setminus\hspace{-0.1cm}\gamma)\geq C_{M,a}\lvert\gamma\rvert,
\end{align*}
for some $C_{M,a}>0$, where $(M,a)$ are the parameters used to decompose the boundary of configurations into contours. This is the case because there is phase transition for the nearest-neighbor model in $d\geq 2$. It is only when perturbing the system by an external decaying field that contributions of the type 
\begin{align*}
    F_{\mathrm{I}_-(\gamma)}\coloneqq\sum_{\substack{x\in\mathrm{I}_-(\gamma)\\y\in \mathrm{I}_-(\gamma)^c}}J_{xy}=\frac{1}{2}H_J(\gamma),
\end{align*}
become necessary to prove the existence of phase transition, i.e., a bound of the type
\begin{align*}
    H_J(\Gamma)-H_J(\Gamma\hspace{-0.1cm}\setminus\hspace{-0.1cm}\gamma)\geq C_{M,a}(\lvert\gamma\rvert+F_{\mathrm{I}_-(\gamma)}),
\end{align*}
is needed. When dealing with the one-dimensional case, since there is no long-range order for the nearest-neighbors interaction, the quantity of spin flips $|\gamma|$ is not sufficient to give us a phase transition. One certainly needs an estimate of the second type, i.e., something like
\begin{align*}
    H_J(\Gamma)-H_J(\Gamma\hspace{-0.1cm}\setminus\hspace{-0.1cm}\gamma)\geq C_{M,a}H_J(\gamma).
\end{align*}
This is the case, for example, in \cite{Frohlich.Spencer.82}, although such an estimate is obtained somewhat indirectly.

\subsection{Outline of the proof: a standard Peierls' argument}\label{outline}

We follow the entropy-energy argument used by Fr\"ohlich and Spencer to prove phase transition for polynomial decay $\alpha=2$. However, our primary distinction lies in the method used to derive the energy estimates. In which case, we attempt to provide a revisiting as elementary as possible to the techniques present in \cite{Frohlich.Spencer.82}. 

We begin by rewriting configurations with homogeneous boundary conditions in terms of even collections of spin flips, i.e., we introduce a canonical application $\Omega^+\ni\sigma\mapsto\partial\sigma\in\Omega^*$. This canonical application is a bijection, which allows us to write without any ambiguity
\begin{align*}
    H_J(\sigma)=H_J(\partial\sigma).
\end{align*}
All details concerning such objects are given in Subsection \ref{spinflips}. These neutral collections of spin flips are to be partitioned in terms of two parameters $(M,a)$ (see Section \ref{contours}). Such a partition $\Gamma=\Gamma(\sigma,M,a)$ shall be called an \emph{$(M,a)$-partition} and its elements $\gamma\in\Omega^*$ irreducible contours. These partitions satisfy an ordering property, which allows us to define maximal elements, called {\it external contours}. We note that any contour $\gamma\in\Gamma$ shall be the image of a configuration $\sigma\in\Omega^+$. If $\Gamma\Subset\Omega^*$ and $\cup_{\gamma\in\Gamma}\gamma\in\Omega^*$, we shall write
\begin{align*}
    H_J(\Gamma)=H_J\left(\cup_{\gamma\in\Gamma}\gamma\right).
\end{align*}
Finally, in this language, if a configuration $\sigma\in\Omega^+$ satisfies $\sigma_x=-1$, then there is a contour $\gamma\in\Gamma(\sigma,M,a)$ such that $x\in V(\gamma)$, the {\it volume} of $\gamma$. All the precise definitions are given in the next subsections.

In general, let $\gamma\in\Gamma$ be an external contour, we shall obtain an energy bound of the type
\begin{align*}
    H_J(\Gamma)-H_J(\Gamma\hspace{-0.1cm}\setminus\hspace{-0.1cm}\gamma)\geq F(\gamma), 
\end{align*}
where $F(\gamma)$ is a function that associates to each $\gamma$ a real number which depends on the triple $(J, M, a)$. In our particular case, we get 
\begin{align}
    F(\gamma)=C_1H_J(\gamma),
\end{align}
where $C_1\coloneqq C_1(J,M,a)>0$ is a constant (see Subsection \ref{subsecenergyestimates}). It is precisely this fact that makes the proof work. This observation had already been done by Picco and Littin in \cite{LittinPicco2017}, although their work was focused in to show the limitations of the geometric approach with the triangles defined in \cite{Cassandro.05}. 

When dealing with the entropy part of the argument, we need a function $\mathcal{N}:\Omega^*\rightarrow\mathbb{N}_0 :=\mathbb{N}\cup \{0\}$ such that
\begin{align*}
    \left\lvert\left\{\gamma\in\Omega^*:\mathcal{N}(\gamma)\leq R, 0\in V(\gamma)\right\}\right\rvert\leq e^{C_2R},
\end{align*}
for some $C_2>0$ and all $R\in\mathbb{N}_0$. This function is the cover-size function present in \cite{Frohlich.Spencer.82}, which we revisit in detail in Subsection \ref{scalingcovers}.

Then, the energy-entropy argument works if, \emph{for some choice of $(M,a)$}, there exists a constant $C_3\coloneqq C_3(J,M,a)>0$ such that
\begin{align*}
    F(\gamma)\geq C_3\mathcal{N}(\gamma).
\end{align*}
This is the object of Subsections \ref{geometricway} and \ref{entropysizes}. If all these ingredients are available, we may prove phase transition via a Peierls' argument:
\begin{align*}
    \left\langle\frac{1}{2}(1-\sigma_0)\right\rangle^+_{J,\beta,L}&=\frac{\sum_\Gamma\chi_{\{\sigma_0=-1\}}e^{-\beta H_J(\Gamma)}}{\sum_\Gamma e^{-\beta H_J(\Gamma)}}\leq \frac{\sum_{\gamma:0\in V(\gamma)}\sum_{\Gamma\ni\gamma}e^{-\beta F(\gamma)}e^{-\beta H_J(\Gamma\setminus\gamma)}}{\sum_\Gamma e^{-\beta H_J(\Gamma)}}\\
    &\leq \sum_{\gamma:0\in V(\gamma)}e^{-\beta F(\gamma)}=\sum_{R=0}^\infty\sum_{\substack{\gamma:0\in V(\gamma)\\\mathcal{N}(\gamma)=R}} e^{-\beta F(\gamma)}\leq\sum_{R=0}^\infty\sum_{\substack{\gamma:0\in V(\gamma)\\\mathcal{N}(\gamma)=R}}e^{-\beta C_3 \mathcal{N}(\gamma)}\\
    &=\sum_{R=0}^\infty e^{-\beta C_33R}\lvert\{\gamma\in\Omega^*:\mathcal{N}(\gamma)=R,0\in V(\gamma)\}\rvert\\
    &\leq \sum_{R=0}^\infty e^{C_2R-\beta C_3R}\xrightarrow{\beta\rightarrow\infty}0.
\end{align*}
This means that 
\begin{align*}
    \lim_{\beta\rightarrow\infty}\left\langle\sigma_0\right\rangle^+_{J,\beta}=+1.
\end{align*}
Via the global spin-flip symmetry, we obtain that
\begin{align*}
    \lim_{\beta\rightarrow\infty}\langle\sigma_0\rangle^-_{J,\beta}=-1.
\end{align*}
This fact proves the existence of multiple Gibbs measures at low temperatures, one of the usual definitions of \textit{phase transition}, see \cite{georgii.gibbs.measures}. As a further application of irreducible contours, we show that such phase transition is stable under perturbation by a decaying magnetic field, see \ref{decaying_field}.

The paper is divided as follows. In Section \ref{prelim}, we provide the definitions regarding collections of spin flips. In Section \ref{contours}, we define the appropriate contour and establish appropriate energy and entropy estimates, which yield the existence of phase transition. In Section \ref{stability}, we prove the stability of phase transition under a small decaying field by a contour argument.

\section{Preliminaries}\label{prelim}

As usual, we denote the configuration space of a one-dimensional Ising model by $\Omega\coloneqq\{-1,1\}^\mathbb{Z}$,
endowed with the product topology. The set of configurations with $+$-boundary condition shall be denoted by
\begin{align}
    \Omega^+\coloneqq\{\sigma\in\Omega:\exists\Lambda\Subset\mathbb{Z}:\sigma_x=+1,\forall x\in\Lambda^c\},
\end{align}
where $A\Subset B$ denotes that $A$ is a finite subset of $B$. Given $\Lambda\Subset\mathbb{Z}$, we also define
\begin{align}
    \Omega_\Lambda^+\coloneqq\{\sigma\in\Omega^+:\sigma_x=+1, \forall x\in\Lambda^c\}.
\end{align}
In case $\Lambda=[-L,L]$, we shall write $\Omega^+_L$ instead.

By changing in a slight manner the usual way of writing the Hamiltonian of an Ising model, we shall not need to introduce explicitly the finite domains or boxes, over which one usually defines a family of compatible Hamiltonians. More precisely, we define $H_J:\Omega^+\rightarrow\mathbb{R}$ by
\begin{align}
    H_J(\sigma)=\sum_{x<y}J(\lvert x-y\rvert)(1-\sigma_x\sigma_y),
\end{align}
where $J:\mathbb{N}_0\rightarrow\mathbb{R}_{\geq0}$ is the non-negative function defined by $J(r)=r^{-\alpha}$ ($1<\alpha\leq 2$) when $r>0$ and $J(0)=0$. We are, therefore, dealing with a model defined by a translation-invariant {\it regular interaction}, see \cite{Bovier.06, FV-Book, georgii.gibbs.measures} for general results for regular models. We shall often write $J_{xy}$ instead of $J(\lvert x-y\rvert)$.

We introduce the usual finite-volume Gibbs measures at inverse temperature $\beta>0$, which we shall denote by $\langle\cdot\rangle_{J,\beta,L}^+$:
\begin{align}
    \langle f\rangle_{J,\beta,L}^+=\frac{1}{Z_{J,\beta,L}^+}\sum_{\sigma\in\Omega_L^+}f(\sigma)e^{-\beta H_J(\sigma)},
\end{align}
where $f:\Omega^+_L\rightarrow\mathbb{R}$ and
\begin{align}
    Z_{J,\beta,L}^+=\sum_{\sigma\in\Omega_L^+}e^{-\beta H_J(\sigma)},
\end{align}
is the {\it partition function}.

\subsection{Configurations as collections of spin flips}\label{spinflips}

We note that there is a one-to-one correspondence between configurations with homogeneous boundary conditions and even collections of spin flips, which are elements of the dual lattice $\mathbb{Z}^*$. As in \cite{Frohlich.Spencer.82}, we shall canonically identify it with $\mathbb{Z}+1/2$. Instead of working with the usual configuration space, it is more convenient to deal with such collections of spin flips in the one-dimensional case. We define
\begin{align}
    \Omega^*\coloneqq \{\gamma\in\mathcal{P}_f(\mathbb{Z}^*):\lvert\gamma\rvert\textrm{ is even}\},
\end{align}
where $\mathcal{P}_f(A)$ denotes the set of finite subsets of a set $A$. Define $\partial:\Omega^+\rightarrow\Omega^*$ by
\begin{align}
    \partial\sigma\coloneqq\{\{x,x+1\}\subset \mathbb{Z}:\sigma_x\sigma_{x+1}=-1\},
\end{align}
and identify $\{x,x+1\}\subset \mathbb{Z}$ with $x+1/2 \in \Omega^*$. We define the diameter $\mathrm{diam}(\gamma)$ of $\gamma\in\Omega^*$ as the diameter of $\gamma$ as a subset of $\mathbb{Z}+1/2$. It is evident that, if $\sigma$ and $\sigma'$ have the same boundary, i.e., if $\partial\sigma=\partial\sigma'$, then $\sigma=\sigma'$. We shall denote its inverse by $\sigma:\Omega^*\rightarrow\Omega^+$, i.e.,
\begin{align}
    \Omega^*\ni\gamma\mapsto\sigma(\gamma)\in\Omega^+.
\end{align}

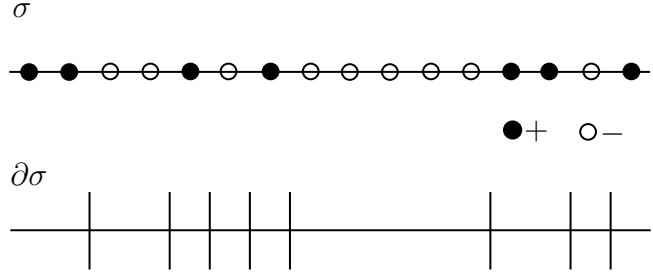
\begin{figure}[hbt!]
    \centering
    \tikzset{every picture/.style={line width=0.75pt}} %set default line width to 0.75pt        

\begin{tikzpicture}[x=0.75pt,y=0.75pt,yscale=-1,xscale=1]
%uncomment if require: \path (0,300); %set diagram left start at 0, and has height of 300

%Shape: Circle [id:dp8736810108161679] 
\draw  [fill={rgb, 255:red, 0; green, 0; blue, 0 }  ,fill opacity=1 ] (25.93,90.4) .. controls (25.93,88.23) and (27.69,86.47) .. (29.87,86.47) .. controls (32.04,86.47) and (33.8,88.23) .. (33.8,90.4) .. controls (33.8,92.57) and (32.04,94.33) .. (29.87,94.33) .. controls (27.69,94.33) and (25.93,92.57) .. (25.93,90.4) -- cycle ;
%Shape: Circle [id:dp5011816316365643] 
\draw  [fill={rgb, 255:red, 0; green, 0; blue, 0 }  ,fill opacity=1 ] (45.93,90.4) .. controls (45.93,88.23) and (47.69,86.47) .. (49.87,86.47) .. controls (52.04,86.47) and (53.8,88.23) .. (53.8,90.4) .. controls (53.8,92.57) and (52.04,94.33) .. (49.87,94.33) .. controls (47.69,94.33) and (45.93,92.57) .. (45.93,90.4) -- cycle ;
%Shape: Circle [id:dp8071858036747137] 
\draw   (66.43,90.11) .. controls (66.43,87.94) and (68.19,86.18) .. (70.36,86.18) .. controls (72.53,86.18) and (74.29,87.94) .. (74.29,90.11) .. controls (74.29,92.29) and (72.53,94.05) .. (70.36,94.05) .. controls (68.19,94.05) and (66.43,92.29) .. (66.43,90.11) -- cycle ;
%Shape: Circle [id:dp7590624561535098] 
\draw   (86.43,90.11) .. controls (86.43,87.94) and (88.19,86.18) .. (90.36,86.18) .. controls (92.53,86.18) and (94.29,87.94) .. (94.29,90.11) .. controls (94.29,92.29) and (92.53,94.05) .. (90.36,94.05) .. controls (88.19,94.05) and (86.43,92.29) .. (86.43,90.11) -- cycle ;
%Shape: Circle [id:dp005287810762321987] 
\draw  [fill={rgb, 255:red, 0; green, 0; blue, 0 }  ,fill opacity=1 ] (106.43,90.11) .. controls (106.43,87.94) and (108.19,86.18) .. (110.36,86.18) .. controls (112.53,86.18) and (114.29,87.94) .. (114.29,90.11) .. controls (114.29,92.29) and (112.53,94.05) .. (110.36,94.05) .. controls (108.19,94.05) and (106.43,92.29) .. (106.43,90.11) -- cycle ;
%Shape: Circle [id:dp04469263483817887] 
\draw   (125.43,90.11) .. controls (125.43,87.94) and (127.19,86.18) .. (129.36,86.18) .. controls (131.53,86.18) and (133.29,87.94) .. (133.29,90.11) .. controls (133.29,92.29) and (131.53,94.05) .. (129.36,94.05) .. controls (127.19,94.05) and (125.43,92.29) .. (125.43,90.11) -- cycle ;
%Shape: Circle [id:dp7981432948478282] 
\draw  [color={rgb, 255:red, 0; green, 0; blue, 0 }  ,draw opacity=1 ][fill={rgb, 255:red, 0; green, 0; blue, 0 }  ,fill opacity=1 ] (146.43,90.11) .. controls (146.43,87.94) and (148.19,86.18) .. (150.36,86.18) .. controls (152.53,86.18) and (154.29,87.94) .. (154.29,90.11) .. controls (154.29,92.29) and (152.53,94.05) .. (150.36,94.05) .. controls (148.19,94.05) and (146.43,92.29) .. (146.43,90.11) -- cycle ;
%Shape: Circle [id:dp3383955273809407] 
\draw   (166.43,90.11) .. controls (166.43,87.94) and (168.19,86.18) .. (170.36,86.18) .. controls (172.53,86.18) and (174.29,87.94) .. (174.29,90.11) .. controls (174.29,92.29) and (172.53,94.05) .. (170.36,94.05) .. controls (168.19,94.05) and (166.43,92.29) .. (166.43,90.11) -- cycle ;
%Shape: Circle [id:dp12307970329942952] 
\draw   (185.93,90.4) .. controls (185.93,88.23) and (187.69,86.47) .. (189.87,86.47) .. controls (192.04,86.47) and (193.8,88.23) .. (193.8,90.4) .. controls (193.8,92.57) and (192.04,94.33) .. (189.87,94.33) .. controls (187.69,94.33) and (185.93,92.57) .. (185.93,90.4) -- cycle ;
%Shape: Circle [id:dp5969728853098379] 
\draw   (205.93,90.4) .. controls (205.93,88.23) and (207.69,86.47) .. (209.87,86.47) .. controls (212.04,86.47) and (213.8,88.23) .. (213.8,90.4) .. controls (213.8,92.57) and (212.04,94.33) .. (209.87,94.33) .. controls (207.69,94.33) and (205.93,92.57) .. (205.93,90.4) -- cycle ;
%Shape: Circle [id:dp2528428770221878] 
\draw   (226.43,90.11) .. controls (226.43,87.94) and (228.19,86.18) .. (230.36,86.18) .. controls (232.53,86.18) and (234.29,87.94) .. (234.29,90.11) .. controls (234.29,92.29) and (232.53,94.05) .. (230.36,94.05) .. controls (228.19,94.05) and (226.43,92.29) .. (226.43,90.11) -- cycle ;
%Shape: Circle [id:dp42073031694951446] 
\draw   (246.43,90.11) .. controls (246.43,87.94) and (248.19,86.18) .. (250.36,86.18) .. controls (252.53,86.18) and (254.29,87.94) .. (254.29,90.11) .. controls (254.29,92.29) and (252.53,94.05) .. (250.36,94.05) .. controls (248.19,94.05) and (246.43,92.29) .. (246.43,90.11) -- cycle ;
%Shape: Circle [id:dp5993052359602875] 
\draw  [fill={rgb, 255:red, 0; green, 0; blue, 0 }  ,fill opacity=1 ] (266.43,90.11) .. controls (266.43,87.94) and (268.19,86.18) .. (270.36,86.18) .. controls (272.53,86.18) and (274.29,87.94) .. (274.29,90.11) .. controls (274.29,92.29) and (272.53,94.05) .. (270.36,94.05) .. controls (268.19,94.05) and (266.43,92.29) .. (266.43,90.11) -- cycle ;
%Shape: Circle [id:dp5266054217191796] 
\draw  [color={rgb, 255:red, 0; green, 0; blue, 0 }  ,draw opacity=1 ][fill={rgb, 255:red, 0; green, 0; blue, 0 }  ,fill opacity=1 ] (285.43,90.11) .. controls (285.43,87.94) and (287.19,86.18) .. (289.36,86.18) .. controls (291.53,86.18) and (293.29,87.94) .. (293.29,90.11) .. controls (293.29,92.29) and (291.53,94.05) .. (289.36,94.05) .. controls (287.19,94.05) and (285.43,92.29) .. (285.43,90.11) -- cycle ;
%Shape: Circle [id:dp6423769871786578] 
\draw   (306.43,90.11) .. controls (306.43,87.94) and (308.19,86.18) .. (310.36,86.18) .. controls (312.53,86.18) and (314.29,87.94) .. (314.29,90.11) .. controls (314.29,92.29) and (312.53,94.05) .. (310.36,94.05) .. controls (308.19,94.05) and (306.43,92.29) .. (306.43,90.11) -- cycle ;
%Shape: Circle [id:dp07488825246170827] 
\draw  [color={rgb, 255:red, 0; green, 0; blue, 0 }  ,draw opacity=1 ][fill={rgb, 255:red, 0; green, 0; blue, 0 }  ,fill opacity=1 ] (326.43,90.11) .. controls (326.43,87.94) and (328.19,86.18) .. (330.36,86.18) .. controls (332.53,86.18) and (334.29,87.94) .. (334.29,90.11) .. controls (334.29,92.29) and (332.53,94.05) .. (330.36,94.05) .. controls (328.19,94.05) and (326.43,92.29) .. (326.43,90.11) -- cycle ;
%Straight Lines [id:da4908160773201298] 
\draw    (20,90.33) -- (339.67,90.33) ;
%Shape: Circle [id:dp44020537525239956] 
\draw  [fill={rgb, 255:red, 0; green, 0; blue, 0 }  ,fill opacity=1 ] (266.8,119.57) .. controls (266.8,117.19) and (268.73,115.27) .. (271.1,115.27) .. controls (273.47,115.27) and (275.4,117.19) .. (275.4,119.57) .. controls (275.4,121.94) and (273.47,123.87) .. (271.1,123.87) .. controls (268.73,123.87) and (266.8,121.94) .. (266.8,119.57) -- cycle ;
%Shape: Circle [id:dp032808992873545106] 
\draw   (304.83,120.51) .. controls (304.83,118.34) and (306.59,116.58) .. (308.76,116.58) .. controls (310.93,116.58) and (312.69,118.34) .. (312.69,120.51) .. controls (312.69,122.69) and (310.93,124.45) .. (308.76,124.45) .. controls (306.59,124.45) and (304.83,122.69) .. (304.83,120.51) -- cycle ;
%Straight Lines [id:da5897828467947028] 
\draw    (20.5,170) -- (340,170) ;
%Straight Lines [id:da13952398580785164] 
\draw    (60,150.75) -- (60,189.75) ;
%Straight Lines [id:da274959684135359] 
\draw    (100,150.75) -- (100,189.75) ;
%Straight Lines [id:da4499221737617596] 
\draw    (120,150.75) -- (120,189.75) ;
%Straight Lines [id:da5094368925597574] 
\draw    (140,150.75) -- (140,189.75) ;
%Straight Lines [id:da3486069523746198] 
\draw    (160,150.75) -- (160,189.75) ;
%Straight Lines [id:da30850291027995524] 
\draw    (260,150.75) -- (260,189.75) ;
%Straight Lines [id:da7521846177262754] 
\draw    (300,150.75) -- (300,189.75) ;
%Straight Lines [id:da43322210963999574] 
\draw    (320,150.75) -- (320,189.75) ;

% Text Node
\draw (275.4,113.4) node [anchor=north west][inner sep=0.75pt]    {$+$};
% Text Node
\draw (313.2,114.8) node [anchor=north west][inner sep=0.75pt]    {$-$};
% Text Node
\draw (20,54.4) node [anchor=north west][inner sep=0.75pt]    {$\sigma $};
% Text Node
\draw (19,134.4) node [anchor=north west][inner sep=0.75pt]    {$\partial \sigma $};

\end{tikzpicture}
    \caption{Example of the correspondence between $\sigma\in\Omega^+$ and $\partial\sigma\in\Omega^*$.}
\end{figure}

The evenness of the boundary of a configuration with homogeneous boundary conditions is a consequence of the neutrality of the sum of all spin flips associated to it. If one were to work with the Potts model instead, such parity condition would be substituted by another one involving neutral sums over the finite group $\mathbb{Z}_q$. It just so happens that this arithmetic condition in the case of $q=2$ is a parity condition.

We define the Hamiltonian in the dual lattice by $H_J:\Omega^*\rightarrow\mathbb{R}$
\begin{align}
    H_J(\gamma)=2\sum_{x<y}J_{xy}\chi_{\gamma}(x,y),
\end{align}
where 
\begin{align}\label{chi}
    \chi_{\gamma}(x,y)=\begin{cases}
        0&\textrm{ if }\lvert [x,y]\cap\gamma\rvert\textrm{ is even,}\\
        1 &\textrm{ if }\lvert [x,y]\cap\gamma\rvert\textrm{ is odd.}
    \end{cases}
\end{align}
We note that the intervals $[x,y]$ in \eqref{chi} are \emph{real}. Finally, using the same notation is justified by the fact that $H_J(\partial\sigma)=H_J(\sigma)$, for all $\sigma\in\Omega^+$. Since we shall denote contours by $\gamma$, as a matter of easier reading, we write explicitly
\begin{align*}
    H_J(\gamma)=H_J(\sigma(\gamma)).
\end{align*}
Let $\gamma=\{b_1,...,b_{2m}\}\in\Omega^*$, with $b_i < b_j$ when $i<j$. We will always assume that each $\gamma \in \Omega^*$ is ordered according to the indices. We define
\begin{enumerate}
    \item its \textit{volume} by $V(\gamma)=[b_1,b_{2m}]\cap\mathbb{Z}$,
    \item its $-$\textit{-interior} by $\mathrm{I}_-(\gamma)=\sqcup_{i=1}^m\left([b_{2i-1},b_{2i}]\cap\mathbb{Z}\right)$,
    \item and its $+$\textit{-interior} by $\mathrm{I}_+(\gamma)=\sqcup_{i=1}^{m-1}\left([b_{2i},b_{2i+1}]\cap\mathbb{Z}\right)$.
\end{enumerate}
We have that $\sigma(\gamma)_x=-1$ if, and only if, $x\in \mathrm{I}_-(\gamma)$. Naturally, $V(\gamma)=\mathrm{I}_-(\gamma)\sqcup \mathrm{I}_+(\gamma)$. Finally, we note that
\begin{align}
    H_J(\sigma)=2F_{\mathrm{I}_-(\partial\sigma)}=\hspace{-0.3cm}\sum_{\substack{x\in \mathrm{I}_-(\partial\sigma)\\y\in \mathrm{I}_-(\partial\sigma)^c}}\hspace{-0.3cm}2J_{xy}=\hspace{-0.3cm}\sum_{\substack{x\in \mathrm{I}_-(\partial\sigma)\\y\in \mathrm{I}_+(\partial\sigma)}}\hspace{-0.2cm}2J_{xy}+\hspace{-0.3cm}\sum_{\substack{x\in \mathrm{I}_-(\partial\sigma)\\y\in V(\partial\sigma)^c}}\hspace{-0.2cm}2J_{xy}.
\end{align}

\subsubsection{Collections of collections of spin flips}

If $\Gamma\Subset\Omega^*$ is a finite collection of even collections of spin flips such that $\cup_{\gamma\in\Gamma}\gamma\in\Omega^*$, we define its \textit{volume} by
\begin{align*}
    V(\Gamma)=\bigcup_{\gamma\in\Gamma}V(\gamma).
\end{align*}
We define its \textit{negative part} by
\begin{align*}
    N(\Gamma)\coloneqq\left\{x\in\mathbb{Z}:\sigma(\Gamma)_x=-1\right\},
\end{align*}
where $\sigma(\Gamma)\in\Omega^+$ is the configuration associated to the even collection of spin flips $\cup_{\gamma\in\Gamma}\gamma$. In which case, we shall write
\begin{align}
    H_J(\Gamma)\coloneqq H_J\left(\cup_{\gamma\in\Gamma}\gamma\right)=H_J(\sigma(\Gamma)).
\end{align}
Note that $N(\Gamma)\subset V(\Gamma)$, in which case, we define its \textit{positive part} by $P(\Gamma)\coloneqq V(\Gamma)\setminus N(\Gamma)$.

In general, it is not at all evident how one may relate $N(\Gamma)$ to the collection of $\mathrm{I}_\pm(\gamma)$, for $\gamma\in\Gamma$. On the other hand, this may be done reasonably if the elements of $\Gamma$ satisfy an ordering property, which is true for the contour decomposition used in this work. Let $\gamma=\{a_1,...,a_m\}$ and $\gamma'=\{b_1,...,b_n\}$ be collections of spin flips. We shall write $\gamma<\gamma'$ if there exists $1\leq j\leq n-1$, such that
\begin{align}
    \gamma\subset (b_j,b_{j+1}).
\end{align}
Let $\Gamma\Subset\Omega^*$. We shall say that $\Gamma$ is \textit{well-ordered} if
\begin{align*}
    (\gamma\nleq \gamma')\wedge(\gamma'\nleq\gamma)\Rightarrow V(\gamma)\cap V(\gamma')=\emptyset.
\end{align*}

It is also important to define the notion of external elements of a well-ordered collection of even collections of spin flips. If $\Gamma$ is well-ordered, we define
\begin{align*}
    \mathcal{E}_{\mathrm{ext}}(\Gamma)\coloneqq \{\gamma\in\Gamma:\gamma\textrm{ is a maximal element of }\Gamma\}.
\end{align*}
In which case, we say that $\gamma\in\mathcal{E}_{\mathrm{ext}}(\Gamma)$ is \textit{external}. Note that
\begin{align*}
    V(\Gamma)=\hspace{-0.3cm}\bigsqcup_{\gamma\in\mathcal{E}_{\mathrm{ext}}(\Gamma)}\hspace{-0.2cm}V(\gamma).
\end{align*}
It is also easy to see that
\begin{align*}
    N\left(\mathcal{E}_{\mathrm{ext}}(\Gamma)\right)=\hspace{-0.3cm}\bigsqcup_{\gamma\in\mathcal{E}_{\mathrm{ext}}(\Gamma)}\hspace{-0.3cm}\mathrm{I}_-(\gamma).
\end{align*}

\begin{remark}
    Note that if $\Gamma$ is well-ordered, then $\cup_{\gamma\in\Gamma}\gamma\in\Omega^*$. Furthermore, if $\Gamma$ is well-ordered, that is the case for any subset of $\Gamma$.
\end{remark}

\begin{figure}[hbt!]
    \centering
    \input{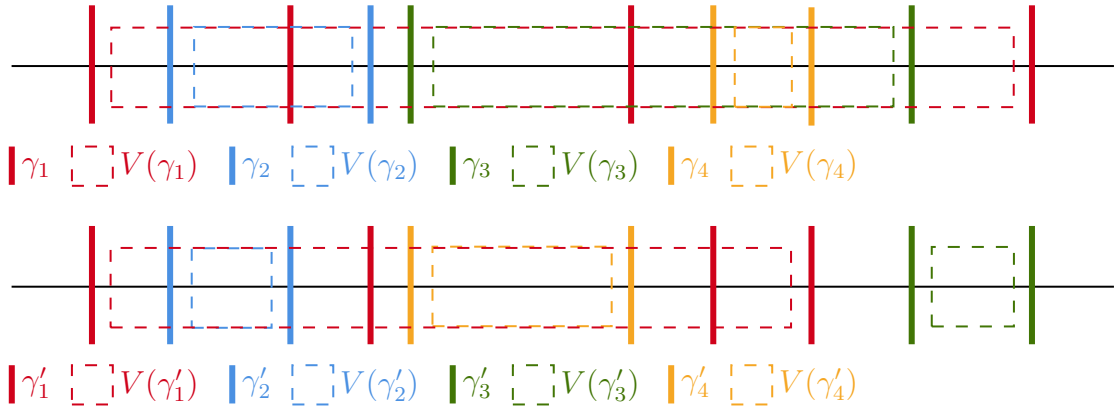}
    \caption{Let $\Gamma=\{\gamma_1,\gamma_2,\gamma_3,\gamma_4\}$. Then, $\Gamma$ is not well-ordered. On the other hand, $\Gamma'=\{\gamma'_1,\gamma'_2,\gamma'_3,\gamma'_4\}$ is well-ordered and $\mathcal{E}_{\mathrm{ext}}(\Gamma')=\{\gamma'_1,\gamma'_3\}$. While $\gamma_1\cup\gamma_2\cup\gamma_3\cup\gamma_4=\gamma'_1\cup\gamma'_2\cup\gamma'_3\cup\gamma'_4$, we have that $V(\Gamma)\neq V(\Gamma')$. Nevertheless, if $\cup_{\gamma\in\Gamma}\gamma=\cup_{\gamma'\in\Gamma'}\gamma'$, it is always true that $N(\Gamma)=N(\Gamma')$.}
\end{figure}

Let $\gamma\in\mathcal{E}_{\mathrm{ext}}(\Gamma)$. Define
\begin{align}
    \Gamma(\gamma)\coloneqq\{\gamma'\in\Gamma:\gamma'\leq \gamma\}.
\end{align}
Similarly, define
\begin{align}
    &\Gamma_\omega(\gamma)\coloneqq\{\gamma'\in\Gamma:\gamma'\subset \mathrm{I}_\omega(\gamma)\},
\end{align}
where $\omega\in\{-,+\}$. Note that $\Gamma(\gamma)=\{\gamma\}\sqcup\Gamma_+(\gamma)\sqcup\Gamma_-(\gamma)$.

\begin{lemma}\label{posneg}
    For a given collection $\Gamma$ of well-ordered spin-flips and $\gamma \in \mathcal{E}_{\mathrm{ext}}(\Gamma)$, it holds that
    \begin{enumerate}
        \item $N(\Gamma)=N(\Gamma\hspace{-0.1cm}\setminus\hspace{-0.1cm}\Gamma(\gamma))\sqcup N(\Gamma(\gamma))$,
        \item $N(\Gamma\hspace{-0.1cm}\setminus\hspace{-0.1cm}\gamma)=N(\Gamma\hspace{-0.1cm}\setminus\hspace{-0.1cm}\Gamma(\gamma))\sqcup N(\Gamma_+(\gamma))\sqcup N(\Gamma_-(\gamma))$, 
        \item $N(\Gamma(\gamma))=N(\Gamma_+(\gamma))\sqcup[\mathrm{I}_-(\gamma)\hspace{-0.1cm}\setminus\hspace{-0.1cm} N(\Gamma_-(\gamma))]$. 
    \end{enumerate}
\end{lemma}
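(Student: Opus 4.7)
The strategy is to reduce all three claims to a single parity-counting principle: for any $\Gamma' \Subset \Omega^*$ with $\cup_{\gamma' \in \Gamma'} \gamma' \in \Omega^*$, the value $\sigma(\Gamma')_x$ is obtained from the $+$-boundary condition by flipping once for each spin flip in $\cup_{\gamma' \in \Gamma'} \gamma'$ lying strictly to the left of $x$. Because $\Gamma$ is well-ordered and $\gamma$ is external, every $\gamma'' \in \Gamma \setminus \Gamma(\gamma)$ satisfies $V(\gamma'') \cap V(\gamma) = \emptyset$, so its spin flips lie entirely on one side of $V(\gamma)$ viewed as a real interval; conversely, elements of $\Gamma(\gamma)$ are confined to $V(\gamma)$, and elements of $\Gamma_\pm(\gamma)$ are confined to the corresponding components of $\mathrm{I}_\pm(\gamma)$. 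Since every element of $\Omega^*$ has even cardinality, any such ``confined'' collection contributes an even parity to every site outside its confinement region, and therefore does not flip the local sign there.

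Claim (1) then follows immediately: on $V(\gamma)$ the spin flips of $\Gamma \setminus \Gamma(\gamma)$ contribute an even count, giving $\sigma(\Gamma)_x = \sigma(\Gamma(\gamma))_x$, while on $V(\gamma)^c$ the spin flips of $\Gamma(\gamma)$ contribute an even count, giving $\sigma(\Gamma)_x = \sigma(\Gamma \setminus \Gamma(\gamma))_x$; the induced decomposition of $N(\Gamma)$ is disjoint because $N(\Gamma(\gamma)) \subset V(\gamma)$ and $N(\Gamma \setminus \Gamma(\gamma)) \subset V(\gamma)^c$ by the same localization. Claim (2) is the analogous statement after removing the external contour: the spin flips of $\Gamma_-(\gamma)$ live inside $\mathrm{I}_-(\gamma)$ and those of $\Gamma_+(\gamma)$ inside $\mathrm{I}_+(\gamma)$, and since $V(\gamma) = \mathrm{I}_-(\gamma) \sqcup \mathrm{I}_+(\gamma)$, the same parity principle isolates exactly one summand per region.

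For claim (3), the point is to track the contribution of the $2m$ spin flips of $\gamma$ itself. Writing $\gamma = \{b_1, \ldots, b_{2m}\}$ in increasing order, for $x \in [b_{2i}, b_{2i+1}] \subset \mathrm{I}_+(\gamma)$ the number of spin flips of $\gamma$ strictly less than $x$ is $2i$, which is even, so $\sigma(\Gamma(\gamma))_x = \sigma(\Gamma_+(\gamma))_x$ and hence $N(\Gamma(\gamma)) \cap \mathrm{I}_+(\gamma) = N(\Gamma_+(\gamma))$. For $x \in [b_{2i-1}, b_{2i}] \subset \mathrm{I}_-(\gamma)$ the analogous count is $2i-1$, odd, so the sign is reversed: $\sigma(\Gamma(\gamma))_x = -\sigma(\Gamma_-(\gamma))_x$, and therefore $N(\Gamma(\gamma)) \cap \mathrm{I}_-(\gamma) = \mathrm{I}_-(\gamma) \setminus N(\Gamma_-(\gamma))$. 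The main subtlety lies precisely in this sign flip on $\mathrm{I}_-(\gamma)$, which is what produces the complementary set in the statement and is the genuinely nontrivial content of the lemma; the disjointness of the union on the right is automatic from the confinements $N(\Gamma_\pm(\gamma)) \subset \mathrm{I}_\pm(\gamma)$ and $\mathrm{I}_-(\gamma) \cap \mathrm{I}_+(\gamma) = \emptyset$.
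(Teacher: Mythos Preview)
Your proof is correct and rests on the same idea as the paper's: items 1--3 all reduce to the observation that removing $\gamma$ flips the spins in $\mathrm{I}_-(\gamma)$ and leaves those in $\mathrm{I}_-(\gamma)^c$ unchanged, which is exactly your parity count. The only cosmetic difference is that for items 1 and 2 the paper invokes the structural decompositions $N(\Gamma)=\bigsqcup_{\bar\gamma\in\mathcal{E}_{\mathrm{ext}}(\Gamma)}N(\Gamma(\bar\gamma))$ and $\mathcal{E}_{\mathrm{ext}}(\Gamma\setminus\gamma)=(\mathcal{E}_{\mathrm{ext}}(\Gamma)\setminus\gamma)\sqcup\mathcal{E}_{\mathrm{ext}}(\Gamma_+(\gamma))\sqcup\mathcal{E}_{\mathrm{ext}}(\Gamma_-(\gamma))$, whereas you bypass these and go directly to the parity-counting principle that ultimately justifies them; your argument is therefore slightly more self-contained, while the paper's is slightly more structural, but the content is the same.
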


\begin{proof}
    If $\Gamma$ is well-ordered, it holds that $N(\Gamma)=\bigsqcup_{\bar{\gamma}\in\mathcal{E}_{\textrm{ext}}(\Gamma)}N(\Gamma(\bar{\gamma}))$. In which case, item 1 is straightforward. Item 2 is a consequence of $\mathcal{E}_{\textrm{ext}}(\Gamma\setminus\gamma)=\left(\mathcal{E}_{\textrm{ext}}(\Gamma)\hspace{-0.1cm}\setminus\hspace{-0.1cm}\gamma\right)\sqcup\mathcal{E}_{\textrm{ext}}(\Gamma_+(\gamma))\sqcup\mathcal{E}_{\textrm{ext}}(\Gamma_-(\gamma))$.
    Finally, $\sigma(\Gamma_-(\gamma))_x=-\sigma(\Gamma(\gamma))_x$, if $x\in \mathrm{I}_-(\gamma)$. Conversely, $\sigma(\Gamma_+(\gamma))_x=\sigma(\Gamma(\gamma))_x$, if $x\in \mathrm{I}_+(\gamma)$. This is simply the observation that erasing $\gamma$ flips the spins in $\mathrm{I}_-(\gamma)$ and does nothing to the spins in $\mathrm{I}_-(\gamma)^c$. The lemma has, thus, been proved.
\end{proof}

\subsection{Scaling covers}\label{scalingcovers}

Following closely \cite{Frohlich.Spencer.82}, we define a collection of canonical open coverings of finite collections of spin-flips at different scales. The construction of such coverings shall be done recursively. Let $\{b_1,...,b_{2m}\}=\gamma\in\Omega^*$ be a finite collection of spin flips and $n\in\mathbb{N}_0$. We define 
$$I_1=\left(b_1-\frac{1}{2}, b_1-\frac{1}{2}+2^n\right).$$
Note that $b_1\in I_1$ and that the endpoints of $I_1$ are integers. Proceeding similarly with $\gamma_2\coloneqq\gamma\setminus I_1$ and following this procedure recursively, we arrive at a canonical minimal open covering of $\gamma$ by open intervals with diameter $2^n$ with integer endpoints. We shall denote such covering by $\mathcal{I}_n(\gamma)$.
\begin{remark}
    Note that $\mathcal{I}_n(\gamma)$ consists of a single interval if, and only if, $2^n>\mathrm{diam}(\gamma)$. Furthermore, $\lvert\mathcal{I}_0(\gamma)\rvert=\lvert\gamma\rvert$.
\end{remark}
\begin{remark}
    Let $I\in\mathcal{I}_0(\gamma)$. Then, $I\cap\mathbb{Z}$ is a translate of $\{0\}$. Let $I\in\mathcal{I}_n(\gamma)$, with $n\geq 1$. Then, $I\cap\mathbb{Z}=\{m\in\mathbb{Z}:1-2^{n-1}\leq m\leq 2^{n-1}-1\}$.
\end{remark}
\begin{remark}\label{endpoint}
    If $I\in\mathcal{I}_n(\gamma)$ and $I'\in\mathcal{I}_m(\gamma)$ are such that $I\cap\gamma=I'\cap\gamma$. Then, they have the same left endpoint.
\end{remark}

With this remark in mind, we define the cover size of $\gamma\in\Omega^*$ by
\begin{align}
    \mathcal{N}(\gamma)=\sum_{n=0}^{n_0(\gamma)}\lvert\mathcal{I}_n(\gamma)\rvert,
\end{align}
where $n_0(\gamma)=\lfloor\log_2\mathrm{diam}(\gamma)\rfloor$. Since this object is purely geometric and independent of the interaction, the exact same bound as stated in Theorem C in \cite{Frohlich.Spencer.82} is once again true. As a matter of completeness, we provide the proof for such an affirmation. 
\begin{theorem}\label{entropication}
    There is $C_2>0$ such that
    \begin{align}
            \lvert\mathcal{C}(R)\rvert\coloneqq \lvert\{\gamma\in\Omega^*:0\in V(\gamma), \mathcal{N}(\gamma)\leq R\}\rvert\leq e^{C_2R},
    \end{align}
    for all $R\in\mathbb{N}$.
\end{theorem}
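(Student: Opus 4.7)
The plan is to encode each $\gamma = \{b_1 < b_2 < \cdots < b_{2m}\} \in \mathcal{C}(R)$ by its leftmost spin flip $b_1$ and its gap sequence $(g_i)_{i=1}^{2m-1}$ with $g_i = b_{i+1} - b_i$, and to bound the number of admissible tuples directly; this suffices since $\gamma$ is reconstructed via $b_i = b_1 + g_1 + \cdots + g_{i-1}$. The three consequences of $\mathcal{N}(\gamma) \leq R$ that I will extract are: (i) $2m = |\gamma| = |\mathcal{I}_0(\gamma)| \leq \mathcal{N}(\gamma) \leq R$; (ii) $n_0(\gamma) + 1 \leq \mathcal{N}(\gamma) \leq R$, so $\diam(\gamma) < 2^{n_0+1} \leq 2^R$; and, crucially, (iii) $\sum_{i=1}^{2m-1} \lfloor \log_2 g_i \rfloor \leq \mathcal{N}(\gamma) \leq R$.

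The estimate (iii) is the main quantitative input and will be obtained by a ``break-counting'' argument. Let $B_n$ denote the number of indices $i$ such that $b_{i+1}$ starts a new scale-$n$ interval, so $|\mathcal{I}_n(\gamma)| = 1 + B_n$. Any single gap $g_i \geq 2^n$ already forces such a break: if $b_i$ and $b_{i+1}$ lay in the same scale-$n$ interval $[s, s+2^n)$, then $b_{i+1} - b_i < 2^n$, contradicting $g_i \geq 2^n$. Hence $B_n \geq \#\{i : g_i \geq 2^n\}$, and summing over $n = 0, \ldots, n_0(\gamma)$ and interchanging the order of summation gives
\begin{align*}
    \mathcal{N}(\gamma) - (n_0+1) \;=\; \sum_{n=0}^{n_0} B_n \;\geq\; \sum_{i=1}^{2m-1} \#\{0 \leq n \leq n_0 : g_i \geq 2^n\} \;=\; \sum_{i=1}^{2m-1}\bigl(\lfloor \log_2 g_i \rfloor + 1\bigr),
\end{align*}
which, after subtracting $(2m-1)$ from both sides and using (i)--(ii), rearranges to (iii).

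Given (i)--(iii), the counting is routine. By (ii) and $0 \in V(\gamma) = [b_1,b_{2m}]\cap\mathbb{Z}$, the leftmost spin flip $b_1 \in (-\diam(\gamma), -\tfrac12]$ has at most $\diam(\gamma) \leq 2^R$ choices. To count gap sequences, set $h_i \coloneqq \lfloor \log_2 g_i \rfloor \in \mathbb{N}_0$: by (iii), $\sum_i h_i \leq R$, and by (i) there are at most $R$ terms, so the number of admissible $(h_i)$-sequences (summed over all possible lengths) is at most $\sum_{\ell=0}^{R} \binom{R+\ell}{\ell} = \binom{2R+1}{R} \leq 2^{2R+1}$ by the hockey-stick identity. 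For each fixed $(h_i)$, the gap $g_i \in [2^{h_i}, 2^{h_i+1})$ admits $\leq 2^{h_i}$ values, so the full gap sequence has at most $2^{\sum_i h_i} \leq 2^R$ possibilities. Multiplying, $|\mathcal{C}(R)| \leq 2^R \cdot 2^{2R+1} \cdot 2^R \leq e^{C_2 R}$ for a suitable universal constant $C_2 > 0$.

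The only step requiring genuine justification is (iii), and even there the argument is direct: a single gap exceeding $2^n$ forces a break at scale $n$ regardless of the cumulative offset structure of the earlier blocks, so the greedy construction of $\mathcal{I}_n(\gamma)$ needs no further analysis. Once (iii) is available, the remainder is elementary counting of integer compositions, and this decomposition cleanly captures why the multiscale complexity $\mathcal{N}(\gamma)$ provides exactly the right combinatorial control --- it bounds $\sum_i \log_2 g_i$, which is precisely the entropy cost of choosing the gap sequence.
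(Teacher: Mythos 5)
Your proof is correct and follows essentially the same route as the paper: both bound $\mathcal{N}(\gamma)$ from below by the Fr\"ohlich--Spencer logarithmic length $\sum_i\bigl(1+\lfloor\log_2 g_i\rfloor\bigr)$ (you via per-scale break counting, the paper via recursively peeling off the leftmost flip) and then count contours by the leftmost spin flip together with the gap sequence. The remaining differences are bookkeeping only: you use $\mathrm{diam}(\gamma)\leq 2^{R}$ instead of the paper's sharper $2^{R/2}$, and you count gap sequences by stars-and-bars on the dyadic exponents rather than through the paper's general composition lemma, which merely changes the value of the admissible constant $C_2$.
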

\begin{proof}
Let $\gamma=\{b_1,...,b_{2m}\}$. If $n\leq\lfloor\log_2(b_2-b_1)\rfloor$, there is no open interval of diameter $2^n$ covering both $b_2$ and $b_1$. Hence,
\begin{align*}
    \mathcal{N}(\gamma)\geq\mathcal{N}(\gamma\setminus b_1)+\hspace{-0.5cm}\sum_{n=0}^{\lfloor\log_2(b_2-b_1)\rfloor}\hspace{-0.5 cm}1=\mathcal{N}(\gamma\setminus b_1)+1+\lfloor\log_2(b_2-b_1)\rfloor.
\end{align*}
Proceeding recursively, we arrive at
\begin{align*}
    \mathcal{N}(\gamma)\geq \sum_{j=1}^{\lvert\gamma\rvert-1}\left(1+\lfloor\log_2(b_{j+1}-b_j)\rfloor\right),
\end{align*}
one ought to recognize the right-hand side of the inequality above as the notion of logarithmic length present in \cite{Frohlich.Spencer.82}. It is, therefore, sufficient to prove that
\begin{align*}
    \lvert\mathcal{C}'(R)\rvert\coloneqq \left\lvert\left\{\gamma\in\Omega^*:0\in V(\gamma), \sum_{j=1}^{\lvert\gamma\rvert-1}(1+\lfloor\log_2(b_{j+1}-b_j)\rfloor)\leq R\right\}\right\rvert\leq e^{C_2R},
\end{align*}
for some $C_2>0$, for all $R\geq 1$. On the other hand, since $2(1+n_0(\gamma))\leq \mathcal{N}(\gamma)$, we have that
\begin{align*}
    \mathcal{N}(\gamma)\leq R\Longrightarrow n_0(\gamma)\leq \frac{R}{2}-1\Longrightarrow \mathrm{diam}(\gamma)\leq 2^{\frac{R}{2}}.
\end{align*}
Noting that any $\gamma\in\Omega^*$ is uniquely determined by the spin flip furthest to the left and the differences $(b_{j+1}-b_j)$ between consecutive spin flips, we obtain that
\begin{align*}
    \lvert\mathcal{C}(R)\rvert\leq \lvert\mathcal{C}'(R)\rvert\leq 2^{\frac{R}{2}}\sum_{n=1}^\infty\left\lvert\left\{(m_1,\dots,m_n)\in\mathbb{N}^n:\sum_{k=1}^n(1+\lfloor\log_2m_k\rfloor)\leq R \right\}\right\rvert.
\end{align*}
The $2^{\frac{R}{2}}$ factor coming from the possibilities of the spin-flip furthest to the left, given that $0\in V(\gamma)$, and the series being summed over the cardinality of $\lvert\gamma\rvert$.

In general, let $\varphi:\mathbb{N}\rightarrow\mathbb{N}$ be a function such that
\begin{align*}
    \lvert\{m\in\mathbb{N}:\varphi(m)=N\}\rvert\leq be^{aN},
\end{align*}
for all $N\in\mathbb{N}$ and some $a>0$, $b\geq1$. Then,
\begin{align*}
    \left\lvert\left\{(m_1,\dots,m_n)\in\mathbb{N}^n:\sum_{k=1}^n\varphi(m_k)\leq R\right\}\right\rvert&=\sum_{\substack{N\in\mathbb{N}^n\\N_1+...+N_n\leq R}}\prod_{k=1}^n|\{m\in \mathbb{N}:\varphi(m)= N_k\}|\\
    &\leq b^ne^{aR}\hspace{-0.6cm}\sum_{\substack{N\in\mathbb{N}^n\\N_1+...+N_n\leq R}}\hspace{-0.5cm}1=b^ne^{aR}\sum_{l=1}^R\binom{l-1}{n-1}.
\end{align*}
Hence,
\begin{align*}
    \sum_{n=1}^\infty\left\lvert\left\{m\in\mathbb{N}^n:\sum_{k=1}^n\varphi(m_k)\leq R\right\}\right\rvert&\leq e^{aR}\sum_{n=1}^\infty\sum_{l=1}^R\binom{l-1}{n-1}b^n\\
    &\leq e^{aR}b^R\sum_{l=1}^R2^{l-1}\leq e^{(a+\log (2b))R}.
\end{align*}
If $\varphi(m)=1+\lfloor\log_2m\rfloor$, we may take $a=\log 2$ and $b=1$. In which case, we arrive at
\begin{align}
    \lvert\mathcal{C}(R)\rvert\leq 2^{\frac{R}{2}}2^{2R}= e^{\left(\frac{5}{2}\log2\right)R}.
\end{align}
The exponential bound has, thus, been proved for $C_2=\frac{5}{2}\log2$.
\end{proof}
 
\section{Definition of long-range contours}\label{contours}

At first, it was somehow expected that following the strategy developed in \cite{Affonso.2021} would be necessary to prove phase transition for $1<\alpha<2$. More precisely, using incorrect points, not demanding neutrality of contours and changing the distancing parameter present in \cite{Frohlich.Spencer.82} from $3/2$ to some arbitrary $a$. However, by changing the way energy estimates were done, it turns out that \textit{any} distancing parameter $1<a<2$ is sufficient to capture the whole region of phase transition. That is, by changing the way the energy estimates are done, it becomes clear that Fr\"ohlich and Spencer had already provided most of the ideas necessary to capture the whole region of phase transition via a contour argument in \cite{Frohlich.Spencer.82}.

\begin{definition}\label{irreduciblecontour}
    Let $a\in\mathbb{R}$ be a real number such that $1<a<2$. Let $M\in\mathbb{R}$ be a real number such that $M>1$. Following the slight modifications introduced by Imbrie in \cite{Imbrie.82}, we say that a collection of spins $\gamma\in\Omega^*$ is an $(M,a)$-\textit{irreducible contour} if:
\begin{enumerate}
    \item the cardinality of $\gamma$ is even, i.e., it is neutrally charged;
    \item there is no decomposition of $\gamma$ into subsets $\gamma=\gamma_1\cup...\cup\gamma_n$ such that $\lvert\gamma_i\rvert$ is even for $1\leq i\leq n$, and $\mathrm{dist}(\gamma_i,\gamma_j)> M(\min\{\mathrm{diam}(\gamma_i),\mathrm{diam}(\gamma_j)\})^a$ for all $i\neq j$.
\end{enumerate}
\end{definition}

\begin{definition}
    Let $\partial\sigma\subset\mathbb{Z}^*$ be the collection of spin flips associated to $\sigma\in\Omega^+$. We say that $\Gamma\Subset\Omega^*$ is an $(M,a)$\textit{-partition} of $\partial\sigma$ if
\begin{enumerate}
    \item $\Gamma$ is a partition of $\partial\gamma$, i.e., $\gamma\cap\gamma'=\emptyset$, if $\gamma\neq\gamma'$ and $\gamma,\gamma'\in\Gamma$, and $\sqcup_{\gamma\in\Gamma}\gamma=\partial\sigma$;
    \item every element $\gamma\in\Gamma$ is an $(M,a)$-irreducible contour;
    \item $\mathrm{dist}(\gamma,\gamma')>M(\min\{\mathrm{diam}(\gamma),\mathrm{diam}(\gamma')\})^a$, whenever $\gamma,\gamma'\in\Gamma$ are distinct.
\end{enumerate}
\end{definition}

The uniqueness of the $(M,a)$-partition follows from the same argument given by Imbrie in Proposition 2.1 in \cite{Imbrie.82}.

\begin{lemma}
    Let $\sigma\in\Omega^+$. There is an unique $(M,a)$-partition $\Gamma\coloneqq\Gamma(\sigma,M,a)$ of $\partial\sigma$. 
\end{lemma}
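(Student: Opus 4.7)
The plan is to follow the strategy of Imbrie's Proposition 2.1 in \cite{Imbrie.82}, splitting the argument into existence (by top-down refinement) and uniqueness (by induction on the size of $\partial\sigma$, peeling off a piece of minimal diameter).

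For existence, I would start from the trivial partition $\Gamma_0 = \{\partial\sigma\}$, whose single element is neutral since $\partial\sigma \in \Omega^*$, and whose distance condition holds vacuously. At each stage, while some $\gamma \in \Gamma_k$ is not irreducible, I invoke the negation of item 2 in Definition \ref{irreduciblecontour} to obtain a decomposition $\gamma = \gamma_1 \cup \ldots \cup \gamma_n$ into even pieces with pairwise distances exceeding $M(\min\{\diam(\gamma_i),\diam(\gamma_j)\})^a$, and refine $\Gamma_k$ by replacing $\gamma$ with the $\gamma_i$'s. The global distance condition is preserved: for any other $\gamma' \in \Gamma_k \setminus \{\gamma\}$, $\dis(\gamma_i,\gamma') \geq \dis(\gamma,\gamma')$ while $\diam(\gamma_i) \leq \diam(\gamma)$, so the threshold only shrinks. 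Since each step strictly increases the number of pieces, the procedure terminates in at most $|\partial\sigma|/2$ steps.

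For uniqueness I would induct on $|\partial\sigma|$, the case $\partial\sigma = \emptyset$ being trivial. Given two $(M,a)$-partitions $\Gamma, \Gamma'$, pick $\gamma^* \in \Gamma \cup \Gamma'$ of minimal diameter, say $\gamma^* \in \Gamma$. The first claim is that $\gamma^*$ lies inside a single element $\gamma'_1 \in \Gamma'$: if it met two distinct pieces $\gamma'_i, \gamma'_j \in \Gamma'$, then $\diam(\gamma^*) \geq \dis(\gamma'_i,\gamma'_j) > M(\min\{\diam(\gamma'_i),\diam(\gamma'_j)\})^a \geq M\,\diam(\gamma^*)^a \geq M\,\diam(\gamma^*)$, using $\diam(\gamma^*) \geq 1$ (since $|\gamma^*|$ is even and positive) together with $a>1$, which contradicts $M>1$. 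The second claim is that $\gamma^* = \gamma'_1$: otherwise $\gamma'_1 = \gamma^* \sqcup (\gamma'_1 \setminus \gamma^*)$ is a decomposition into two even-cardinality sets, and since $\gamma'_1 \setminus \gamma^* = \bigsqcup_{\gamma_i \in \Gamma \setminus \{\gamma^*\}} (\gamma_i \cap \gamma'_1)$, the distance condition in $\Gamma$ together with the minimal-diameter choice gives $\dis(\gamma^*, \gamma'_1 \setminus \gamma^*) \geq \min_i \dis(\gamma^*, \gamma_i) > M\,\diam(\gamma^*)^a \geq M(\min\{\diam(\gamma^*),\diam(\gamma'_1 \setminus \gamma^*)\})^a$, contradicting the irreducibility of $\gamma'_1$. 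Hence $\gamma^* \in \Gamma \cap \Gamma'$, and removing it from both partitions yields $(M,a)$-partitions of $\partial\sigma \setminus \gamma^*$ (the remaining distance conditions are unaffected) to which the inductive hypothesis applies.

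The main obstacle I expect is tracking the evenness and distance conditions simultaneously in the uniqueness step: naively comparing the two partitions via pointwise intersections $\gamma \cap \gamma'$ produces pieces that need not be neutral, so the argument must be routed through the minimal-diameter contour, where both the parity of $\gamma'_1 \setminus \gamma^*$ and the required distance bound can be controlled by reducing the threshold to $\diam(\gamma^*)^a$. This is precisely where the assumptions $M>1$ and $a>1$ are used — the strict gain $M\,d^a > d$ for $d \geq 1$ is what rules out the minimal piece straddling two pieces of the other partition, and everything else is bookkeeping.
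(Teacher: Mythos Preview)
Your proposal is correct and follows precisely the approach the paper defers to, namely Imbrie's Proposition~2.1: the paper gives no proof of its own and simply cites that result. Your existence argument by top-down refinement and your uniqueness argument by peeling off a minimal-diameter contour are both standard implementations of that strategy, and your use of $M>1$, $a>1$, and $\mathrm{diam}(\gamma^*)\geq 1$ to derive the contradictions is exactly right.
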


\begin{definition}
    Given two $(M,a)$-irreducible contours $\gamma_1$ and $\gamma_2$, we say that they are \textit{compatible} if there exists $\sigma\in\Omega^+$ such that $\gamma_1$ and $\gamma_2$ are elements of $\Gamma(\sigma,M,a)$. In which case, we write $\gamma_1\simeq\gamma_2$.
    Similarly, let $\Gamma$ be an $(M,a)$-partition and $\gamma$ be an $(M,a)$-irreducible contour such that $\gamma\cap\gamma'=\emptyset$ for all $\gamma'\in\Gamma$. We say that $\gamma$ is compatible with $\Gamma$ if $\Gamma\sqcup\{\gamma\}=\Gamma(\sigma,M,a)$ for some $\sigma\in\Omega^+$. In which case, we write $\Gamma\simeq\gamma$.
\end{definition}

\begin{remark}
    Instead of writing $(M,a)$-irreducible contours, if no confusion can be made on $(M,a)$, we shall write irreducible contours instead.
\end{remark}

The fact that $a>1$ guarantees that compatible irreducible contours behave the same way they do in \cite{Frohlich.Spencer.82}. More precisely, given two distinct compatible irreducible contours $\gamma_1$ and $\gamma_2$ such that $V(\gamma_1)\cap V(\gamma_2)\neq \emptyset$. Then, either $\gamma_1<\gamma_2$ or $\gamma_2<\gamma_1$.

\begin{corollary}
    Let $\Gamma$ be the $(M,a)$-partition of $\partial\sigma$, where $\sigma\in\Omega^+$. Then, $\Gamma$ is well-ordered.
\end{corollary}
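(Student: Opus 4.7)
The plan is to derive this corollary as a direct consequence of the ordering observation stated in the paragraph immediately preceding it. By the very definition of an $(M,a)$-partition, any two distinct $\gamma,\gamma'\in\Gamma$ are compatible, since both appear in the partition $\Gamma=\Gamma(\sigma,M,a)$. Applying the observation: whenever $V(\gamma)\cap V(\gamma')\neq\emptyset$, one has $\gamma<\gamma'$ or $\gamma'<\gamma$. For distinct elements, $\gamma\leq\gamma'$ collapses to $\gamma<\gamma'$, so the contrapositive reads: $\gamma\not\leq\gamma'$ and $\gamma'\not\leq\gamma$ imply $V(\gamma)\cap V(\gamma')=\emptyset$, which is precisely the well-orderedness condition.

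For completeness one should verify the underlying ordering observation, which the excerpt states but does not prove in detail. I would argue as follows. Assume without loss of generality that $\diam(\gamma_1)\leq\diam(\gamma_2)$ and write $\gamma_2=\{b_1,\dots,b_{2m}\}$ in increasing order. Compatibility yields the distancing bound $\dist(\gamma_1,\gamma_2)>M\,\diam(\gamma_1)^a$, while disjointness ($\gamma_1\cap\gamma_2=\emptyset$) forces every spin flip of $\gamma_1$ to lie either outside $[b_1,b_{2m}]$ or strictly inside some gap $(b_j,b_{j+1})$ of $\gamma_2$. If two spin flips $c<c'$ of $\gamma_1$ were separated by some $b\in\gamma_2$ (which happens whenever they lie in different gaps, or one lies outside $V(\gamma_2)$ while the other lies inside), then
\begin{align*}
\dist(\gamma_1,\gamma_2)\;\leq\;\min(b-c,\,c'-b)\;\leq\;\tfrac{1}{2}(c'-c)\;\leq\;\tfrac{1}{2}\diam(\gamma_1),
\end{align*}
which contradicts the distancing bound, since $M>1$, $a>1$ and $\diam(\gamma_1)\geq 1$. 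Therefore all spin flips of $\gamma_1$ sit in a single gap $(b_j,b_{j+1})$ of $\gamma_2$, i.e., $\gamma_1<\gamma_2$.

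I do not expect any genuine obstacle: the corollary is essentially a rephrasing of the observation, and the observation itself is a clean exploitation of the $(M,a)$-distancing condition combined with the fact that diameters of nontrivial irreducible contours are at least $1$. The only small point requiring care is to absorb the degenerate case in which $V(\gamma_1)$ spills out of $V(\gamma_2)$ into the same trapped-point estimate above, and to notice that $V(\gamma_1)=V(\gamma_2)$ cannot occur since the extreme points of $V(\gamma_j)$ are themselves spin flips of $\gamma_j$ and the two contours share no spin flip.
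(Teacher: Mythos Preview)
Your proposal is correct and matches the paper's (implicit) approach: the paper states the corollary without proof, treating it as an immediate consequence of the preceding observation about compatible irreducible contours, which is exactly what you do. Your additional verification of the observation itself, via the trapped-point inequality $\dist(\gamma_1,\gamma_2)\leq\tfrac{1}{2}\diam(\gamma_1)$ contradicting $\dist(\gamma_1,\gamma_2)>M\diam(\gamma_1)^a$, is sound; the only minor point left slightly implicit is that the case ``all spin flips of $\gamma_1$ lie outside $V(\gamma_2)$ on the same side'' is excluded directly by the hypothesis $V(\gamma_1)\cap V(\gamma_2)\neq\emptyset$, but you allude to this in your final remark.
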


\begin{remark}
    In the multi-dimensional setting of previous works \cite{potts, Johanes, cluster, maia2024phase}, the property of being well-ordered was called \emph{\textbf{(A1)}}.
\end{remark}

\begin{definition}
    A maximal element of $\Gamma$ shall be called an \textit{exterior contour}.
\end{definition}

\subsection{Energy estimates}\label{subsecenergyestimates}

In this subsection, we establish a lower bound on the energy of removing an exterior contour $\gamma$ of an $(M,a)$-partition $\Gamma$. Since we are following the entropic approach developed in \cite{Frohlich.Spencer.82}, it is necessary to find energy estimates that are compatible with $a<2$. This warrants a brief discussion. 

When dealing with the interactions between $\gamma$ and large contours in the same way the estimates were done in \cite{Frohlich.Spencer.82}, this is the case of \cite{Affonso.2021} and \cite{maia2024phase}, one needs that
\begin{align*}
    1-a(\alpha-1)\leq0.
\end{align*}
This sort of condition arises from the following type of estimate, where it is expected to find $C>0$ such that, for $\Gamma''_1=\{\gamma''\in\Gamma:\mathrm{diam}(\gamma)\leq\mathrm{diam}(\gamma'')\}$, we have
\begin{align*}
    \sum_{\substack{x\in \mathrm{I}_-(\gamma)\\y\in V(\Gamma''_1)}}J_{xy}\leq 2\lvert \mathrm{I}_-(\gamma)\rvert\hspace{-0.4cm}\sum_{r=M\mathrm{diam}(\gamma)^a}^\infty \hspace{-0.4cm}J(r)\leq 2C\lvert \mathrm{I}_-(\gamma)\rvert\frac{1}{(M\mathrm{diam}(\gamma))^{a(\alpha-1)}}\leq \frac{2C}{M^{a(\alpha-1)}}\mathrm{diam}(\gamma)^{1-a(\alpha-1)}.
\end{align*}
Hoping to obtain an estimate as below, a condition on $a$ arises,
\begin{align*}
    \sum_{\substack{x\in \mathrm{I}_-(\gamma)\\y\in V(\Gamma''_1)}}J_{xy}\leq \frac{C}{M^{a(\alpha-1)}}.
\end{align*}
A similar phenomenon takes place in \cite{Affonso.2021} and \cite{maia2024phase}, which was overcome by losing the neutrality of the geometric objects. Given the geometry of the lattice in $d\geq 2$, the different geometric objects that arise are powerful tools to study multidimensional long-range Ising and Potts models as well, see \cite{potts}. Unfortunately, this is not the case for the one-dimensional Ising model. The idea to overcome such a difficulty came from an observation present in \cite{LittinPicco2017}, where a "relative" estimate is rather easy to obtain, as in an estimate of the type
\begin{align*}
    \sum_{\substack{x\in \mathrm{I}_-(\gamma)\\y\in V(\Gamma''_1)}}J_{xy}\leq C_{M,a}H_J(\gamma),
\end{align*}
instead of the previous "absolute" estimate. This may be done for any $1<a<2$, our strategy, therefore, follows closely the computations present in the Proposition 4.2.14 of \cite{maia2024phase} with a twist on the interaction between $\gamma$ and larger contours present in \cite{LittinPicco2017}.

\begin{lemma}\label{lemmabobo20}
    Let $J(r)=r^{-\alpha}$, where $1<\alpha\leq 2$. Let $n,M\in\mathbb{N}$. Then,
    \begin{align}
        \lim_{M\rightarrow\infty}\sup_{n\in\mathbb{N}}\frac{\sum_{r=nM}^\infty J(r)}{\sum_{r=n}^\infty J(r)}=\lim_{M\rightarrow\infty}\alpha M^{1-\alpha}=0.
    \end{align}
\end{lemma}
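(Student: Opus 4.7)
The approach is a straightforward integral comparison. The second equality is immediate: since $\alpha > 1$ implies $1-\alpha < 0$, one has $\alpha M^{1-\alpha}\to 0$ as $M\to\infty$. So the content reduces to showing that the first supremum is dominated, uniformly in $n$, by (something asymptotic to) $\alpha M^{1-\alpha}$.

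For the tail sums, I would use the standard monotonicity estimate for decreasing positive functions: for every integer $k\geq 2$,
$$\frac{k^{1-\alpha}}{\alpha-1} \;=\; \int_k^\infty x^{-\alpha}\,dx \;\leq\; \sum_{r=k}^\infty r^{-\alpha} \;\leq\; \int_{k-1}^\infty x^{-\alpha}\,dx \;=\; \frac{(k-1)^{1-\alpha}}{\alpha-1}.$$
Applying the upper bound to the numerator (with $k=nM$, assuming $M\geq 2$) and the lower bound to the denominator (with $k=n$) gives
$$\frac{\sum_{r=nM}^\infty r^{-\alpha}}{\sum_{r=n}^\infty r^{-\alpha}} \;\leq\; \left(\frac{nM-1}{n}\right)^{1-\alpha} \;=\; \left(M-\tfrac{1}{n}\right)^{1-\alpha}.$$

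Since $1-\alpha<0$, the function $x\mapsto x^{1-\alpha}$ is decreasing, so the right-hand side is maximized at $n=1$. Hence
$$\sup_{n\in\mathbb{N}} \frac{\sum_{r=nM}^\infty r^{-\alpha}}{\sum_{r=n}^\infty r^{-\alpha}} \;\leq\; (M-1)^{1-\alpha}.$$
Because $(M-1)^{1-\alpha}/M^{1-\alpha} = (1-1/M)^{\alpha-1}\to 1$ as $M\to\infty$, the bound is of order $M^{1-\alpha}$; in particular, for $M$ sufficiently large it is dominated by $\alpha M^{1-\alpha}$, matching the expression in the statement. Taking $M\to\infty$ then forces the supremum to $0$, which together with the trivial limit $\alpha M^{1-\alpha}\to 0$ gives both equalities.

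There is no genuine obstacle in this lemma — it is Riemann-sum bookkeeping. The only point requiring a moment of care is the uniformity in $n$, which is handled automatically by the monotonicity of $x\mapsto x^{1-\alpha}$ (so the worst case is $n=1$). Extracting the precise constant $\alpha$ as opposed to some other universal constant is a cosmetic issue: any bound of the form $C_\alpha M^{1-\alpha}$ with $C_\alpha$ finite is enough for the applications later in the paper, since only the limit $M\to\infty$ is used.
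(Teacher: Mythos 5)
Your proof is correct and follows essentially the same route as the paper's: both compare the tails with integrals (numerator bounded above via an integral starting near $nM$, denominator bounded below by $\int_n^\infty x^{-\alpha}\,dx$), giving a bound of order $M^{1-\alpha}$ uniformly in $n$, which is all that is needed since only the $M\to\infty$ limit (and a constant times $M^{1-\alpha}$) is used later. One cosmetic slip: $(M-1)^{1-\alpha}/M^{1-\alpha}=(1-1/M)^{1-\alpha}$, not $(1-1/M)^{\alpha-1}$, but both tend to $1$, so your conclusion is unaffected.
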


\begin{proof}
    Let $n\in\mathbb{N}$ and $M>1$. Then,
    \begin{align*}
        \sum_{r=nM}^\infty J(r)\leq (nM)^{-\alpha}+\int_{nM}^\infty\mathrm{d}rr^{-\alpha}\leq (nM)^{-\alpha}+(\alpha-1)^{-1}(nM)^{1-\alpha}.
    \end{align*}
    On the other hand,
    \begin{align*}
        \sum_{r=n}^\infty J(r)\geq \int_n^\infty\mathrm{d}rr^{-\alpha}=(\alpha-1)^{-1}n^{1-\alpha}.
    \end{align*}
    Therefore,
    \begin{align*}
        \frac{\sum_{r=nM}^\infty J(r)}{\sum_{r=n}^\infty J(r)}\leq \frac{\left[(\alpha-1)(nM)^{-1}+1\right](nM)^{1-\alpha}}{n^{1-\alpha}}\leq\alpha M^{1-\alpha}.
    \end{align*}
    The lemma has, thus been proved.
\end{proof}

\begin{lemma}\label{energyestimates}
    Let $\Gamma\coloneqq \Gamma(\sigma,M,a)$ and $\gamma\in\mathcal{E}_{\mathrm{ext}}(\Gamma)$. Then, there exists $C=C(a)>0$ such that
    \begin{align}
        H_J(\Gamma)-H_J(\Gamma\hspace{-0.1cm}\setminus\hspace{-0.1cm}\gamma)\geq \left(1-\frac{C}{M}-4\alpha M^{1-\alpha}\right)H_J(\gamma).
    \end{align}
\end{lemma}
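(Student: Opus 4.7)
The plan is to first make the identity
\begin{align*}
H_J(\Gamma) - H_J(\Gamma\setminus\gamma) = H_J(\gamma) - 4A,\qquad
A := \hspace{-0.2cm}\sum_{\substack{x \in \mathrm{I}_-(\gamma),\, y \notin \mathrm{I}_-(\gamma) \\ \sigma_x\sigma_y = +1}}\hspace{-0.2cm} J_{xy},
\end{align*}
explicit by noting that erasing the external contour $\gamma$ flips exactly the spins in $\mathrm{I}_-(\gamma)$ (Lemma~\ref{posneg}), so only bonds $(x,y)$ with exactly one endpoint in $\mathrm{I}_-(\gamma)$ see a sign change, together with $H_J(\gamma) = 2F_{\mathrm{I}_-(\gamma)}$. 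The desired inequality then reduces to $4A \leq (C/M + 4\alpha M^{1-\alpha})H_J(\gamma)$.

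For each bad pair $(x,y)$, the condition $\sigma_x\sigma_y = +1$ forces the real interval between $x$ and $y$ to contain an odd number of spin flips from $\partial\sigma\setminus\gamma$, so at least one $\gamma' \in \Gamma\setminus\{\gamma\}$ must lie between $x$ and $y$. Assigning such a $\gamma'$ to each bad pair, I split $A \leq A_{\mathrm{in}} + A_{\mathrm{out}}$, where $A_{\mathrm{in}}$ collects bad pairs admitting a blameable inner contour $\gamma' \in \Gamma(\gamma)\setminus\{\gamma\}$, and $A_{\mathrm{out}}$ the remaining bad pairs (for which every blameable $\gamma'$ has $V(\gamma')\cap V(\gamma)=\emptyset$, forcing $y \notin V(\gamma)$). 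For $A_{\mathrm{in}}$, every blameable $\gamma'$ satisfies $\mathrm{dist}(\gamma',\gamma) > M\,\mathrm{diam}(\gamma')^a$ by the $(M,a)$-partition axiom. Organising the sum by $\gamma'$ and using the elementary tail bound $\sum_{r\geq d}J(r)\leq(\alpha-1)^{-1}d^{1-\alpha}$, one compares the total bond-weight attached to each $\gamma'$ with the local contribution of the flanking spin-flip of $\gamma$ to $F_{\mathrm{I}_-(\gamma)}$; the ratio is at most $C(a)/M$, and summing over all inner contours yields $4A_{\mathrm{in}} \leq (C/M)H_J(\gamma)$. For $A_{\mathrm{out}}$, the $(M,a)$-distance ensures that each bad pair has $|x-y|$ at least $M$ times the distance from $x$ to the boundary of $V(\gamma)$ on $y$'s side, at least for outer contours of comparable or larger diameter than $\gamma$ (the small-diameter outer contours are handled analogously by a volume count, since their $V$-sets are then also small). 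The ratio estimate $\sum_{r\geq Mn}J(r)/\sum_{r\geq n}J(r)\leq \alpha M^{1-\alpha}$ from Lemma~\ref{lemmabobo20} then yields $4A_{\mathrm{out}} \leq 4\alpha M^{1-\alpha}H_J(\gamma)$.

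The main obstacle is the book-keeping for $A_{\mathrm{in}}$: inner contours come in many scales and must all be controlled uniformly without overcounting, using only $a > 1$ and the scale-sensitive $(M,a)$-distance. This is the one-dimensional incarnation of the ``relative'' estimate highlighted in \cite{LittinPicco2017}---the twist that replaces the multidimensional ``absolute'' estimate used in \cite{Affonso.2021, maia2024phase} and, by avoiding the restriction $1 - a(\alpha-1) \leq 0$, unlocks the entire range $\alpha \in (1,2]$.
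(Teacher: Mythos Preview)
Your opening identity $H_J(\Gamma)-H_J(\Gamma\setminus\gamma)=H_J(\gamma)-4A$ is correct, and your use of Lemma~\ref{lemmabobo20} for the \emph{large} outer contours matches the paper exactly. The difficulty lies elsewhere.

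Your decomposition by ``blameable contour between $x$ and $y$'' is not the one the paper uses. The paper instead bounds $A$ by the coarser quantity
\[
\sum_{\substack{x\in V(\Gamma')\\y\in \mathrm{I}_-(\gamma)^c}}J_{xy}\;+\;\sum_{\substack{x\in \mathrm{I}_-(\gamma)\\y\in V(\Gamma'')}}J_{xy},
\]
where $\Gamma'=\Gamma_-(\gamma)$ and $\Gamma''=\Gamma\setminus(\{\gamma\}\cup\Gamma')$: the split is by \emph{where $x$ or $y$ sits} (inside the volume of some other contour) rather than by which contour has a flip in $[x,y]$. This coarser split is what makes the subsequent estimates clean.

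The genuine gap in your outline is the treatment of $A_{\mathrm{in}}$ and of the small-diameter part of $A_{\mathrm{out}}$. You write that the per-$\gamma'$ ratio is at most $C(a)/M$ and that ``summing over all inner contours'' yields $4A_{\mathrm{in}}\le (C/M)H_J(\gamma)$; but many inner contours can share the same flanking spin-flips of $\gamma$, and nothing you have said prevents the local contributions you compare against from being counted repeatedly. The paper's resolution---which you do not describe---is a dyadic scale decomposition: one groups the small contours by $2^r\le\mathrm{diam}(\gamma')<2^{r+1}$, observes that at each fixed scale the contours are $M2^{ar}$-separated, and compares $|V(\gamma')|\le 2^{r+1}$ against the length $\ge M2^{ar}$ of the adjacent \emph{flanking intervals} in $V(\Gamma_{2,r}'')^c\setminus\mathrm{I}_-(\gamma)$. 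Because each flanking interval is adjacent to at most two contours of the same scale, summing over contours at scale $r$ costs only a factor of $2$, and the resulting bound $\tfrac{2}{M}2^{-(a-1)r}$ is geometrically summable in $r$ precisely because $a>1$. This is the mechanism that turns the scale-sensitive $(M,a)$-distance into the constant $C(a)=2^{a+3}/(2^{a-1}-1)$; your proposal identifies the obstacle but stops short of supplying it. The ``volume count'' you invoke for small outer contours is the same missing argument.
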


%\begin{remark}
%    A similar bound holds for couplings $r\mapsto J(r)$ if
%    \begin{align*}
%        \lim_{M\rightarrow\infty}\sup_{n\in\mathbb{N}}\frac{\sum_{r=nM^a}^\infty J(r)}{\sum_{r=n}^\infty J(r)}<1.
%    \end{align*}
%\end{remark}

\begin{proof}
    In this proof, we shall denote the configuration associated to $\Gamma\hspace{-0.1cm}\setminus\hspace{-0.1cm}\gamma$ by $\tau$. Then,
    \begin{align*}
        H_J(\Gamma)-H_J(\Gamma\hspace{-0.1cm}\setminus\hspace{-0.1cm}\gamma)&=\frac{1}{2}\sum_{x,y}J_{xy}(\tau_x\tau_y-\sigma_x\sigma_y)=\frac{1}{2}\sum_{x,y\in V(\gamma)}J_{xy}(\tau_x\tau_y-\sigma_x\sigma_y)+\sum_{\substack{x\in V(\gamma)\\y\in V(\gamma)^c}}J_{xy}(\tau_x-\sigma_x)\sigma_y\\
        &=\sum_{\substack{x\in \mathrm{I}_-(\gamma)\\y\in \mathrm{I}_+(\gamma)}}J_{xy}(-2\sigma_x\sigma_y)+\sum_{\substack{x\in \mathrm{I}_-(\gamma)\\y\in V(\gamma)^c}}J_{xy}(-2\sigma_x)\sigma_y=-\sum_{\substack{x\in \mathrm{I}_-(\gamma)\\y\in \mathrm{I}_-(\gamma)^c}}2J_{xy}\sigma_x\sigma_y.
    \end{align*}
    The second equality follows from the fact that $\tau_x=\sigma_x$ if $x\in V(\gamma)^c$, since $\gamma$ is an external contour. The third equality follows from the observation that $\tau_x=\sigma_x$, if $x\in \mathrm{I}_+(\gamma)$. Similarly, $\tau_x=-\sigma_x$, if $x\in \mathrm{I}_-(\gamma)$.

    To proceed with the computations, we partition $\Gamma\hspace{-0.1cm}\setminus\hspace{-0.1cm}\gamma$ in two parts. We define
    \begin{align*}
        \Gamma'\coloneqq \{\gamma'\in \Gamma\hspace{-0.1cm}\setminus\hspace{-0.1cm}\gamma: V(\gamma')\subset \mathrm{I}_-(\gamma)\}=\Gamma_-(\gamma)
    \end{align*}
    and $\Gamma''=\Gamma\setminus(\gamma\cup\Gamma')$. Note that $\sigma_x=-1$, if $x\in \mathrm{I}_-(\gamma)\setminus V(\Gamma')$. Similarly, $\sigma_x=+1$, if $x\in \mathrm{I}_-(\gamma)^c\setminus V(\Gamma'')$. Once again, the fact that $\gamma$ is an exterior contour guarantees that $V(\Gamma'')\subset \mathrm{I}_-(\gamma)^c$. We obtain
    \begin{align*}
        H_J(\Gamma)-H_J(\Gamma\hspace{-0.1cm}\setminus\hspace{-0.1cm}\gamma)=&-\hspace{-0.2cm}\sum_{\substack{x\in V(\Gamma')\\y\in V(\Gamma'')}}\hspace{-0.2 cm}2J_{xy}\sigma_x\sigma_y+\hspace{-0.4 cm}\sum_{\substack{x\in \mathrm{I}_-(\gamma)\setminus V(\Gamma')\\y\in V(\Gamma'')}}\hspace{-0.5cm}2J_{xy}\sigma_y-\hspace{-0.4cm}\sum_{\substack{x\in V(\Gamma')\\y\in \mathrm{I}_-(\gamma)^c\setminus V(\Gamma'')}}\hspace{-0.5cm}2J_{xy}\sigma_x+\hspace{-0.4cm}\sum_{\substack{x\in \mathrm{I}_-(\gamma)\setminus V(\Gamma')\\\mathrm{I}_-(\gamma)^c\setminus V(\Gamma'')}}\hspace{-0.4cm}2J_{xy}.
    \end{align*}
    Since the spin space is simply $\{-1,+1\}$, we have the following two identities
    \begin{align*}
        &\sigma_x\sigma_y=1-2\mathbf{1}_{\sigma_x\neq\sigma_y}\\
        &\sigma_x=1-2\mathbf{1}_{\sigma_x=-1}=2\mathbf{1}_{\sigma_x=+1}-1.
    \end{align*}
    Therefore,
    \begin{align*}
        H_J(\Gamma)-H_J(\Gamma\hspace{-0.1cm}\setminus\hspace{-0.1cm}\gamma)=&-\hspace{-0.2cm}\sum_{\substack{x\in V(\Gamma')\\y\in V(\Gamma'')}}2J_{xy}+\hspace{-0.4cm}\sum_{\substack{x\in \mathrm{I}_-(\gamma)\setminus V(\Gamma')\\y\in V(\Gamma'')}}\hspace{-0.4cm}2J_{xy}+\hspace{-0.4cm}\sum_{\substack{x\in V(\Gamma')\\y\in \mathrm{I}_-(\gamma)^c\setminus V(\Gamma'')}}\hspace{-0.5cm}2J_{xy}+\hspace{-0.4cm}\sum_{\substack{x\in \mathrm{I}_-(\gamma)\setminus V(\Gamma')\\ y\in \mathrm{I}_-(\gamma)^c\setminus V(\Gamma'')}}\hspace{-0.4cm}2J_{xy}\\
        &+\hspace{-0.2cm}\sum_{\substack{x\in V(\Gamma')\\y\in V(\Gamma'')}}4J_{xy}\mathbf{1}_{\sigma_x\neq\sigma_y}-\hspace{-0.4cm}\sum_{\substack{x\in \mathrm{I}_-(\gamma)\setminus V(\Gamma')\\ y\in V(\Gamma'')}}\hspace{-0.4cm}4J_{xy}\mathbf{1}_{\sigma_y=-1}-\hspace{-0.4cm}\sum_{\substack{x\in V(\Gamma')\\y\in \mathrm{I}_-(\gamma)\setminus V(\Gamma'')}}\hspace{-0.4cm}4J_{xy}\mathbf{1}_{\sigma_x=+1}.
    \end{align*}
    Recalling that $H_J(\gamma)=2\sum_{\substack{x\in \mathrm{I}_-(\gamma)\\y\in \mathrm{I}_-(\gamma)^c}}J_{xy}$, we obtain that
    \begin{align*}
        H_J(\Gamma)-H_J(\Gamma\hspace{-0.1cm}\setminus\hspace{-0.1cm}\gamma)&\geq H_J(\gamma)-\sum_{\substack{x\in V(\Gamma')\\y\in V(\Gamma'')}}4J_{xy}-\sum_{\substack{x\in \mathrm{I}_-(\gamma)\setminus V(\Gamma')\\y\in V(\Gamma'')}}4J_{xy}\mathbf{1}_{\sigma_y=-1}-\sum_{\substack{x\in V(\Gamma')\\ y\in \mathrm{I}_-(\gamma)^c\setminus V(\Gamma'')}}4J_{xy}\mathbf{1}_{\sigma_x=+1}\\
        &\geq H_J(\gamma) -\sum_{\substack{x\in V(\Gamma')\\y\in \mathrm{I}_-(\gamma)^c}}4J_{xy}-\sum_{\substack{x\in \mathrm{I}_-(\gamma)\\y\in V(\Gamma'')}}4J_{xy}.
    \end{align*}
    
    Let us begin by dealing with the sum involving $V(\Gamma'')$. We define $\Gamma''_1\coloneqq \{\gamma''\in\Gamma'':\mathrm{diam}(\gamma)\leq \mathrm{diam}(\gamma'')\}$ and $\Gamma''_2=\Gamma''\setminus\Gamma''_1$. It is precisely here that one ought to suppose that $J$ decays polynomially, such a computation does not work, for example, if $J(r)=r^{-1}(\log(1+(\log(1+x))))^{-2}$.
    
    Since $\gamma$ is an exterior contour, $V(\gamma)\cap V(\gamma'')=\emptyset$, if $\gamma''\in\Gamma''_1$. Hence,
    \begin{align*}
        \sum_{\substack{x\in \mathrm{I}_-(\gamma)\\y\in V(\Gamma''_1)}}J_{xy}&\leq \sum_{\substack{x\in \mathrm{I}_-(\gamma)\\\mathrm{dist}(y, V(\gamma))\geq M\mathrm{diam}(\gamma)^a}}J_{xy}\leq \sum_{\substack{x\in \mathrm{I}_-(\gamma)\\\mathrm{diam}(y, V(\gamma))>M\mathrm{diam}(\gamma)}}J_{xy}\\
        &\leq \sup_{\substack{\gamma\in\Omega^* \\x\in V(\gamma)}}\left[\frac{\sum_{\mathrm{diam}(y,V(\gamma))>M\mathrm{diam}(\gamma)}J_{xy}}{\sum_{y\in V(\gamma)^c}J_{xy}}\right]\sum_{\substack{\mathrm{I}_-(\gamma)\\ y\in V(\gamma)^c}}J_{xy}\leq \alpha M^{1-\alpha}H_J(\gamma),
    \end{align*}
    where $\lim_{M\rightarrow\infty}\alpha M^{1-\alpha}=0$. The last inequality follows from Lemma \eqref{lemmabobo20}.
    
    To proceed with the computations, we further partition $\Gamma''_2$. For $r\in\mathbb{N}_0$, we define
    \begin{align*}
        \Gamma''_{2,r}=\{\gamma''\in\Gamma''_2:2^r\leq \mathrm{diam}(\gamma'')<2^{r+1}\}.
    \end{align*}
    Note that $\mathrm{dist}(\gamma''_1,\gamma''_2)\geq M2^{ar}$ and that $V(\gamma''_1)\cap V(\gamma''_2)=\emptyset$, whenever $\gamma''_1,\gamma''_2\in \Gamma''_{2,r}$. 

    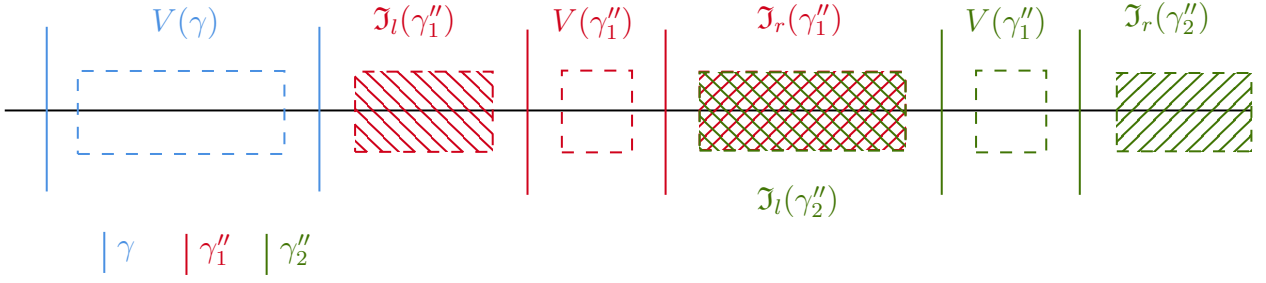
\begin{figure}[hbt!]
        \centering
        % Pattern Info
 
\tikzset{
pattern size/.store in=\mcSize, 
pattern size = 5pt,
pattern thickness/.store in=\mcThickness, 
pattern thickness = 0.3pt,
pattern radius/.store in=\mcRadius, 
pattern radius = 1pt}
\makeatletter
\pgfutil@ifundefined{pgf@pattern@name@_h7e63vsv8}{
\pgfdeclarepatternformonly[\mcThickness,\mcSize]{_h7e63vsv8}
{\pgfqpoint{0pt}{-\mcThickness}}
{\pgfpoint{\mcSize}{\mcSize}}
{\pgfpoint{\mcSize}{\mcSize}}
{
\pgfsetcolor{\tikz@pattern@color}
\pgfsetlinewidth{\mcThickness}
\pgfpathmoveto{\pgfqpoint{0pt}{\mcSize}}
\pgfpathlineto{\pgfpoint{\mcSize+\mcThickness}{-\mcThickness}}
\pgfusepath{stroke}
}}
\makeatother

% Pattern Info
 
\tikzset{
pattern size/.store in=\mcSize, 
pattern size = 5pt,
pattern thickness/.store in=\mcThickness, 
pattern thickness = 0.3pt,
pattern radius/.store in=\mcRadius, 
pattern radius = 1pt}
\makeatletter
\pgfutil@ifundefined{pgf@pattern@name@_tuf4mjvvs}{
\pgfdeclarepatternformonly[\mcThickness,\mcSize]{_tuf4mjvvs}
{\pgfqpoint{0pt}{0pt}}
{\pgfpoint{\mcSize+\mcThickness}{\mcSize+\mcThickness}}
{\pgfpoint{\mcSize}{\mcSize}}
{
\pgfsetcolor{\tikz@pattern@color}
\pgfsetlinewidth{\mcThickness}
\pgfpathmoveto{\pgfqpoint{0pt}{0pt}}
\pgfpathlineto{\pgfpoint{\mcSize+\mcThickness}{\mcSize+\mcThickness}}
\pgfusepath{stroke}
}}
\makeatother

% Pattern Info
 
\tikzset{
pattern size/.store in=\mcSize, 
pattern size = 5pt,
pattern thickness/.store in=\mcThickness, 
pattern thickness = 0.3pt,
pattern radius/.store in=\mcRadius, 
pattern radius = 1pt}
\makeatletter
\pgfutil@ifundefined{pgf@pattern@name@_ywf33vic5}{
\pgfdeclarepatternformonly[\mcThickness,\mcSize]{_ywf33vic5}
{\pgfqpoint{0pt}{0pt}}
{\pgfpoint{\mcSize+\mcThickness}{\mcSize+\mcThickness}}
{\pgfpoint{\mcSize}{\mcSize}}
{
\pgfsetcolor{\tikz@pattern@color}
\pgfsetlinewidth{\mcThickness}
\pgfpathmoveto{\pgfqpoint{0pt}{0pt}}
\pgfpathlineto{\pgfpoint{\mcSize+\mcThickness}{\mcSize+\mcThickness}}
\pgfusepath{stroke}
}}
\makeatother

% Pattern Info
 
\tikzset{
pattern size/.store in=\mcSize, 
pattern size = 5pt,
pattern thickness/.store in=\mcThickness, 
pattern thickness = 0.3pt,
pattern radius/.store in=\mcRadius, 
pattern radius = 1pt}
\makeatletter
\pgfutil@ifundefined{pgf@pattern@name@_fv2ow3wzu}{
\pgfdeclarepatternformonly[\mcThickness,\mcSize]{_fv2ow3wzu}
{\pgfqpoint{0pt}{-\mcThickness}}
{\pgfpoint{\mcSize}{\mcSize}}
{\pgfpoint{\mcSize}{\mcSize}}
{
\pgfsetcolor{\tikz@pattern@color}
\pgfsetlinewidth{\mcThickness}
\pgfpathmoveto{\pgfqpoint{0pt}{\mcSize}}
\pgfpathlineto{\pgfpoint{\mcSize+\mcThickness}{-\mcThickness}}
\pgfusepath{stroke}
}}
\makeatother
\tikzset{every picture/.style={line width=0.75pt}} %set default line width to 0.75pt        

\begin{tikzpicture}[x=0.75pt,y=0.75pt,yscale=-1,xscale=1]
%uncomment if require: \path (0,468); %set diagram left start at 0, and has height of 468

%Straight Lines [id:da663212665827146] 
\draw    (10.33,96.59) -- (632.33,96.59) ;
%Straight Lines [id:da8436540836256339] 
\draw [color={rgb, 255:red, 74; green, 144; blue, 226 }  ,draw opacity=1 ]   (31.28,54.5) -- (31.28,137.34) ;
%Straight Lines [id:da2559043593694553] 
\draw [color={rgb, 255:red, 74; green, 144; blue, 226 }  ,draw opacity=1 ]   (167.33,54.5) -- (167.33,137.34) ;
%Shape: Rectangle [id:dp3688139873067019] 
\draw  [color={rgb, 255:red, 74; green, 144; blue, 226 }  ,draw opacity=1 ][dash pattern={on 4.5pt off 4.5pt}] (46.78,76.73) -- (149.87,76.73) -- (149.87,118.48) -- (46.78,118.48) -- cycle ;
%Straight Lines [id:da33217718222742043] 
\draw [color={rgb, 255:red, 208; green, 2; blue, 27 }  ,draw opacity=1 ]   (271.16,55.9) -- (271.16,139.03) ;
%Straight Lines [id:da1803009534785669] 
\draw [color={rgb, 255:red, 208; green, 2; blue, 27 }  ,draw opacity=1 ]   (340.05,55.9) -- (340.05,139.03) ;
%Straight Lines [id:da6944030435838647] 
\draw [color={rgb, 255:red, 65; green, 117; blue, 5 }  ,draw opacity=1 ]   (477.83,55.9) -- (477.83,139.03) ;
%Straight Lines [id:da14234148923093637] 
\draw [color={rgb, 255:red, 65; green, 117; blue, 5 }  ,draw opacity=1 ]   (546.71,55.9) -- (546.71,139.03) ;
%Shape: Rectangle [id:dp6783705994444763] 
\draw  [color={rgb, 255:red, 208; green, 2; blue, 27 }  ,draw opacity=1 ][dash pattern={on 4.5pt off 4.5pt}] (288.29,76.54) -- (323.32,76.54) -- (323.32,117.67) -- (288.29,117.67) -- cycle ;
%Shape: Rectangle [id:dp5983931463926766] 
\draw  [color={rgb, 255:red, 65; green, 117; blue, 5 }  ,draw opacity=1 ][dash pattern={on 4.5pt off 4.5pt}] (494.95,76.54) -- (529.98,76.54) -- (529.98,117.67) -- (494.95,117.67) -- cycle ;
%Shape: Rectangle [id:dp10138653465222902] 
\draw  [color={rgb, 255:red, 208; green, 2; blue, 27 }  ,draw opacity=1 ][pattern=_h7e63vsv8,pattern size=6pt,pattern thickness=0.75pt,pattern radius=0pt, pattern color={rgb, 255:red, 208; green, 2; blue, 27}][dash pattern={on 4.5pt off 4.5pt}] (185.13,77.06) -- (253.79,77.06) -- (253.79,117.41) -- (185.13,117.41) -- cycle ;
%Shape: Rectangle [id:dp37280039891465033] 
\draw  [color={rgb, 255:red, 208; green, 2; blue, 27 }  ,draw opacity=1 ][pattern=_tuf4mjvvs,pattern size=6pt,pattern thickness=0.75pt,pattern radius=0pt, pattern color={rgb, 255:red, 208; green, 2; blue, 27}][dash pattern={on 4.5pt off 4.5pt}] (357.01,77.06) -- (459.77,77.06) -- (459.77,116.46) -- (357.01,116.46) -- cycle ;
%Shape: Rectangle [id:dp9380619317142387] 
\draw  [color={rgb, 255:red, 65; green, 117; blue, 5 }  ,draw opacity=1 ][pattern=_ywf33vic5,pattern size=6pt,pattern thickness=0.75pt,pattern radius=0pt, pattern color={rgb, 255:red, 65; green, 117; blue, 5}][dash pattern={on 4.5pt off 4.5pt}] (565.17,77.74) -- (632.33,77.74) -- (632.33,117.14) -- (565.17,117.14) -- cycle ;
%Shape: Rectangle [id:dp05083527885877159] 
\draw  [color={rgb, 255:red, 65; green, 117; blue, 5 }  ,draw opacity=1 ][pattern=_fv2ow3wzu,pattern size=6pt,pattern thickness=0.75pt,pattern radius=0pt, pattern color={rgb, 255:red, 65; green, 117; blue, 5}][dash pattern={on 4.5pt off 4.5pt}] (357.01,77.4) -- (459.77,77.4) -- (459.77,116.8) -- (357.01,116.8) -- cycle ;
%Straight Lines [id:da7691682141205677] 
\draw [color={rgb, 255:red, 74; green, 144; blue, 226 }  ,draw opacity=1 ]   (60,157.67) -- (60,178.33) ;
%Straight Lines [id:da1447827090597642] 
\draw [color={rgb, 255:red, 208; green, 2; blue, 27 }  ,draw opacity=1 ]   (101,158.67) -- (101,179.33) ;
%Straight Lines [id:da12043872837571745] 
\draw [color={rgb, 255:red, 65; green, 117; blue, 5 }  ,draw opacity=1 ]   (141,158.67) -- (141,179.33) ;

% Text Node
\draw (192.67,42.07) node [anchor=north west][inner sep=0.75pt]  [color={rgb, 255:red, 208; green, 2; blue, 27 }  ,opacity=1 ]  {$\mathfrak{I} _{l}( \gamma ''_{1})$};
% Text Node
\draw (384,41.4) node [anchor=north west][inner sep=0.75pt]  [color={rgb, 255:red, 208; green, 2; blue, 27 }  ,opacity=1 ]  {$\mathfrak{I} _{r}( \gamma ''_{1})$};
% Text Node
\draw (384,132.73) node [anchor=north west][inner sep=0.75pt]  [color={rgb, 255:red, 65; green, 117; blue, 5 }  ,opacity=1 ]  {$\mathfrak{I} _{l}( \gamma ''_{2})$};
% Text Node
\draw (567.33,40.73) node [anchor=north west][inner sep=0.75pt]  [color={rgb, 255:red, 65; green, 117; blue, 5 }  ,opacity=1 ]  {$\mathfrak{I} _{r}( \gamma ''_{2})$};
% Text Node
\draw (82,43.07) node [anchor=north west][inner sep=0.75pt]  [color={rgb, 255:red, 74; green, 144; blue, 226 }  ,opacity=1 ]  {$V( \gamma )$};
% Text Node
\draw (282,43.07) node [anchor=north west][inner sep=0.75pt]  [color={rgb, 255:red, 208; green, 2; blue, 27 }  ,opacity=1 ]  {$V( \gamma ''_{1})$};
% Text Node
\draw (488,43.07) node [anchor=north west][inner sep=0.75pt]  [color={rgb, 255:red, 65; green, 117; blue, 5 }  ,opacity=1 ]  {$V( \gamma ''_{1})$};
% Text Node
\draw (65,161.07) node [anchor=north west][inner sep=0.75pt]  [color={rgb, 255:red, 74; green, 144; blue, 226 }  ,opacity=1 ]  {$\gamma $};
% Text Node
\draw (106,157.07) node [anchor=north west][inner sep=0.75pt]  [color={rgb, 255:red, 208; green, 2; blue, 27 }  ,opacity=1 ]  {$\gamma ''_{1}$};
% Text Node
\draw (146,157.07) node [anchor=north west][inner sep=0.75pt]  [color={rgb, 255:red, 65; green, 117; blue, 5 }  ,opacity=1 ]  {$\gamma ''_{2}$};

\end{tikzpicture}
        \caption{Example where $\gamma_1\cup\gamma_2\subset V(\gamma)^c$. Note that $\mathfrak{I}_r(\gamma''_2)$ is infinite.}
    \end{figure}
    
    Given $\gamma''\in\Gamma''_{2,r}$, consider the two connected components of $V(\Gamma''_{2,r})^c\setminus \mathrm{I}_-(\gamma)$ adjacent to $V(\gamma'')$. We denote the one to the left of $V(\gamma'')$ by $\mathfrak{I}_l(\gamma'')$ and the one to the right of $V(\gamma'')$ by $\mathfrak{I}_r(\gamma'')$. By the distancing property, we have that 
    \begin{align*}
        M2^{ar}\leq \min\{\mathrm{diam}(\mathfrak{I}_l(\gamma'')),\mathrm{diam}(\mathfrak{I}_r(\gamma''))\}.
    \end{align*}
    Furthermore, since $\mathrm{I}_-(\gamma)\neq\emptyset$, at least one of them is finite.

    \begin{figure}[hbt!]
        \centering
        % Pattern Info
 
\tikzset{
pattern size/.store in=\mcSize, 
pattern size = 5pt,
pattern thickness/.store in=\mcThickness, 
pattern thickness = 0.3pt,
pattern radius/.store in=\mcRadius, 
pattern radius = 1pt}
\makeatletter
\pgfutil@ifundefined{pgf@pattern@name@_lqqydj2be}{
\pgfdeclarepatternformonly[\mcThickness,\mcSize]{_lqqydj2be}
{\pgfqpoint{0pt}{-\mcThickness}}
{\pgfpoint{\mcSize}{\mcSize}}
{\pgfpoint{\mcSize}{\mcSize}}
{
\pgfsetcolor{\tikz@pattern@color}
\pgfsetlinewidth{\mcThickness}
\pgfpathmoveto{\pgfqpoint{0pt}{\mcSize}}
\pgfpathlineto{\pgfpoint{\mcSize+\mcThickness}{-\mcThickness}}
\pgfusepath{stroke}
}}
\makeatother

% Pattern Info
 
\tikzset{
pattern size/.store in=\mcSize, 
pattern size = 5pt,
pattern thickness/.store in=\mcThickness, 
pattern thickness = 0.3pt,
pattern radius/.store in=\mcRadius, 
pattern radius = 1pt}
\makeatletter
\pgfutil@ifundefined{pgf@pattern@name@_r272dkp24}{
\pgfdeclarepatternformonly[\mcThickness,\mcSize]{_r272dkp24}
{\pgfqpoint{0pt}{0pt}}
{\pgfpoint{\mcSize+\mcThickness}{\mcSize+\mcThickness}}
{\pgfpoint{\mcSize}{\mcSize}}
{
\pgfsetcolor{\tikz@pattern@color}
\pgfsetlinewidth{\mcThickness}
\pgfpathmoveto{\pgfqpoint{0pt}{0pt}}
\pgfpathlineto{\pgfpoint{\mcSize+\mcThickness}{\mcSize+\mcThickness}}
\pgfusepath{stroke}
}}
\makeatother
\tikzset{every picture/.style={line width=0.75pt}} %set default line width to 0.75pt        

\begin{tikzpicture}[x=0.75pt,y=0.75pt,yscale=-1,xscale=1]
%uncomment if require: \path (0,310); %set diagram left start at 0, and has height of 310

%Straight Lines [id:da7797153400023964] 
\draw    (23,150) -- (674,150) ;
%Straight Lines [id:da7073723413683406] 
\draw [color={rgb, 255:red, 74; green, 144; blue, 226 }  ,draw opacity=1 ]   (64,120.25) -- (64,180.25) ;
%Straight Lines [id:da7994290935040207] 
\draw [color={rgb, 255:red, 74; green, 144; blue, 226 }  ,draw opacity=1 ]   (124,119.75) -- (124,180.25) ;
%Straight Lines [id:da6429060479714264] 
\draw [color={rgb, 255:red, 74; green, 144; blue, 226 }  ,draw opacity=1 ]   (364,120.2) -- (364,179.8) ;
%Straight Lines [id:da5639823327498394] 
\draw [color={rgb, 255:red, 74; green, 144; blue, 226 }  ,draw opacity=1 ]   (644,121) -- (644,180.2) ;
%Shape: Rectangle [id:dp8417392724834546] 
\draw  [color={rgb, 255:red, 74; green, 144; blue, 226 }  ,draw opacity=1 ][dash pattern={on 4.5pt off 4.5pt}] (74.67,130.2) -- (114.33,130.2) -- (114.33,169.4) -- (74.67,169.4) -- cycle ;
%Shape: Rectangle [id:dp20690976547718343] 
\draw  [color={rgb, 255:red, 74; green, 144; blue, 226 }  ,draw opacity=1 ][dash pattern={on 4.5pt off 4.5pt}] (374,131) -- (634.43,131) -- (634.43,170.2) -- (374,170.2) -- cycle ;
%Straight Lines [id:da06353217666165989] 
\draw [color={rgb, 255:red, 208; green, 2; blue, 27 }  ,draw opacity=1 ]   (184,120.6) -- (184,181) ;
%Straight Lines [id:da41146636960966887] 
\draw [color={rgb, 255:red, 208; green, 2; blue, 27 }  ,draw opacity=1 ]   (244,119.8) -- (244,180.2) ;
%Shape: Rectangle [id:dp649575920337987] 
\draw  [color={rgb, 255:red, 74; green, 144; blue, 226 }  ,draw opacity=1 ][dash pattern={on 4.5pt off 4.5pt}] (134,130.16) -- (353.07,130.16) -- (353.07,169.6) -- (134,169.6) -- cycle ;
%Shape: Rectangle [id:dp8449256941476944] 
\draw  [color={rgb, 255:red, 208; green, 2; blue, 27 }  ,draw opacity=1 ][pattern=_lqqydj2be,pattern size=6pt,pattern thickness=0.75pt,pattern radius=0pt, pattern color={rgb, 255:red, 208; green, 2; blue, 27}][dash pattern={on 4.5pt off 4.5pt}] (134,130.16) -- (174.2,130.16) -- (174.2,169.8) -- (134,169.8) -- cycle ;
%Shape: Rectangle [id:dp9846769660395639] 
\draw  [color={rgb, 255:red, 208; green, 2; blue, 27 }  ,draw opacity=1 ][dash pattern={on 4.5pt off 4.5pt}] (194,130) -- (233,130) -- (233,169.4) -- (194,169.4) -- cycle ;
%Shape: Rectangle [id:dp6377481367436688] 
\draw  [color={rgb, 255:red, 208; green, 2; blue, 27 }  ,draw opacity=1 ][pattern=_r272dkp24,pattern size=6pt,pattern thickness=0.75pt,pattern radius=0pt, pattern color={rgb, 255:red, 208; green, 2; blue, 27}][dash pattern={on 4.5pt off 4.5pt}] (254.15,130.4) -- (353.8,130.4) -- (353.8,169.8) -- (254.15,169.8) -- cycle ;
%Straight Lines [id:da21947048039688177] 
\draw [color={rgb, 255:red, 74; green, 144; blue, 226 }  ,draw opacity=1 ]   (574,191.5) -- (574,212.75) ;
%Straight Lines [id:da08214517376014985] 
\draw [color={rgb, 255:red, 208; green, 2; blue, 27 }  ,draw opacity=1 ]   (612.5,192.5) -- (612.5,213.75) ;

% Text Node
\draw (191,181.4) node [anchor=north west][inner sep=0.75pt]  [color={rgb, 255:red, 208; green, 2; blue, 27 }  ,opacity=1 ]  {$V( \gamma ''_{3})$};
% Text Node
\draw (281.8,181.4) node [anchor=north west][inner sep=0.75pt]  [color={rgb, 255:red, 208; green, 2; blue, 27 }  ,opacity=1 ]  {$\mathfrak{I}_{r}( \gamma ''_{3})$};
% Text Node
\draw (127.8,181.4) node [anchor=north west][inner sep=0.75pt]  [color={rgb, 255:red, 208; green, 2; blue, 27 }  ,opacity=1 ]  {$\mathfrak{I}_{l}( \gamma ''_{3})$};
% Text Node
\draw (76,97.4) node [anchor=north west][inner sep=0.75pt]  [color={rgb, 255:red, 74; green, 144; blue, 226 }  ,opacity=1 ]  {$I_{-}( \gamma )$};
% Text Node
\draw (220,97.4) node [anchor=north west][inner sep=0.75pt]  [color={rgb, 255:red, 74; green, 144; blue, 226 }  ,opacity=1 ]  {$I_{+}( \gamma )$};
% Text Node
\draw (480,97.4) node [anchor=north west][inner sep=0.75pt]  [color={rgb, 255:red, 74; green, 144; blue, 226 }  ,opacity=1 ]  {$I_{-}( \gamma )$};
% Text Node
\draw (575.5,195.4) node [anchor=north west][inner sep=0.75pt]  [color={rgb, 255:red, 74; green, 144; blue, 226 }  ,opacity=1 ]  {$\gamma $};
% Text Node
\draw (614,191.4) node [anchor=north west][inner sep=0.75pt]  [color={rgb, 255:red, 208; green, 2; blue, 27 }  ,opacity=1 ]  {$\gamma ''_{3}$};

\end{tikzpicture}
        \caption{Example where $\gamma''_3\subset \mathrm{I}_+(\gamma)$.}
    \end{figure}
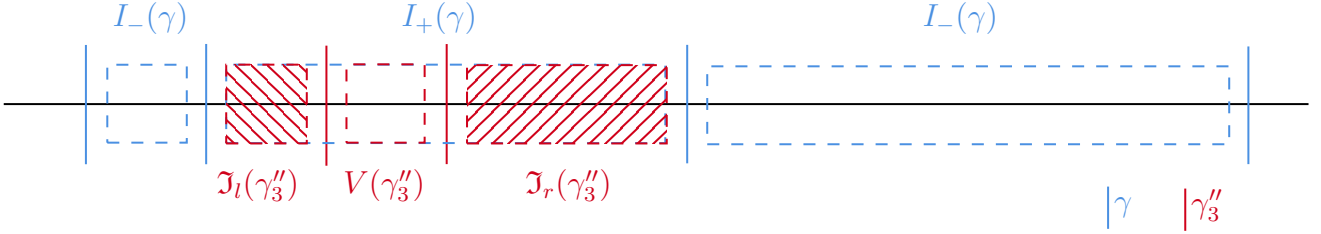
    
    Let $x\in \mathrm{I}_-(\gamma)$. If $x$ lies to the left of $V(\gamma'')$, in which case $\mathfrak{I}_l(\gamma'')$ is finite. Then,
    \begin{align*}
        J_{xy}\leq J_{xz},
    \end{align*}
    for $y\in V(\gamma'')$ and $z\in \mathfrak{I}_l(\gamma'')$. Similarly, if $x$ lies to the right of $V(\gamma'')$, in which case $\mathfrak{I}_r(\gamma'')$ is finite. Then,
    \begin{align*}
        J_{xy}\leq J_{xw}
    \end{align*}
    for $y\in V(\gamma'')$ and $w\in \mathfrak{I}_r(\gamma'')$. 
    
    Recalling that $V(\gamma'')\cap \mathrm{I}_-(\gamma)=\emptyset$, we get that any point $x$ of $\mathrm{I}_-(\gamma)$ lies either to the left or the right of $V(\gamma'')$. Hence,
    \begin{align*}
        \sum_{\substack{x\in \mathrm{I}_-(\gamma)\\y\in V(\gamma'')}}J_{xy}&=\sum_{\substack{x\in \mathrm{I}_-(\gamma)\\y\in V(\gamma'')\\x<V(\gamma'')}}J_{xy}+\sum_{\substack{x\in \mathrm{I}_-(\gamma)\\y\in V(\gamma'')\\x>V(\gamma'')}}J_{xy}\leq \sum_{\substack{x\in \mathrm{I}_-(\gamma)\\y\in V(\gamma'')\\x<V(\gamma'')}}J_{xz}+\sum_{\substack{x\in \mathrm{I}_-(\gamma)\\y\in V(\gamma'')\\x>V(\gamma'')}}J_{xw}\\
        &=\lvert V(\gamma'')\rvert\sum_{\substack{x\in \mathrm{I}_-(\gamma)\\x< V(\gamma'')}}J_{xz}+\lvert V(\gamma'')\rvert\sum_{\substack{x\in \mathrm{I}_-(\gamma)\\x>V(\gamma'')}}J_{xw},
    \end{align*}
    for $z\in \mathfrak{I}_l(\gamma'')$ and $w\in \mathfrak{I}_r(\gamma'')$. If both $\{x\in \mathrm{I}_-(\gamma):x<V(\gamma'')\}$ and $\{x\in \mathrm{I}_-(\gamma):x>V(\gamma'')\}$ are non-empty. Then, by summing over both $z\in \mathfrak{I}_l(\gamma'')$ and $w\in \mathfrak{I}_r(\gamma'')$, we obtain that
    \begin{align*}
        \left(\lvert \mathfrak{I}_l(\gamma'')\rvert+\rvert \mathfrak{I}_r(\gamma'')\rvert\right)\sum_{\substack{x\in \mathrm{I}_-(\gamma)\\y\in V(\gamma'')}}J_{xy}\leq \lvert V(\gamma'')\rvert\sum_{\substack{x\in \mathrm{I}_-(\gamma)\\z\in \mathfrak{I}_l(\gamma'')\sqcup \mathfrak{I}_r(\gamma'')}}J_{xz}.
    \end{align*}
    Finally,
    \begin{align*}
        \sum_{\substack{x\in \mathrm{I}_-(\gamma)\\y\in V(\gamma'')}}J_{xy}\leq \frac{\lvert V(\gamma'')\rvert}{\min\{\lvert \mathfrak{I}_l(\gamma'')\rvert,\lvert \mathfrak{I}_r(\gamma'')\rvert\}}\sum_{\substack{x\in \mathrm{I}_-(\gamma)\\y\in \mathfrak{I}_l(\gamma'')\sqcup \mathfrak{I}_r(\gamma'')}}J_{xy}\leq \frac{2^{r+1}}{M2^{ar}}\sum_{\substack{x\in \mathrm{I}_-(\gamma)\\y\in \mathfrak{I}_l(\gamma'')\sqcup \mathfrak{I}_r(\gamma'')}}J_{xy}.
    \end{align*}
    One can readily check that the same bound is true in case $\mathrm{I}_-(\gamma)$ lies to left, or to the right, of $V(\gamma'')$.

    Noting that each connected component of $V(\Gamma''_{2,r})^c\setminus \mathrm{I}_-(\gamma)$ is adjacent to the volume of at most two contours in $\Gamma''_{2,r}$, we obtain that
    \begin{align*}
        \sum_{\substack{x\in \mathrm{I}_-(\gamma)\\y\in V(\Gamma''_{2,r})}}J_{xy}&=\sum_{\gamma''\in\Gamma''_{2,r}}\sum_{\substack{x\in \mathrm{I}_-(\gamma)\\y\in V(\gamma'')}}J_{xy}\leq \sum_{\gamma''\in\Gamma''_{2,r}}\frac{2^{r+1}}{M2^{ar}}\sum_{\substack{x\in \mathrm{I}_-(\gamma)\\y\in \mathfrak{I}_l(\gamma'')\sqcup \mathfrak{I}_r(\gamma'')}}J_{xy}\\
        &= \frac{2}{M2^{(a-1)r}}\sum_{\gamma''\in\Gamma''_{2,r}}\sum_{\substack{x\in \mathrm{I}_-(\gamma)\\y\in \mathfrak{I}_l(\gamma'')\sqcup \mathfrak{I}_r(\gamma'')}}J_{xy}\leq \frac{4}{M2^{(a-1)r}}\sum_{\substack{x\in \mathrm{I}_-(\gamma)\\y\in V(\Gamma''_{2,r})^c\setminus \mathrm{I}_-(\gamma)}}J_{xy}\\
        &\leq \frac{2}{M2^{(a-1)r}}H_J(\gamma).
    \end{align*}
    Summing over $r$, we get
    \begin{align}
        \sum_{\substack{x\in \mathrm{I}_-(\gamma)\\ y\in V(\Gamma''_2)}}J_{xy}\leq \sum_{r=0}^{n_0(\gamma)}\sum_{\substack{x\in \mathrm{I}_-(\gamma)\\y\in V(\Gamma''_{2,r})}}J_{xy}\leq\frac{2}{M}\left[\sum_{r=0}^\infty\frac{1}{2^{(a-1)r}}\right]H_J(\gamma).
    \end{align}
    Proceeding in the exact same way for $\Gamma'$, we obtain the bound
    \begin{align}
        \sum_{\substack{x\in V(\Gamma')\\y\in \mathrm{I}_-(\gamma)^c}}J_{xy}\leq\frac{1}{M}\frac{2^a}{2^{a-1}-1}H_J(\gamma).
    \end{align}
    Finally,
    \begin{align*}
        H_J(\Gamma)-H_J(\Gamma\hspace{-0.1cm}\setminus\hspace{-0.1cm}\gamma)&\geq H_J(\gamma)-\sum_{\substack{x\in V(\Gamma')\\y\in \mathrm{I}_-(\gamma)^c}}4J_{xy}-\sum_{\substack{x\in \mathrm{I}_-(\gamma)\\y\in V(\Gamma'')}}4J_{xy}\\
        &\geq H_J(\gamma)-\sum_{\substack{x\in V(\Gamma')\\y\in \mathrm{I}_-(\gamma)^c}}4J_{xy}-\sum_{\substack{x\in \mathrm{I}_-(\gamma)\\y\in V(\Gamma''_1)}}4J_{xy}-\sum_{\substack{x\in \mathrm{I}_-(\gamma)\\y\in V(\Gamma''_2)}}4J_{xy}\\
        &\geq H_J(\gamma)-\frac{4}{M}\frac{2^a}{2^{a-1}-1}H_J(\gamma)-4\alpha M^{1-\alpha}H_J(\gamma)-\frac{4}{M}\frac{2^a}{2^{a-1}-1}H_J(\gamma)\\
        &=\left(1-\frac{1}{M}\frac{2^{a+3}}{2^{a-1}-1}-4\alpha M^{1-\alpha}\right)H_J(\gamma).
    \end{align*}
    The lemma has, thus, been proved with $C(a)=2^{a+3}\left(2^{a-1}-1\right)^{-1}$.
    \end{proof}

\subsection{A geometric way of estimating \texorpdfstring{$H_J(\gamma)$}{TEXT}}\label{geometricway}

In this subsection and the next one, we follow in detail the arguments developed in \cite{Frohlich.Spencer.82} introducing a second type of cover size dependent on the distancing parameters $(M,a)$. We establish its relation with the Hamiltonian here, and obtain a criterion for models where phase transition may be proved by the energy-entropy argument.

Given $\gamma\in\Gamma$ an exterior contour, we define
\begin{align*}
    \mathcal{I}'_n(\gamma)\coloneqq\{I\in\mathcal{I}_n(\gamma):\mathrm{dist}(I,I')\geq 2M2^{an}, \forall I'\in\mathcal{I}_n(\gamma), I'\neq I\},
\end{align*}
where $1\leq n\leq n_0(\gamma)\coloneqq\lfloor\log_2\mathrm{diam}(\gamma)\rfloor$ and $\mathcal{I}_n(\gamma)$ denotes the canonical open covering of $\gamma$ by open intervals with diameter $2^n$ introduced in the previous section. We call $I\in\mathcal{I}'_n(\gamma)$ an \textit{isolated interval}. Finally, we arrive at the notion of \textit{isolated covering size}:
\begin{align}
    \mathcal{N}'(\gamma)\coloneqq\lvert\gamma\rvert+\sum_{n=1}^{n_0(\gamma)}\lvert\mathcal{I}'_n(\gamma)\rvert.
\end{align}
Our objective in this subsection is to show that $H_J(\gamma)\geq \epsilon_J\mathcal{N}'(\gamma)$, where $\epsilon_J>0$ is a constant depending only on the interaction. That is, independent of both $1<a<2$ and $M>1$. In which case, the notion of distant covering size $\mathcal{N}'(\gamma)$ carries all the contour-geometric information. It is at this point that the argument for the existence of phase transition may fail, in case the coupling $J$ decays too fast.

\begin{remark}
    We note that $\gamma\cap I$ is odd for $I\in\mathcal{I}'_n(\gamma)$, $1\leq n\leq n_0(\gamma)$. Indeed, consider the decomposition of $\gamma$ given by $\{\gamma\cap I,\gamma\setminus I\}$, then
    \begin{align*}
        \mathrm{dist}(\gamma\cap I,\gamma\setminus I)&\geq\mathrm{dist}(I,\gamma\setminus I)\geq\min_{\substack{I'\in \mathcal{I}_n(\gamma)\\I'\neq I}}\mathrm{dist}(I,I')\geq 2M2^{an}\\
        &>M2^{an}=M\mathrm{diam}(I)^a\geq M(\min\{\mathrm{diam}(\gamma\cap I),\mathrm{diam}(\gamma\setminus I)\})^a.
    \end{align*}
    Since $\gamma$ is irreducible, both $\gamma\cap I$ and $\gamma\setminus I$ have odd cardinality.
\end{remark}

To produce the constant $\epsilon_J$, it is convenient to introduce the following sequence:
\begin{align}
    D_n(J)\coloneqq\sum_{\substack{x\in I_n^-\\y\in I_n^+}}J_{xy},
\end{align}
where $n\geq 1$. $I_n^+=[2^{n-1},2^n-1]$ and $I_n^-=-I_n^+$. It turns out a bound of the type $H_J(\gamma)\geq\epsilon_J \mathcal{N}'(\gamma)$, with $\epsilon_J>0$, can be obtained provided 
\begin{align}
    \inf_{n\in\mathbb{N}}D_n(J)\coloneqq D(J)>0.
\end{align}
In fact, as we shall show, one obtains $\epsilon_J=2\min\{J(1),D(J)\}$. We recall that $J(1)=1$.

\begin{remark}
    Since we suppose that $r\mapsto J(r)$ is decreasing, we have that
    \begin{align*}
        2^{2(n-1)}J(2^{n+1})\leq D_n(J)\leq 2^{2(n-1)}J(2^n).
    \end{align*}
    In which case, $D(J)>0$ if, and only if, 
    \begin{align}
        \liminf_{r\rightarrow\infty}r^2J(r)>0.
    \end{align}
    This is certainly true if $J(r)=r^{-\alpha}$ with $1<\alpha\leq 2$.
\end{remark}

Let $1\leq n\leq n_0(\gamma)$ and $(a,b)=I\in\mathcal{I}'_n(\gamma)$. We define $\mathrm{I}_+=[b,b+2^{n-1}-1]$ and $\mathrm{I}_-=[a-2^{n-1}+1,a]$. Note that $\bar{I}\coloneqq I_-\sqcup I\sqcup I_+$ is a closed interval with diameter $2^{n+1}-2$. Furthermore, $I$ and $\bar{I}$ are centered around the same point. By translation invariance, it holds that  
\begin{align}\label{krak}
    \sum_{\substack{x\in \mathrm{I}_-\\y\in \mathrm{I}_+}}J_{xy}=D_n(J).
\end{align}
Note that $\gamma\cap \mathrm{I}_+=\gamma\cap \mathrm{I}_-=\emptyset$. To prove the estimate $H_J(\gamma)\geq\epsilon_J\mathcal{N}'(\gamma)$, we need to make sure that, by summing \eqref{krak} over all isolated intervals, no pair $(x,y)$ appears twice.

\begin{lemma}\label{intervalospnc}
    Let $I\in\mathcal{I}'_n(\gamma)$ and $I'\in\mathcal{I}'_m(\gamma)$, with $1\leq n,m\leq n_0(\gamma)$ and $I\neq I'$. Then,
    \begin{align}\label{intervals}
        [(I_-\times I_+)\cup (I_+\times I_-)]\cap[(I_-'\times I_+')\cup (I_+'\times I_-')]=\emptyset.
    \end{align}
\end{lemma}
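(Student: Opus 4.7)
The plan is to separate the claim into two cases according to whether the two isolated intervals live at the same scale. Throughout, I will use a single observation: any pair $(x,y)$ in the interaction set $(I_-\times I_+)\cup(I_+\times I_-)$ must satisfy $\{x,y\}\subset\bar I$, and since one of the two coordinates lies in $I_-$ (to the left of $I$) while the other lies in $I_+$ (to the right of $I$), the separation $|y-x|$ is trapped between the diameter of $I$ and the diameter of $\bar I$:
\begin{align*}
2^n \;=\; \mathrm{diam}(I) \;\leq\; |y-x| \;\leq\; \mathrm{diam}(\bar I) \;=\; 2^{n+1}-2.
\end{align*}
The analogous chain holds for any pair in the primed interaction set.

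For the case $n\neq m$, assume without loss of generality that $n<m$, so $m\geq n+1$. Any pair in the primed set then satisfies $|y-x|\geq 2^m\geq 2^{n+1}$, whereas any pair in the unprimed set satisfies $|y-x|\leq 2^{n+1}-2$. These two separation ranges are disjoint, so no pair can lie in both sets.

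For the case $n=m$, I will instead show the stronger statement that the supporting integer intervals $\bar I$ and $\bar I'$ are themselves disjoint. Writing $I=(a,b_I)$ and $I'=(a',b_{I'})$ with $b_I\leq a'$, the isolation assumption gives $\mathrm{dist}(I,I')\geq 2M\cdot 2^{an}$. Since $\bar I$ reaches as far right as $b_I+2^{n-1}-1$ and $\bar I'$ starts at $a'-2^{n-1}+1$, a direct subtraction yields
\begin{align*}
\mathrm{dist}(\bar I,\bar I') \;\geq\; 2M\cdot 2^{an} - 2^n + 2,
\end{align*}
and the hypotheses $M>1$, $a>1$, $n\geq 1$ make this strictly positive (in fact at least $2^n+2$, with slack). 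Hence $\bar I\cap\bar I'=\emptyset$, and again no pair can belong to both interaction sets.

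No deep obstacle is anticipated; the proof is essentially a packing statement. The one thing to keep track of is the bookkeeping of the two length-$2^{n-1}$ extensions that build $\bar I$ from $I$, and the corresponding role of the factor of $2$ in the isolation distance $2M\cdot 2^{an}$ — this factor is precisely what absorbs the two extensions, while the condition $a>1$ provides the asymptotic comfort that makes the scaling work.
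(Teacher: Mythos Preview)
Your proof is correct. The equal-scale case $n=m$ matches the paper's argument.

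For the cross-scale case $n\neq m$, your route is genuinely different and cleaner than the paper's. The paper splits into two sub-cases: when $I\cap I'=\emptyset$ it invokes the isolation distance $2M\cdot 2^{am}$ to force all four intersections $I_\pm\cap I'_\pm$ to be empty, and when $I\cap I'\neq\emptyset$ it uses a geometric containment lemma (since $\mathrm{diam}(\bar I)<\mathrm{diam}(I')$, one of $I_\pm$ must be swallowed by $I'$, hence disjoint from both $I'_\pm$). Your observation that the pair separation $|y-x|$ is trapped in the dyadic shell $[2^n,2^{n+1}-2]$ bypasses both sub-cases at once: these shells are disjoint across scales, so the interaction sets cannot share a pair. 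A pleasant bonus is that your cross-scale argument never uses the isolation hypothesis --- it is a pure packing statement valid for \emph{any} intervals in the coverings $\mathcal{I}_n(\gamma)$ and $\mathcal{I}_m(\gamma)$. The paper's geometric argument, on the other hand, makes the picture (Figure~5) more concrete, which may help when one later wants to see exactly where the flanking intervals $I_\pm$ sit relative to $I'$.
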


\begin{proof}
If $n=m$, the proof is direct since $\mathrm{dist}(I,I')\geq 2M2^{an}$. Therefore, suppose without loss of generality that $n<m$. If $I\cap I'=\emptyset$, there exists $I''\in\mathcal{I}_m(\gamma)$ such that $I\cap I''\neq\emptyset$. By definition, it holds that 
\begin{align*}
    \mathrm{dist}(I,I')\geq\mathrm{dist}(I',I'')-\mathrm{diam}(I)\geq 2M2^{am}-2^n> 2^m.
\end{align*}
In which case, $I_-\cap I'_-=I_+\cap I'_-=I_+\cap I'_-=I_+\cap I'_+=\emptyset$.

In general, let $I$ and $I'$ be real intervals such that $\mathrm{diam}(I)<\mathrm{diam}(I')$. Let $\bar{I}$ be a third interval such that $\bar{I}=I_-\sqcup I\sqcup I_+$, where $I_-,I_+$ are non-empty intervals satisfying $\mathrm{dist}(I_-,I_+)>0$. If $\mathrm{diam}(\bar{I})<\mathrm{diam}(I')$ and $I\cap I'\neq\emptyset$, then there exists $i\in\{-,+\}$ such that $I_i\subset I'$. In our case, $\mathrm{diam}(\bar{I})=2^{n+1}-2$ and $\mathrm{diam}(I')=2^m\geq 2^{n+1}$. Therefore, if $I\cap I'\neq \emptyset$, either $I_-\subset I'$, in which case, $I'_+\cap I_-=I'_-\cap I_-=\emptyset$, or $I_+\subset I'$, in which case, $I'_+\cap I_+=I'_-\cap I_+=\emptyset$. The lemma has, thus, been proved.

\begin{figure}[hbt!]
        \centering
        \input{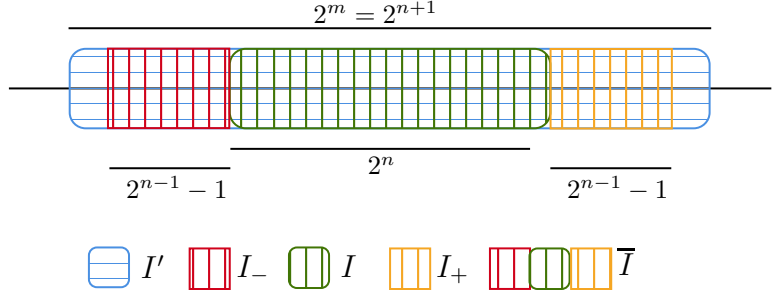}
        \caption{Limiting case where $m=n+1$ and, $I$ and $I'$ are concentric.}
    \end{figure}
\end{proof}

Since $\gamma$ is irreducible and $I$ is isolated, it is true that $\lvert\gamma\cap I\rvert$ is odd. This implies that $\chi_\gamma(x,y)=1$ if $x\in \mathrm{I}_+$ and $y\in \mathrm{I}_-$.  By Lemma \ref{intervalospnc}, we obtain that
\begin{align}\label{disjointunion}
    \bigsqcup_{n=1}^{n_0(\gamma)}\bigsqcup_{I\in\mathcal{I}'_n(\gamma)}\mathrm{I}_-\times \mathrm{I}_+\subset \{(x,y):\chi_\gamma(x,y)=1\},
\end{align}
where the squared union symbol denotes that it is an union of disjoint sets. We can now prove the bound $H_J(\gamma)\geq\epsilon_J\mathcal{N}'(\gamma)$, for $\epsilon_J>0$, provided $\liminf_{r\rightarrow\infty}r^2J(r)>0.$ 

\begin{lemma}
    Suppose that $D(J)>0$. Then, there exists $\epsilon_J>0$, independent of both $1<a<2$ and $M>1$, such that
    \begin{align}
        H_J(\gamma)\geq \epsilon_J\mathcal{N}'(\gamma),
    \end{align}
    for any irreducible contour $\gamma$.
\end{lemma}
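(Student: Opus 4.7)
The plan is to exhibit $H_J(\gamma)$ as an over-count of two families of pairs $(x,y)$ on which $\chi_\gamma(x,y)=1$: (i) the nearest-neighbor pair across each spin flip, and (ii) the pair-blocks $\mathrm{I}_-\times \mathrm{I}_+$ attached to each isolated interval at each scale. Once we certify that these two families are pairwise disjoint and that each block contributes at least $\min\{J(1),D(J)\}$, the bound $H_J(\gamma)\ge \epsilon_J\mathcal{N}'(\gamma)$ with $\epsilon_J=2\min\{J(1),D(J)\}$ will follow by summing.

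First, I would revisit the formula
\begin{align*}
    H_J(\gamma)=2\sum_{x<y} J_{xy}\chi_\gamma(x,y).
\end{align*}
For every spin flip $b\in\gamma$, the nearest-neighbor pair $(b-\tfrac12,\,b+\tfrac12)\in\mathbb{Z}\times\mathbb{Z}$ satisfies $|\gamma\cap[b-\tfrac12,b+\tfrac12]|=1$, hence $\chi_\gamma=1$, contributing $2J(1)$. These $|\gamma|$ pairs are all distinct. On the other hand, for each $1\le n\le n_0(\gamma)$ and each isolated interval $I\in\mathcal{I}_n'(\gamma)$, the remark preceding the construction of $I_\pm$ gives $|\gamma\cap I|$ odd while $\gamma\cap \mathrm{I}_\pm=\emptyset$, so any $(x,y)\in \mathrm{I}_-\times \mathrm{I}_+$ also satisfies $\chi_\gamma(x,y)=1$. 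By translation invariance, the inner sum equals exactly $D_n(J)$, and by the definition of $D(J)$ it is bounded below by $D(J)$.

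Second, I would assemble the disjointness. Lemma \ref{intervalospnc} already guarantees that the union $\bigsqcup_n\bigsqcup_{I\in\mathcal{I}_n'(\gamma)}\mathrm{I}_-\times \mathrm{I}_+$ is genuinely disjoint across scales and isolated intervals; see equation \eqref{disjointunion}. It remains to check that these isolated-interval pairs are disjoint from the nearest-neighbor pairs. This is immediate: for any $I\in\mathcal{I}_n'(\gamma)$ with $n\ge1$, if $x\in \mathrm{I}_-$ and $y\in \mathrm{I}_+$ then $|x-y|\ge 2^n\ge 2$, whereas the nearest-neighbor pairs have $|x-y|=1$. Putting everything together,
\begin{align*}
    H_J(\gamma) \;\ge\; 2J(1)\,|\gamma| \;+\; 2\sum_{n=1}^{n_0(\gamma)}\sum_{I\in\mathcal{I}_n'(\gamma)} D_n(J) \;\ge\; 2\min\{J(1),D(J)\}\!\left(|\gamma|+\sum_{n=1}^{n_0(\gamma)}|\mathcal{I}_n'(\gamma)|\right),
\end{align*}
which is exactly $\epsilon_J\mathcal{N}'(\gamma)$ with $\epsilon_J:=2\min\{J(1),D(J)\}$. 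Note $\epsilon_J$ depends on neither $M$ nor $a$.

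The only conceptually nontrivial step is the disjointness bookkeeping of \emph{step two}, and the heavy lifting there is already done by Lemma \ref{intervalospnc}; the rest is recording that nearest-neighbor pairs sit at a strictly smaller scale than any $\mathrm{I}_-\times \mathrm{I}_+$ block. The hypothesis $D(J)>0$ is used exactly where one replaces the scale-dependent constant $D_n(J)$ by the uniform $D(J)$, which is precisely where the argument would break for couplings decaying faster than $r^{-2}$.
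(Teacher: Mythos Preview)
Your proof is correct and follows essentially the same approach as the paper: split $H_J(\gamma)$ into the nearest-neighbor contribution $2J(1)|\gamma|$ and the remaining pairs with $y>x+1$, then use the disjoint union \eqref{disjointunion} to lower bound the latter by $2D(J)\sum_{n=1}^{n_0(\gamma)}|\mathcal{I}_n'(\gamma)|$, arriving at $\epsilon_J=2\min\{J(1),D(J)\}$. The paper's proof is simply a terser version of yours (it uses $J(1)=1$ and absorbs your disjointness check between nearest-neighbor pairs and the $\mathrm{I}_-\times\mathrm{I}_+$ blocks into the decomposition $\sum_{x<y}=\sum_{y=x+1}+\sum_{x+1<y}$).
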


\begin{proof}
    By Equation \eqref{disjointunion}, we have that
    \begin{align*}
        H_J(\gamma)&=2\sum_{x<y}J_{xy}\chi_\gamma(x,y)=2J(1)\lvert\gamma\rvert+2\sum_{x+1<y}J_{xy}\chi_{\gamma}(x,y)\\
        &\geq 2\lvert\gamma\rvert+2\sum_{n=1}^{n_0(\gamma)}\sum_{I\in\mathcal{I}'_n(\gamma)}\sum_{\substack{x\in \mathrm{I}_-\\y\in \mathrm{I}_+}}J_{xy}\geq 2\lvert\gamma\rvert + 2D(J)\sum_{n=1}^{n_0(\gamma)}\sum_{I\in\mathcal{I}'_n(\gamma)}1\\
        &\geq 2\min\{1,D(J)\}\mathcal{N}'(\gamma).
    \end{align*}
    The lemma has, thus, been proved.
\end{proof}

\subsection{Entropy estimates. The connection between cover sizes}\label{entropysizes}

All that remains is establishing a quantitative relation between $\mathcal{N}(\gamma)$ and $\mathcal{N}'(\gamma)$, an estimate of the type $\mathcal{N}(\gamma)\leq c(M,a)\mathcal{N}'(\gamma)$. We define $s:\mathbb{Z}\rightarrow\mathbb{Z}$
\begin{align}
    s(n)=\left\lfloor\frac{n-\log_2(8M)}{a}\right\rfloor,
\end{align}
which is a natural modification to the function connecting scales present in \cite{Frohlich.Spencer.82}. A similar idea is used in the multidimensional case, see \cite{Affonso.2021} and \cite{Johanes}. This function is chosen so that the following recursive relation is true for $n\geq a+\log_2(8M)$:
\begin{align*}
    \lvert \mathcal{I}_n(\gamma)\rvert\leq \frac{1}{2}\lvert \mathcal{I}_{s(n)}\rvert+\frac{1}{2}\lvert\mathcal{I}'_{s(n)}\rvert.
\end{align*}
This motivates us to define
\begin{align}\label{definicaonbarra}
    \bar{n}=\lceil a+\log_2(8M)\rceil,
\end{align}
so that $s(n)>0$ if, and only if, $n\geq \bar{n}$. Employing Claim 4.2 and the same reasoning as in Lemma 4.1 in \cite{https://doi.org/10.48550/arxiv.1711.04720}, we obtain the following recursive relation:

\begin{lemma}
    If $n\geq\bar{n}$, it holds that
    \begin{align}\label{connectcoversizes}
        \left\lvert\mathcal{I}_n(\gamma)\right\rvert\leq \frac{1}{2}\left\lvert\mathcal{I}_{s(n)}(\gamma)\right\rvert+\frac{1}{2}\left\lvert\mathcal{I}'_{s(n)}(\gamma)\right\rvert.
    \end{align}
\end{lemma}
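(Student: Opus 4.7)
The plan is to set $m = s(n)$ and exploit the quantitative consequence of the definition of $s(n)$: namely $am \leq n - \log_2(8M)$, which yields $2M \cdot 2^{am} \leq 2^n/4$ together with $2^m \ll 2^n$. The hypothesis $n \geq \bar{n}$ (see \eqref{definicaonbarra}) ensures in addition that $m \geq 1$, so the scale-$m$ cover is nontrivial.

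Following the strategy of Claim 4.2 and Lemma 4.1 in the cited reference, I would construct a surjective assignment $\phi: \mathcal{I}_{s(n)}(\gamma) \to \mathcal{I}_n(\gamma)$, sending each scale-$m$ interval $I'$ to the scale-$n$ interval naturally associated with it (for instance, the one containing its leftmost spin flip). Then partition $\mathcal{I}_n(\gamma) = \mathcal{A} \sqcup \mathcal{B}$ with $\mathcal{A} = \{I : |\phi^{-1}(I)| \geq 2\}$ (``merged'') and $\mathcal{B} = \{I : |\phi^{-1}(I)| = 1\}$ (``unmerged''). The key claim is that if $I \in \mathcal{B}$ with unique preimage $I'$, then $I'$ must be isolated at scale $m$: otherwise, there would exist $I'' \in \mathcal{I}_{s(n)}(\gamma) \setminus \{I'\}$ with $\mathrm{dist}(I', I'') < 2M \cdot 2^{am} \leq 2^n/4$, and since the combined span of $I' \cup I''$ is well below $2^n$, the greedy scale-$n$ construction would be forced to place $I''$ into the same scale-$n$ interval as $I'$, contradicting $|\phi^{-1}(I)| = 1$. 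Therefore $|\mathcal{B}| \leq |\mathcal{I}'_{s(n)}(\gamma)|$.

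From here the inequality follows by a simple double counting: since $\phi$ is a surjection and each scale-$m$ interval lies in exactly one fiber, one has $|\mathcal{I}_{s(n)}(\gamma)| = \sum_{I \in \mathcal{A}} |\phi^{-1}(I)| + |\mathcal{B}| \geq 2|\mathcal{A}| + |\mathcal{B}|$, hence
\[
|\mathcal{I}_n(\gamma)| = |\mathcal{A}| + |\mathcal{B}| \leq \frac{|\mathcal{I}_{s(n)}(\gamma)| - |\mathcal{B}|}{2} + |\mathcal{B}| = \frac{1}{2}|\mathcal{I}_{s(n)}(\gamma)| + \frac{1}{2}|\mathcal{B}| \leq \frac{1}{2}|\mathcal{I}_{s(n)}(\gamma)| + \frac{1}{2}|\mathcal{I}'_{s(n)}(\gamma)|.
\]

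The main obstacle is making $\phi$ genuinely well-defined. Since the covers $\mathcal{I}_n$ and $\mathcal{I}_m$ are each built greedily starting from their own leftmost uncovered spin flip, one cannot a priori expect every scale-$m$ block to sit entirely inside a single scale-$n$ block: small examples show that the spin flips of a scale-$m$ interval can, in principle, straddle a scale-$n$ boundary. The quantitative input $2M \cdot 2^{am} \leq 2^n/4$, together with the much smaller diameter $2^m$ of scale-$m$ intervals, must be used to rule out this pathology precisely in the situations where $|\phi^{-1}(I)| = 1$; this is the content of Claim 4.2 of the cited reference, and verifying it rigorously is the principal technical step of the lemma.
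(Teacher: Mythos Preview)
Your surjection-and-double-counting framework has a genuine gap: the map $\phi$ you describe (sending $I'\in\mathcal{I}_{s(n)}(\gamma)$ to the scale-$n$ interval containing its leftmost spin flip) is well-defined but need \emph{not} be surjective, so the identity $|\mathcal{I}_n(\gamma)|=|\mathcal{A}|+|\mathcal{B}|$ on which your count rests can fail. A concrete instance: with $a$ slightly above $1$ and $M$ slightly above $1$ one can arrange $\bar n=5$ and $s(5)=1$; take $\gamma=\{\tfrac12,\tfrac{63}{2},\tfrac{65}{2},\tfrac{2001}{2}\}$. The canonical cover at scale $m=1$ is $\{(0,2),(31,33),(1000,1002)\}$ and at scale $n=5$ it is $\{(0,32),(32,64),(1000,1032)\}$. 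The interval $(31,33)$ straddles the scale-$5$ boundary: its leftmost spin flip $\tfrac{63}{2}$ lies in $(0,32)$, so $\phi$ sends it there, and the scale-$5$ interval $(32,64)$ receives no preimage at all. You correctly flag straddling as the delicate point, but it manifests as $|\phi^{-1}(I)|=0$ rather than as a complication in the $|\phi^{-1}(I)|=1$ analysis, and the quantitative bound $2M\cdot 2^{am}\le 2^n/4$ cannot prevent a short scale-$m$ interval from sitting astride a scale-$n$ endpoint.

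The paper's argument runs in the opposite direction and thereby sidesteps the issue: rather than comparing the two canonical covers via a map, it builds an \emph{auxiliary} covering of $\gamma$ by open intervals of diameter $2^n$ --- one interval for each isolated $I'\in\mathcal{I}'_{s(n)}(\gamma)$, and at most $\tfrac12\,|\mathcal{I}_{s(n)}(\gamma)\setminus\mathcal{I}'_{s(n)}(\gamma)|$ intervals for the non-isolated ones, since every non-isolated $I'$ has a neighbour within $2M\cdot 2^{as(n)}<2^n/4$ and a greedy left-to-right sweep groups them in twos or threes under a single $2^n$-interval. The inequality then follows from the \emph{minimality} of the canonical cover $\mathcal{I}_n(\gamma)$ among all covers by $2^n$-intervals. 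No map between the two canonical covers is ever needed, so the straddling phenomenon is simply irrelevant.
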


\begin{remark}
    For a further generalization, which is very useful in the multidimensional case, see Proposition 3.14 in \cite{Affonso.2021}.
\end{remark}

\begin{lemma}
    Let $M$ be sufficiently large and $1<a<2$. Then,
    \begin{align}\label{formulafeiabagarai}
        \mathcal{N}(\gamma)\leq c(M,a)\mathcal{N}'(\gamma),
    \end{align}
    for some $c(M,a)>0$.
\end{lemma}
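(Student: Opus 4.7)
The plan is to decompose $\mathcal{N}(\gamma)=\sum_{n=0}^{n_0(\gamma)}|\mathcal{I}_n(\gamma)|$ by splitting at the scale $\bar n$. For $n<\bar n$ the recursion \eqref{connectcoversizes} is unavailable, so I would use only the crude bound $|\mathcal{I}_n(\gamma)|\le|\gamma|$; this contributes at most $\bar n\,|\gamma|\le\bar n\,\mathcal{N}'(\gamma)$. For $n\in[\bar n,n_0(\gamma)]$, I would iterate \eqref{connectcoversizes} a number $K_n$ of times, where $K_n$ is the first $k$ with $s^k(n)<\bar n$, and close the iteration by means of $|\mathcal{I}_{s^{K_n}(n)}(\gamma)|\le|\gamma|$. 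This yields
\begin{align*}
|\mathcal{I}_n(\gamma)|\le 2^{-K_n}|\gamma|+\sum_{j=1}^{K_n}2^{-j}\,|\mathcal{I}'_{s^j(n)}(\gamma)|.
\end{align*}
Summing over $n\in[\bar n,n_0(\gamma)]$ and swapping the order of summation in the double sum reduces matters to estimating the preimage counts $\#\{n:s^j(n)=m\}$ and the tail $\sum_{n\ge\bar n}2^{-K_n}$.

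The hard part will be the preimage estimate. The naive argument --- iterating the fact that $s$ has at most $\lceil a\rceil=2$ preimages at each point --- gives $2^j$, after which $\sum_{j}2^{-j}\cdot 2^j$ diverges and the whole scheme collapses. The plan is to do better by observing that $s$ is non-decreasing with $s(n+1)-s(n)\in\{0,1\}$ (a property inherited by $s^j$ by induction), while the explicit iteration of $s(n)=\lfloor(n-\log_2(8M))/a\rfloor$ gives $s^j(n)=n\,a^{-j}+O(1)$ uniformly in $j$, with the additive error controlled by the convergent geometric series $\sum_{k\ge 0}a^{-k}$. These two facts together show that each $s^{-j}(\{m\})$ is a single integer interval of length at most $C_a\,a^j$, and therefore
\begin{align*}
\sum_{n\ge\bar n}\sum_{j=1}^{K_n}2^{-j}\,|\mathcal{I}'_{s^j(n)}(\gamma)|\;\le\;C_a\sum_{j\ge 1}(a/2)^j\,\mathcal{N}'(\gamma),
\end{align*}
and the geometric series converges precisely because $1<a<2$.

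For the remaining contribution $|\gamma|\sum_{n\ge\bar n}2^{-K_n}$, I would use $s^k(n)\le(n-\log_2(8M))/a^k$ to deduce $K_n\ge\log_a(n/\bar n)-O(1)$, whence $2^{-K_n}\le C\,(n/\bar n)^{-\log_a 2}$. Since $1<a<2$ forces $\log_a 2>1$, the series $\sum_{n\ge\bar n}(n/\bar n)^{-\log_a 2}$ converges to at most a constant multiple of $\bar n$, so this piece contributes at most $C'(a)\,\bar n\,|\gamma|\le C'(a)\,\bar n\,\mathcal{N}'(\gamma)$. Gathering the three contributions produces the claimed inequality $\mathcal{N}(\gamma)\le c(M,a)\,\mathcal{N}'(\gamma)$, with a constant that degenerates as $1/(2-a)$ when $a\nearrow 2$ and depends on $M$ only through $\bar n$.
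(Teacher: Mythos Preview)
Your proposal is correct and follows essentially the same route as the paper's proof: both split the sum at $\bar n$, iterate the recursion \eqref{connectcoversizes} down to a scale below $\bar n$, bound the preimage sets $\{n:s^j(n)=m\}$ by $C_a\,a^j$, and control the remaining tail $\sum_n 2^{-K_n}$ via $K_n\gtrsim\log_a n$ so that it sums like $\sum n^{-\log_a 2}$. Your $K_n$ coincides with the paper's $l(n)$, and your interval observation about $s^{-j}(\{m\})$ is just a repackaging of the paper's sandwich $\underline{s}^m(n)\le s^m(n)\le\bar{s}^m(n)$ with $\bar{s}^m-\underline{s}^m\le a/(a-1)$. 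One small imprecision: the claim $s^j(n)=na^{-j}+O(1)$ is not literally true uniformly in $M$ (there is an $M$-dependent shift of order $\log_2(8M)/(a-1)$), but what you actually need---that the \emph{variation} of $s^j(n)-na^{-j}$ in $n$ is bounded by $a/(a-1)$, coming from the accumulated floor errors---is exactly what you invoke, and this is $M$-independent, so the preimage bound $|s^{-j}(\{m\})|\le C_a\,a^j$ stands.
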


\begin{proof}
Define $l:\{n\in\mathbb{N}:n\geq\bar{n}\}\rightarrow\mathbb{N}$ by
\begin{align}
    l(n)=\max\{m\in\mathbb{N}:s^m(n)>0\},
\end{align}
where $s^m$ denotes $m$ compositions of $s$ with itself. By iterating \eqref{connectcoversizes}, we get that 
\begin{align*}
    \lvert\mathcal{I}_n(\gamma)\rvert\leq 2^{-l(n)}\left\lvert\mathcal{I}_{s^{l(n)}(n)}(\gamma)\right\rvert+\sum_{m=1}^{l(n)}2^{-m}\left\lvert\mathcal{I}'_{s^m(n)}(\gamma)\right\rvert.
\end{align*}
Hence,
\begin{align*}
    \mathcal{N}(\gamma)&=\sum_{n=0}^{n_0(\gamma)}\lvert\mathcal{I}_n(\gamma)\rvert\leq \bar{n}\lvert\gamma\rvert+\sum_{n=\bar{n}}^{n_0(\gamma)}\left[2^{-l(n)}\lvert\gamma\rvert+\sum_{m=1}^{l(n)}2^{-m}\left\lvert\mathcal{I}'_{s^m(n)}(\gamma)\right\rvert\right]\\
    &\leq\left[\bar{n}+\sum_{n=\bar{n}}^{n_0(\gamma)}2^{-l(n)}\right]\mathcal{N}'(\gamma)+\sum_{n=\bar{n}}^{n_0(\gamma)}\sum_{m=1}^{l(n)}2^{-m}\left\lvert\mathcal{I}'_{s^m(n)}(\gamma)\right\rvert\\
\end{align*}
Good estimates of $l(n)$ are, therefore, needed.

To do so, we begin by estimating $s(n)$. Let $n\geq\bar{n}$, we define
\begin{align}
    \underline{s}(n)\coloneqq\frac{n-\log_2(8M)}{a}-1\leq s(n)\leq \frac{n-\log_2(8M)}{a}\eqqcolon \overline{s}(n).
\end{align}
Similarly, we define
\begin{align}
    \underline{l}(n)\coloneqq\max\{m\in\mathbb{N}: \underline{s}^m(n)\geq 1\}\leq l(n)\leq \max\{m\in\mathbb{N}:\overline{s}^m(n)\geq 1\}\eqqcolon \overline{l}(n).
\end{align}
A direct linear recurrence argument yields
\begin{align}
    &\overline{s}^m(n)=a^{-m}n-\left[\frac{1-a^{-m}}{a-1}\right]\log_2(8M)\label{estimatings}\\
    &\underline{s}^m(n)=a^{-m}n-\left[\frac{1-a^{-m}}{a-1}\right]\left[\log_2(8M)+a\right]\label{estimatings2},
\end{align}
for $m\geq 0$. Hence,
\begin{align}\label{boundln}
    -1+\log_a\frac{(a-1)n+a+\log_2(8M)}{(2a-1)+\log_2(8M)}\leq l(n)\leq \log_a\frac{(a-1)n+\log_2(8M)}{(a-1)+\log_2(8M)}.
\end{align}
It is also necessary to estimate the cardinality of $S_{m,j}\coloneqq\{n:s^m(n)=j\}$, for $m,j\geq 1$. By \eqref{estimatings} and \eqref{estimatings2}, we arrive at
\begin{align*}
    \bar{s}^m(n)-\underline{s}^m(n)=a\frac{1-a^{-m}}{a-1}\leq \frac{a}{a-1}.
\end{align*}
Let $n\in S_{m,j}$, then
\begin{align}
    j=s^m(n)\leq \bar{s}^m(n)\leq \underline{s}^m(n)+\frac{a}{a-1}\leq s^m(n)+\frac{a}{a-1}=j+\frac{a}{a-1}.
\end{align}
The first and third inequalities imply that
\begin{align*}
    n\geq a^mj-\frac{a}{a-1}+\frac{a^m-1}{a-1}\log_2(8M).
\end{align*}
The second and fourth inequalities imply that
\begin{align*}
    n\leq a^mj+\frac{a^{m+1}}{a-1}+\frac{a^m-1}{a-1}\log_2(8M).
\end{align*}
Hence,
\begin{align}\label{bounds}
    \lvert S_{m,j}\rvert\leq \frac{3a}{a-1}a^m.
\end{align}

On one hand, by \eqref{boundln}, we obtain that
\begin{align*}
    \sum_{n=\bar{n}}^{n_0(\gamma)}2^{-l(n)}\leq \sum_{n=\bar{n}}^\infty 2^{-l(n)}&\leq 2\sum_{n=\bar{n}}^\infty\left[\frac{(2a-1)+\log_2(8M)}{(a-1)n+a+\log_2(8M)}\right]^{\log_a2}\\
    &\leq 2\left[\frac{2a-1}{a-1}\right]^{\log_a2}\left[\log_2(8M)\right]^{\log_a2}\sum_{n=1}^\infty\frac{1}{n^{\log_a2}}\\
    &\leq 2\left[\frac{2a}{a-1}\right]^{\log_a2}\zeta(\log_a2)\left[\log_2(8M)\right]^{\log_a2}.
\end{align*}
On the other hand, by \eqref{bounds}, we obtain that
\begin{align*}
    \sum_{n=\bar{n}}^{n_0(\gamma)}\sum_{m=1}^{l(n)}2^{-m}\left\lvert\mathcal{I}'_{s^m(n)}(\gamma)\right\rvert&\leq \sum_{n=\bar{n}}^{n_0(\gamma)}\sum_{m=1}^\infty\sum_{j=1}^{n_0(\gamma)}2^{-m}\left\lvert\mathcal{I}'_j(\gamma)\right\rvert\chi_{\{s^m(n)=j\}}\leq\sum_{j=1}^{n_0(\gamma)}\left[\sum_{m=1}^\infty2^{-m}\left\lvert S_{m,j}\right\rvert\right]\left\lvert\mathcal{I}'_j(\gamma)\right\rvert\\
    &\leq \sum_{j=1}^{n_0(\gamma)}\left[\frac{3a}{a-1}\sum_{m=1}^\infty\left(\frac{a}{2}\right)^m\right]\left\lvert\mathcal{I}'_j(\gamma)\right\rvert\leq\frac{6a}{(a-1)(2-a)}\mathcal{N}'(\gamma).
\end{align*}

Recalling \eqref{definicaonbarra} and that $1<a<2$, we have that
\begin{align*}
    \bar{n}\leq 1+a+\log_2(8M)\leq 2\left[\frac{2a}{a-1}\right]^{\log_a2}\zeta\left(\log_a2\right)[\log_2(8M)]^{\log_a2}.
\end{align*}
Taking $M>1$ sufficiently large, so that 
\begin{align*}
    \frac{3}{2-a}\leq \log_2(8M),
\end{align*}
we bound the final term
\begin{align*}
    \frac{6a}{(a-1)(2-a)}\leq 2\left[\frac{2a}{a-1}\right]^{\log_a2}\zeta\left(\log_a2\right)[\log_2(8M)]^{\log_a2}.
\end{align*}
Finally, we arrive at \eqref{formulafeiabagarai} with
\begin{align*}
    c(M,a)=6\left[\frac{2a}{a-1}\right]^{\log_a2}\zeta\left(\log_a2\right)[\log_2(8M)]^{\log_a2}.
\end{align*}
\end{proof}

\begin{comment}
\begin{remark}
    If we take $a>2$, we obtain a bound of the type
    \begin{align*}
        \mathcal{N}(\gamma)^{\frac{1}{\log_2a}}\leq c(M,a)\mathcal{N}'(\gamma),
    \end{align*}
    which is not sufficient for the Peierls' argument showing phase transition for large $\beta$.
\end{remark}
\end{comment}

\subsection{The Peierls' argument proving phase transition}\label{peierls}

Let $1<a<2$ and $1<\alpha\leq 2$. One could take $a=\frac{3}{2}$ as in \cite{Frohlich.Spencer.82}. Choose $M=M(\alpha,a)$ such that 
\begin{align*}
    H_J(\Gamma)-H_J(\Gamma\hspace{-0.1cm}\setminus\hspace{-0.1cm}\gamma)\geq \frac{1}{2}H_J(\gamma),
\end{align*}
where $\Gamma\coloneqq\Gamma(\sigma,M,a)$ and $\gamma\in\Gamma$ is any external $(M,a)$-contour. Then,
\begin{align}
    \mathcal{N}(\gamma)\leq 6\left[\frac{2a}{a-1}\right]^{\log_a2}\zeta\left(\log_a2\right)(\log_2(8M))^{\log_a2}\frac{1}{\epsilon_J}H_J(\gamma)\coloneqq K(\alpha,a,M)H_J(\gamma)=K H_J(\gamma).
\end{align}
%We shall prove that
%\begin{align}
%    \frac{1}{2}\langle 1-\sigma_0\rangle_{\alpha,\beta,L}^+\leq \Sigma_\alpha(\beta),
%\end{align}
%for all $L\in\mathbb{N}$, where 
%\begin{align*}
%    \lim_{\beta\rightarrow\infty}\Sigma_\alpha(\beta)=0.
%\end{align*}
%Since the thermodynamic limit is well-establish for the Ising model, we arrive at
%\begin{align}
%    \lim_{\beta\rightarrow\infty}\langle\sigma_0\rangle_{\alpha,\beta}^+=+1.
%\end{align}
%The spin-flip symmetry yields us
%\begin{align}
%    \lim_{\beta\rightarrow\infty}\langle\sigma_0\rangle_{\alpha,\beta}^-=-\lim_{\beta\rightarrow\infty}\langle\sigma_0\rangle_{\alpha,\beta}^+=-1,
%\end{align}
%proving that, for $\beta$ sufficiently large, $\langle\cdot\rangle_{\alpha,\beta}^+\neq\langle\cdot\rangle_{\alpha,\beta}^-$. 
By standard arguments described in Section \ref{outline}, we prove the following theorem:

%\subsection{The nontriviality of the magnetization for large \texorpdfstring{$\beta$}{TEXT}}

\begin{theorem}
    Let $1<\alpha\leq 2$. There exists $\bar{\beta}=\bar{\beta}(\alpha)$ such that
    \begin{align*}
        \langle\sigma_0\rangle_{\alpha,\beta}^-\neq\langle\sigma_0\rangle^+_{\alpha,\beta},
    \end{align*}
    if $\beta\geq\bar{\beta}$. That is, there is phase transition.
\end{theorem}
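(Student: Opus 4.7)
The plan is to follow verbatim the Peierls' scheme sketched in Subsection \ref{outline}, using as inputs the three technical inputs already in place: the energy estimate from Lemma \ref{energyestimates}, the geometric bound $H_J(\gamma)\geq \epsilon_J \mathcal{N}'(\gamma)$ with $\epsilon_J=2\min\{1,D(J)\}>0$ (which holds since $1<\alpha\leq2$ guarantees $D(J)>0$), and the cover-size comparison $\mathcal{N}(\gamma)\leq c(M,a)\mathcal{N}'(\gamma)$. First I would fix $a\in(1,2)$, say $a=3/2$, and then pick $M=M(\alpha,a)$ large enough so that simultaneously $1-C(a)/M-4\alpha M^{1-\alpha}\geq 1/2$ and the smallness condition $3(2-a)^{-1}\leq \log_2(8M)$ needed for the cover-size comparison holds. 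Chaining the three inputs, every external contour $\gamma$ of the $(M,a)$-partition $\Gamma$ of any $\sigma\in\Omega^+$ satisfies
\begin{equation*}
H_J(\Gamma)-H_J(\Gamma\setminus\gamma)\geq \tfrac{1}{2}H_J(\gamma)\geq \tfrac{\epsilon_J}{2}\mathcal{N}'(\gamma)\geq \tfrac{\epsilon_J}{2c(M,a)}\mathcal{N}(\gamma)=:C_3\,\mathcal{N}(\gamma).
\end{equation*}

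Second, I would run the Peierls' bound on $\langle\tfrac{1}{2}(1-\sigma_0)\rangle^+_{J,\beta,L}$. Any $\sigma\in\Omega^+_L$ with $\sigma_0=-1$ has an external contour $\gamma\in\mathcal{E}_{\mathrm{ext}}(\Gamma(\sigma,M,a))$ with $0\in \mathrm{I}_-(\gamma)\subset V(\gamma)$. The map $\Gamma\mapsto\Gamma\setminus\gamma$ is injective onto the set of $(M,a)$-partitions compatible with $\gamma$ being insertable as an external contour (since $\Gamma$ is well-ordered, removing an external element leaves a valid $(M,a)$-partition, which is exactly the partition of the configuration obtained by flipping $\mathrm{I}_-(\gamma)$). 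Hence
\begin{equation*}
\langle \mathbf{1}_{\sigma_0=-1}\rangle^+_{J,\beta,L}\leq \sum_{\gamma:0\in V(\gamma)}e^{-\beta C_3\mathcal{N}(\gamma)}=\sum_{R=0}^{\infty}e^{-\beta C_3 R}\bigl|\{\gamma\in\Omega^*:0\in V(\gamma),\,\mathcal{N}(\gamma)=R\}\bigr|.
\end{equation*}
Using Theorem \ref{entropication}, this is bounded by $\sum_{R\geq 0}e^{(C_2-\beta C_3)R}$, which tends to $0$ as $\beta\to\infty$, \emph{uniformly in $L$}. In particular, there exists $\bar\beta=\bar\beta(\alpha)$ (take $\bar\beta>C_2/C_3$ with a small margin so the geometric series is strictly below $1/2$) such that $\langle \sigma_0\rangle^+_{J,\beta,L}\geq \delta>0$ for all $\beta\geq\bar\beta$ and all $L$.

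Third, to conclude the existence of distinct infinite-volume Gibbs measures, I would pass to the thermodynamic limit along $L\to\infty$. Since the interaction is ferromagnetic and the $+$-boundary condition is extremal, the standard monotone limit yields a translation-invariant Gibbs measure $\langle\cdot\rangle^+_{J,\beta}$ (see \cite{FV-Book,georgii.gibbs.measures}) with $\langle\sigma_0\rangle^+_{J,\beta}\geq \delta>0$. The global spin-flip symmetry of $H_J$ implies $\langle\sigma_0\rangle^-_{J,\beta}=-\langle\sigma_0\rangle^+_{J,\beta}\leq -\delta<0$, so the two measures differ, which is phase transition.

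I do not expect any real obstacle, since every substantial ingredient has been proven above: the energy-entropy mismatch is exactly the content of the combined lemmas; the only things to verify carefully are the choice of $M$ (to make the energy constant positive and to satisfy the hypothesis of the cover-size comparison lemma simultaneously) and the injectivity of the external-contour erasure map needed to pass from the sum over $\Gamma\ni\gamma$ to the $\gamma$-free partition function. If one wants to avoid invoking the monotone thermodynamic limit, an equivalent route is to note that the uniform bound $\langle\mathbf{1}_{\sigma_0=-1}\rangle^+_{J,\beta,L}<1/2$ together with the corresponding bound under minus boundary condition already shows that $\langle\sigma_0\rangle^+_{J,\beta,L}\neq \langle\sigma_0\rangle^-_{J,\beta,L}$ for every $L$, from which nonuniqueness follows by the usual compactness argument on the set of finite-volume states.
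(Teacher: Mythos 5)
Your proposal is correct and follows essentially the same route as the paper: fix $a\in(1,2)$, choose $M$ large enough that Lemma \ref{energyestimates} yields $H_J(\Gamma)-H_J(\Gamma\setminus\gamma)\geq\tfrac12 H_J(\gamma)$ (and that the cover-size comparison applies), chain this with $H_J(\gamma)\geq\epsilon_J\mathcal{N}'(\gamma)$ and $\mathcal{N}(\gamma)\leq c(M,a)\mathcal{N}'(\gamma)$, and then run the standard Peierls argument of Subsection \ref{outline} together with Theorem \ref{entropication}. The extra care you take with the injectivity of the contour-erasure map and the passage to the thermodynamic limit only makes explicit what the paper delegates to ``standard arguments.''
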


\section{Stability of phase transition under small perturbations}\label{stability}

Let $h:\mathbb{Z}\rightarrow\mathbb{R}_{\geq0}$ be a function describing an external field. We define $E_h:\Omega^*\rightarrow\mathbb{R}_{\geq 0}$ by
\begin{align}
    E_h(\gamma)=\sum_{x\in \mathrm{I}_-(\gamma)}2h_x.
\end{align}
Similarly, if $\Gamma\Subset\Omega^*$ with $\gamma\cap\gamma'=\emptyset$ for any distinct $\gamma,\gamma'\in\Gamma$. We define
\begin{align}
    E_h(\Gamma)\doteq \sum_{x\in N(\Gamma)}2h_x=E_h(\cup_{\gamma\in\Gamma}\gamma).
\end{align}

Once again, we introduce slight modifications in the way of writing the field contribution to remove the necessity of introducing boxes. We shall write it directly via collections of spin flips
\begin{align}
    &H^+_{J,h}(\gamma)=2\hspace{-0.3 cm}\sum_{\substack{x\in \mathrm{I}_-(\gamma)\\y\in \mathrm{I}_-(\gamma)^c}}J_{xy}+2\hspace{-0.3 cm}\sum_{x\in \mathrm{I}_-(\gamma)}h_x=H_J(\gamma)+E_h(\gamma),\\[0.3 cm]
    &H^-_{J,h}(\gamma)=H_J(\gamma)-E_h(\gamma).
\end{align}

\begin{remark}
    Since Gibbs measures are not changed by adding constants, these two formulae are equivalent to writing Hamiltonians in the following way
    \begin{align*}
        &H^+_{J,h}(\sigma)=\sum_{x<y}J_{xy}(1-\sigma_x\sigma_y)+\sum_xh_x(1-\sigma_x),\\
        &H^-_{J,h}(\sigma)=\sum_{x<y}J_{xy}(1-\sigma_x\sigma_y)-\sum_xh_x(1+\sigma_x).
    \end{align*}
    Note that there is no need for $h$ to be non-negative.
\end{remark}

For $A\Subset\mathbb{Z}$, define $\widehat{h}:\mathbb{Z}\rightarrow\mathbb{R}_{\geq 0}$ by
\begin{align}
    \widehat{h}_x\doteq\begin{cases}
        h_x &\textrm{ if }x\in A^c,\\
        0 &\textrm{ otherwise.}
    \end{cases}
\end{align}
We will not carry $A$ in the notation of the truncated field since for our case of interested, it is sufficient to take $A=[-R,R]$ with $R>0$ suitably chosen.

Let us introduce a criterion that is crucial for the stability of the phase diagram. We shall suppose that there exist $\eta\coloneqq \eta(J,\widehat{h})\in (0,1)$ and $A\Subset\mathbb{Z}$ such that
\begin{align}\label{criterion1}
    E_{\hat{h}}(\gamma)\leq \eta H_J(\gamma),
\end{align}
for all $\gamma\in\Omega^*$.

\begin{remark}
    Since $\sum_x\lvert h_x - \widehat{h}_x \rvert<\infty$, there is phase transition for $(J,h)$ if, and only if, there is phase transition for $(J,\widehat{h})$. For proof that finite energy perturbations don't destroy phase transition, see Georgii's book \cite{georgii.gibbs.measures}.
\end{remark}

\begin{lemma}
    Let $\Gamma\Subset\Omega^*$ be well-ordered and $\gamma\in\mathcal{E}_{\mathrm{ext}}(\Gamma)$. Then,
    \begin{align}
        \lvert E_h(\Gamma)-E_h(\Gamma\hspace{-0.1cm}\setminus\hspace{-0.1cm}\gamma)\rvert\leq E_h(\gamma).
    \end{align}
\end{lemma}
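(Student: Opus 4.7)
The plan is to use Lemma \ref{posneg} to decompose $N(\Gamma)$ and $N(\Gamma \setminus \gamma)$ into disjoint pieces, isolate the contribution coming from the external contour $\gamma$ alone, and then exploit the fact that $h \geq 0$ so that the difference is bounded by a sum of two non-negative quantities whose total equals $E_h(\gamma)$.

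More concretely, I would begin by applying items 1 and 2 of Lemma \ref{posneg} to write
\begin{align*}
    E_h(\Gamma) &= E_h(\Gamma\setminus\Gamma(\gamma)) + E_h(\Gamma(\gamma)),\\
    E_h(\Gamma\setminus\gamma) &= E_h(\Gamma\setminus\Gamma(\gamma)) + E_h(\Gamma_+(\gamma)) + E_h(\Gamma_-(\gamma)).
\end{align*}
Both decompositions are additive with respect to $h$ because each $N(\cdot)$ above is a disjoint union. Subtracting gives
\begin{align*}
    E_h(\Gamma)-E_h(\Gamma\setminus\gamma) = E_h(\Gamma(\gamma)) - E_h(\Gamma_+(\gamma)) - E_h(\Gamma_-(\gamma)).
\end{align*}

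Next I would apply item 3 of Lemma \ref{posneg}, namely $N(\Gamma(\gamma)) = N(\Gamma_+(\gamma)) \sqcup [\mathrm{I}_-(\gamma)\setminus N(\Gamma_-(\gamma))]$, which yields $E_h(\Gamma(\gamma)) = E_h(\Gamma_+(\gamma)) + \sum_{x \in \mathrm{I}_-(\gamma)\setminus N(\Gamma_-(\gamma))} 2h_x$. Substituting, the $E_h(\Gamma_+(\gamma))$ cancels and we obtain
\begin{align*}
    E_h(\Gamma)-E_h(\Gamma\setminus\gamma) = \sum_{x\in \mathrm{I}_-(\gamma)\setminus N(\Gamma_-(\gamma))} 2h_x - \sum_{x \in N(\Gamma_-(\gamma))} 2h_x,
\end{align*}
where I have used that $N(\Gamma_-(\gamma)) \subset \mathrm{I}_-(\gamma)$ (since every $\gamma' \in \Gamma_-(\gamma)$ satisfies $V(\gamma') \subset \mathrm{I}_-(\gamma)$) to rewrite $E_h(\Gamma_-(\gamma)) = \sum_{x \in N(\Gamma_-(\gamma))} 2h_x$.

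Finally I would observe that the two sums on the right are non-negative (since $h \geq 0$) and their sum is exactly $\sum_{x \in \mathrm{I}_-(\gamma)} 2h_x = E_h(\gamma)$, because $\mathrm{I}_-(\gamma)$ partitions into $N(\Gamma_-(\gamma))$ and its complement inside $\mathrm{I}_-(\gamma)$. Hence the difference is of the form $A - B$ with $A, B \geq 0$ and $A + B = E_h(\gamma)$, so $|A - B| \leq A + B = E_h(\gamma)$. No step is a serious obstacle; the only point requiring care is keeping track of which subcollections of $\Gamma$ contribute, which is entirely handled by Lemma \ref{posneg}.
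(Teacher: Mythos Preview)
Your proof is correct and follows essentially the same route as the paper: both apply Lemma~\ref{posneg} to reduce $E_h(\Gamma)-E_h(\Gamma\setminus\gamma)$ to the difference $\sum_{x\in \mathrm{I}_-(\gamma)\setminus N(\Gamma_-(\gamma))}2h_x-\sum_{x\in N(\Gamma_-(\gamma))}2h_x$, and then use non-negativity of $h$ together with $N(\Gamma_-(\gamma))\subset \mathrm{I}_-(\gamma)$ to bound the absolute value by $E_h(\gamma)$.
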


\begin{proof}
    Recalling the properties of well-ordered collections of collections of spin flips, we obtain
    \begin{align*}
        E_h(\Gamma)-E_h(\Gamma\hspace{-0.1cm}\setminus\hspace{-0.1cm}\gamma)&=\sum_{x\in N(\Gamma)}2h_x-\hspace{-0.3cm}\sum_{x\in N(\Gamma\setminus\gamma)}\hspace{-0.3 cm}2h_x \quad \text{(Lemma \ref{posneg})}\\[0.2 cm]
        &=\sum_{x\in N(\Gamma(\gamma))}\hspace{-0.3cm}2h_x-\hspace{-0.3cm}\sum_{x\in N(\Gamma_+(\gamma))}\hspace{-0.3cm}2h_x-\hspace{-0.3cm}\sum_{x\in N(\Gamma_-(\gamma))}\hspace{-0.3cm}2h_x \quad \text{(Lemma \ref{posneg})}\\[0.2 cm]
        &=\hspace{-0.7cm}\sum_{x\in \mathrm{I}_-(\gamma)\setminus N(\Gamma_-(\gamma))}\hspace{-0.6cm}2h_x-\sum_{x\in N(\Gamma_-(\gamma
        ))}\hspace{-0.3cm}2h_x,
    \end{align*}
    where we recall that $\Gamma_\omega(\gamma)=\{\gamma'\in\Gamma:\gamma'\subset \mathrm{I}_\omega(\gamma)\}$, $\omega\in\{-,+\}$, and $\Gamma(\gamma)=\{\gamma\}\sqcup\Gamma_+(\gamma)\sqcup\Gamma_-(\gamma)$.

    Then, by the non-negativity of $h$, we obtain that
    \begin{align*}
        \lvert E_h(\Gamma)-E_h(\Gamma\hspace{-0.1cm}\setminus\hspace{-0.1cm}\gamma)\rvert\leq \sum_{x\in \mathrm{I}_-(\gamma)\setminus N(\Gamma_-(\gamma))}\hspace{-0.7cm}2h_x+\sum_{x\in N(\Gamma_-(\gamma))}\hspace{-0.3cm}2h_x=\sum_{x\in \mathrm{I}_-(\gamma)}\hspace{-0.1 cm}2h_x=E_h(\gamma).
    \end{align*}
    The lemma has, thus, been proved.
\end{proof}

\begin{remark}
    This lemma holds for $\hat{h}$ for any $A\Subset\mathbb{Z}$.
\end{remark}

\begin{theorem}
    Let $h:\mathbb{Z}\rightarrow\mathbb{R}_{\geq 0}$ be a function and $\alpha\in (1,2]$. Suppose there exist $A\Subset\mathbb{Z}$ and $\eta\in (0,1)$ such that
    \begin{align*}
        \eta\sum_{\substack{x\in \Lambda\\y\in \Lambda^c}}\lvert x-y\rvert^{-\alpha}\geq \sum_{x\in \Lambda}\hat{h}_x,
    \end{align*}
    for all $\Lambda\Subset\mathbb{Z}$. Then, there exists $\bar{\beta}=\bar{\beta}(J,\hat{h})$ such that
    \begin{align*}
        \langle\sigma_0\rangle^-_{\alpha,\hat{h},\beta}\neq\langle\sigma_0\rangle^+_{\alpha,\hat{h},\beta},
    \end{align*}
    if $\beta\geq\bar{\beta}$. That is, the introduction of a small enough decaying field does not destroy the phenomenon of phase transition.
\end{theorem}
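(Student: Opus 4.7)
The plan is to run the Peierls' argument of Section \ref{peierls} with the pure-interaction Hamiltonian $H_J$ replaced by $H^{\pm}_{J,\hat{h}}=H_J\pm E_{\hat{h}}$, absorbing the field contribution directly into the energy estimate via the theorem's hypothesis. The first step is to verify that this hypothesis is precisely the criterion \eqref{criterion1}: applying it with $\Lambda=\mathrm{I}_-(\gamma)$ and recalling that $H_J(\gamma)=2F_{\mathrm{I}_-(\gamma)}$, one obtains
\begin{equation*}
E_{\hat{h}}(\gamma)\;=\;2\sum_{x\in \mathrm{I}_-(\gamma)}\hat{h}_x\;\leq\; 2\eta\sum_{\substack{x\in \mathrm{I}_-(\gamma)\\ y\in \mathrm{I}_-(\gamma)^c}}J_{xy}\;=\;\eta\, H_J(\gamma).
\end{equation*}

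Next, I would combine Lemma \ref{energyestimates} with the bound $|E_{\hat{h}}(\Gamma)-E_{\hat{h}}(\Gamma\setminus\gamma)|\leq E_{\hat{h}}(\gamma)$ established just above. For any external contour $\gamma\in\mathcal{E}_{\mathrm{ext}}(\Gamma)$,
\begin{equation*}
H^{\pm}_{J,\hat{h}}(\Gamma)-H^{\pm}_{J,\hat{h}}(\Gamma\setminus\gamma)\;\geq\; \Bigl(1-\tfrac{C}{M}-4\alpha M^{1-\alpha}\Bigr)H_J(\gamma)-E_{\hat{h}}(\gamma)\;\geq\; \Bigl(1-\tfrac{C}{M}-4\alpha M^{1-\alpha}-\eta\Bigr)H_J(\gamma).
\end{equation*}
Since $\eta<1$, one can choose $M=M(\alpha,a,\eta)$ large enough that the bracketed coefficient exceeds some $\delta>0$. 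Combining with $H_J(\gamma)\geq\epsilon_J\mathcal{N}'(\gamma)\geq \tfrac{\epsilon_J}{c(M,a)}\mathcal{N}(\gamma)$ from Subsections \ref{geometricway}--\ref{entropysizes} yields a bound of the form $H^{\pm}_{J,\hat{h}}(\Gamma)-H^{\pm}_{J,\hat{h}}(\Gamma\setminus\gamma)\geq K\mathcal{N}(\gamma)$ with a constant $K=K(\alpha,a,\eta)>0$.

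With this energy--entropy inequality in hand, I would carry out the Peierls computation verbatim as in Section \ref{outline}, now under the $+$-Gibbs measure associated with $H^{+}_{J,\hat{h}}$:
\begin{equation*}
\left\langle\tfrac{1}{2}(1-\sigma_0)\right\rangle^{+}_{\alpha,\hat{h},\beta,L}\;\leq\;\sum_{\gamma:\,0\in V(\gamma)}e^{-\beta K\mathcal{N}(\gamma)}\;\leq\;\sum_{R=0}^{\infty}e^{(C_2-\beta K)R}\;\xrightarrow{\beta\to\infty}\;0,
\end{equation*}
invoking the entropy bound of Theorem \ref{entropication}. Hence $\langle\sigma_0\rangle^{+}_{\alpha,\hat{h},\beta}\to 1$. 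Running the identical argument with $H^{-}_{J,\hat{h}}$ and $-$-boundary conditions yields $\langle\sigma_0\rangle^{-}_{\alpha,\hat{h},\beta}\to -1$, so the two infinite-volume Gibbs measures are distinct for $\beta$ sufficiently large. Since $h-\hat{h}$ is a finite-energy perturbation, the phase transition persists for the original field $h$.

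The main obstacle is the loss of the global spin-flip symmetry in the presence of a non-zero field: unlike the purely ferromagnetic case, one cannot deduce the $-$-side from the $+$-side by the involution $\sigma\mapsto-\sigma$. What makes the argument robust is precisely that the field-stability lemma is stated with absolute values, so the inequality $H^{\pm}_{J,\hat{h}}(\Gamma)-H^{\pm}_{J,\hat{h}}(\Gamma\setminus\gamma)\geq \delta H_J(\gamma)$ holds for \emph{both} signs simultaneously; the two Peierls arguments then run identically up to the choice of Hamiltonian.
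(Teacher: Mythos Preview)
Your proposal is correct and follows essentially the same approach as the paper: verify that the hypothesis is the criterion \eqref{criterion1}, combine Lemma~\ref{energyestimates} with the field-stability lemma $\lvert E_{\hat h}(\Gamma)-E_{\hat h}(\Gamma\setminus\gamma)\rvert\le E_{\hat h}(\gamma)$ to obtain $H^{\pm}_{J,\hat h}(\Gamma)-H^{\pm}_{J,\hat h}(\Gamma\setminus\gamma)\ge(1-\eta-C/M-4\alpha M^{1-\alpha})H_J(\gamma)$, choose $M$ large, and run the Peierls argument for both boundary conditions. The paper makes the specific choice $\delta=\tfrac{1-\eta}{2}$ where you leave $\delta>0$ generic, but this is a purely cosmetic difference.
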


\begin{proof}
    Let us show that if such an uniform bound is valid, the measures with homogeneous boundary condition are different for inverse temperature $\beta$ sufficiently large. Once again we shall be working with $(M,a)$-partitions. Hence, let $\Gamma=\Gamma(\sigma,M,a)$ denote an $(M,a)$-partition of $\partial\sigma$, where $\sigma\in\Omega^+$. Let $\gamma\in\mathcal{E}_{\mathrm{ext}}(\Gamma)$. Then, by \eqref{energyestimates} and \eqref{criterion1},
    \begin{align*}
        H^+_{J,\hat{h}}(\Gamma)-H^+_{J,\hat{h}}(\Gamma\hspace{-0.1cm}\setminus\hspace{-0.1cm}\gamma)&\geq\left[H_J(\Gamma)-H_J(\Gamma\hspace{-0.1cm}\setminus\hspace{-0.1cm}\gamma)\right]+[E_{\hat{h}}(\Gamma)-E_{\hat{h}}(\Gamma\hspace{-0.1cm}\setminus\hspace{-0.1cm}\gamma)]\\
        &\geq H_J(\gamma)\left(1-\frac{C}{M}-4\alpha M^{1-\alpha}\right)-E_{\hat{h}}(\gamma)\\
        &\geq H_J(\gamma)\left(1-\eta-\frac{C}{M}-4\alpha M^{1-\alpha}\right).
    \end{align*}
    In the case of configurations with $-$-boundary condition, we obtain the exact same bound:
    \begin{align}
        H^-_{J,\hat{h}}(\Gamma)-H^-_{J,\hat{h}}(\Gamma\hspace{-0.1cm}\setminus\hspace{-0.1cm}\gamma)\geq H_J(\gamma)\left(1-\eta-\frac{C}{M}-4\alpha M^{1-\alpha}\right).
    \end{align}

    Choose $M=M(\alpha,a)$ such that
    \begin{align*}
        1-\eta-\frac{C}{M}-4\alpha M^{1-\alpha}\geq \frac{1-\eta}{2}.
    \end{align*}
    Proceeding in the exact same way as in Subsection \ref{outline}, we arrive at
    \begin{align*}
        \frac{1}{2}\langle 1+\sigma_0\rangle^-_{\alpha,\hat{h},\beta,L}\leq \sum_{\substack{\gamma_0\in\Omega_L^*\\\gamma_0\textrm{ irreducible}\\0\in V(\gamma_0)}}e^{-\frac{1-\eta}{2}\beta H_J(\gamma_0)}\leq \sum_{R=1}^\infty e^{\left(C_2-\frac{1-\eta}{2K}\beta\right)R}=\frac{e^{\left(C_2-\frac{1-\eta}{2K}\beta\right)}}{1-e^{\left(C_2-\frac{1-\eta}{2K}\beta\right)}}\xrightarrow{\beta\rightarrow\infty}0.
    \end{align*}
    Similarly, we have that
    \begin{align*}
        \frac{1}{2}\langle 1-\sigma_0\rangle^+_{\alpha,\hat{h},\beta,L}\leq\frac{e^{\left(C_2-\frac{1-\eta}{2K}\beta\right)}}{1-e^{\left(C_2-\frac{1-\eta}{2K}\beta\right)}}\xrightarrow{\beta\rightarrow\infty}0.
    \end{align*}
    \end{proof}

\subsection{A simple min-max problem}

Given a coupling $J$ and a non-negative external field $h$, it is rather straightforward, in the one-dimensional case, to check whether the introduction of the external field destroys, or not, the phase diagram. Let us provide the reason why that is the case.

Suppose $\gamma\in\Omega^*$ is such that $\lvert \mathrm{I}_-(\gamma)\rvert=n$. What sort of a lower bound can obtained for $H_J(\gamma)$? Since the interaction is ferromagnetic and $r\mapsto J(r)$ is non-increasing, it must be the energy of a configuration describing a single bubble of minuses of size $n$ (we assume $+$-boundary conditions). In the one-dimensional case, there is a single way of assembling such a bubble given the very simple geometry of $\mathbb{Z}$. We conclude that
\begin{align*}
    \inf_{\substack{\gamma\in\Omega^*\\ \lvert \mathrm{I}_-(\gamma)\rvert=n}}H_J(\gamma)=H_J\left(\left\{\frac{1}{2},n+\frac{1}{2}\right\}\right)\eqqcolon \underline{H}_J(n).
\end{align*}
On the other hand, as for the external field, one obtains the following
\begin{align*}
    \sup_{\substack{\gamma\in\Omega^*\\\lvert \mathrm{I}_-(\gamma)\rvert=n}}E_{\hat{h}}(\gamma)=\sup_{\substack{B\subset\mathbb{Z}\\\lvert B\rvert=n}}\sum_{x\in B}2\hat{h}_x\eqqcolon \bar{E}_{\hat{h}}(n).
\end{align*}

We claim that it is sufficient to prove that
\begin{align}\label{critcampo}
    \frac{\bar{E}_{\hat{h}}(n)}{\underline{H}_J(n)}=\eta<1,
\end{align}
to conclude that the external field with polynomial decay does not destroy the existence of phase transition (in case there is phase transition, of course). Indeed, in this case, 
\begin{align*}
    E_{\hat{h}}(\gamma)\leq \bar{E}_{\hat{h}}(\lvert \mathrm{I}_-(\gamma)\rvert)\leq \eta \underline{H}_J(\lvert \mathrm{I}_-(\gamma)\rvert)\leq \eta H_J(\gamma).
\end{align*}
That is, such a criterion is indeed sufficient to prove that the decaying field does not destroy the existence of a phase transition.
\begin{remark}
    Proving that the same bound as \eqref{critcampo} holds for the non-truncated field is also sufficient.
\end{remark}

\subsection{External fields with polynomial decay}\label{decaying_field}

Let us now consider the case where $J_{xy}=\lvert x-y\rvert^{-\alpha}$, with $1<\alpha\leq 2$, and
\begin{align}\label{decaying}
    h_x=\begin{cases}
        h_*, &\textrm{ if }x=0,\\
        h_*\lvert x\rvert^{-\delta},  &\textrm{ if }x\neq 0.
    \end{cases}
\end{align}

The study of models with decaying fields started in \cite{Bissacot2010}, with non-trivial results in \cite{Bissacot_2015}. This class of models is interesting since we can not use the pressure function as in some arguments for models with interactions that are translation invariant. Since $\mathbb{Z}^d$ is amenable, all the models with fields as in \ref{decaying} have the same pressure function as the model with zero field; however, if the field decays slowly, there is no phase transition \cite{Bissacot_2015, CV}. There are models with identical pressure functions but different behavior concerning the number of Gibbs measures. In this section, we obtain what seems to be a sharp region for the exponents (interaction and decaying field) to obtain phase transition, extending a previous result from \cite{Bissacot-Kimura2018}. Other papers dealing with decaying fields and long-range interactions are \cite{Affonso.2021, potts}.

\begin{lemma}\label{campos}
    The following two estimates for the zero-field energy hold
    \begin{align}
        & \underline{H}_J(n)\geq 4+\frac{4}{(\alpha-1)(2-\alpha)}\left[(n+1)^{2-\alpha}-2^{2-\alpha}\right],&\textrm{ if }1<\alpha<2,\\
        & \underline{H}_J(n)\geq 4+4\log(1+n)-4\log2,&\textrm{ if }\alpha=2.
    \end{align}
    As for the decaying field, we obtain that
    \begin{align}
        & \bar{E}_h(n)\leq h_*\left(3+\frac{2}{1-\delta}n^{1-\delta}\right), &\textrm{ if }0<\delta<1,\\
        & \bar{E}_{\hat{h}}(n)\leq 2h_*\left(\frac{1}{(R+1)^\delta}+\frac{1}{1-\delta}\left[(R+n)^{1-\delta}-(R+1)^{1-\delta}\right]\right),&\textrm{ if }0<\delta<1,\\
        & \bar{E}_h(n) \leq h_*(3+2\log n), &\textrm{ if }\delta=1.
    \end{align}
\end{lemma}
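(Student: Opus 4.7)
The plan is to handle the five inequalities in two groups, both of which reduce to elementary Riemann-sum estimates once the extremal configuration has been identified.

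For the two lower bounds on $\underline{H}_J(n)$, I would first unfold $H_J(\gamma_0)$ for the single-bubble contour $\gamma_0 = \{1/2,\,n+1/2\}$. The indicator $\chi_{\gamma_0}(x,y)$ equals $1$ precisely when exactly one of the endpoints $x,y$ lies in $\{1,\dots,n\}$, so the left/right symmetry of the bubble collapses the double sum defining $H_J$ to
\begin{align*}
\underline{H}_J(n) = 4\sum_{x=1}^n \sum_{k=x}^\infty k^{-\alpha}.
\end{align*}
The two nearest-neighbor bonds across the boundaries of the bubble contribute the leading constant $4$: I would extract them by isolating the $x=k=1$ summand, and bound the remainder by the decreasing-function Riemann comparisons $\sum_{k\geq x} k^{-\alpha} \geq x^{1-\alpha}/(\alpha-1)$ and then $\sum_{x=2}^n x^{1-\alpha} \geq \int_2^{n+1} t^{1-\alpha}\,dt$. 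The resulting integrals yield exactly the polynomial tail $[(n+1)^{2-\alpha}-2^{2-\alpha}]/(2-\alpha)$ when $\alpha\in(1,2)$ and the logarithmic tail $\log((n+1)/2)=\log(1+n)-\log 2$ when $\alpha=2$.

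For the three upper bounds on the field maxima, the starting observation is the rearrangement principle: since $x\mapsto h_x$ is non-increasing in $|x|$, the supremum in the definition of $\bar E_h(n)$ (resp. $\bar E_{\widehat h}(n)$) is attained on the set of the $n$ sites with smallest $|x|$ (resp. smallest $|x|$ inside $A^c=\{|x|>R\}$). After identifying this extremal $B$ explicitly, the sum reduces to an explicit $\sum_k h_* k^{-\delta}$ (or $\sum_k h_*(R+k)^{-\delta}$ for the truncated case) running over consecutive integers; the integral bound $\sum_{k=a}^b k^{-\delta}\leq a^{-\delta} + \int_a^b t^{-\delta}\,dt$ for $\delta\neq 1$ (together with the analogous $\log(b/a)$ estimate at $\delta=1$) then produces the claimed $n^{1-\delta}/(1-\delta)$ growth, the truncated variant featuring $(R+n)^{1-\delta}-(R+1)^{1-\delta}$, and the $\log n$ growth at the critical exponent. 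The additive constants "$3h_*$" and "$2h_*$" in the stated bounds absorb the contribution of the site at the origin (in the untruncated case) and the crude estimate $\lceil n/2\rceil \leq n$ used in passing from a two-sided to a one-sided range of summation.

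The proof contains no essential analytic difficulty; the content is entirely bookkeeping. The one place where genuine attention is needed is the extraction of the leading "$4$" in $\underline{H}_J(n)$: one must peel off the nearest-neighbor bonds before applying the Riemann comparison, since a direct integral over $[1,n+1]$ would only yield the weaker factor $4/(\alpha-1)$, which blows up near the critical exponent $\alpha\to 1^+$ and would not combine cleanly with the polynomial tail to recover the stated inequality.
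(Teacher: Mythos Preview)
Your proposal is correct and follows essentially the same route as the paper: extract the two nearest-neighbour bonds to produce the leading constant $4$, then bound the remaining tails by Riemann-sum comparisons with the corresponding integrals; for the field maxima, identify the extremal set by rearrangement and again pass to integrals. The only cosmetic difference is that you exploit the left--right symmetry of the single bubble at the outset to collapse $\underline{H}_J(n)$ to the single double sum $4\sum_{x=1}^n\sum_{k\ge x}k^{-\alpha}$, whereas the paper keeps the two half-lines separate until after the first integral comparison; the resulting bounds coincide.
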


\begin{proof}
    Let us first deal with the Hamiltonian estimates. If $1<\alpha<2$, we have that
    \begin{align*}
        \underline{H}_J(n)&=2\sum_{x=1}^n\left[\sum_{y=-\infty}^0 (x-y)^{-\alpha}+\sum_{y=n+1}^\infty(y-x)^{-\alpha}\right]\\
        &\geq4+2\sum_{x=1}^n\left[\sum_{y=-\infty}^{-1} (x-y)^{-\alpha}+\sum_{y=n+2}^\infty(y-x)^{-\alpha}\right]\\
        &\geq 4+ 2\sum_{x=1}^n\int_{-\infty}^{-1}(x-y)^{-\alpha}\mathrm{d}y+2\sum_{x=1}^n\int_{n+2}^\infty(y-x)^{-\alpha}\mathrm{d}y\\
        &=4+\frac{2}{\alpha-1}\sum_{x=1}^n\left[(x+1)^{1-\alpha}+(n+2-x)^{1-\alpha}\right]\\
        &\geq 4+\frac{4}{\alpha-1}\int_1^n (x+1)^{1-\alpha}\mathrm{d}x=4+\frac{4}{(\alpha-1)(2-\alpha)}\left[(n+1)^{2-\alpha}-2^{2-\alpha}\right].
    \end{align*}
    If $\alpha=2$, the last equality gives us $\underline{H}_J(n)\geq 4+4\log(1+n)-4\log2$.
    
    As for the non-truncated field, we obtain
    \begin{align*}
        \bar{E}_h(n)&\leq h_*\left(1+2\sum_{x=1}^n x^{-\delta}\right)\leq h_*\left(3+2\int_1^n x^{-\delta}\mathrm{d}x\right)\\
        &\leq  h_*\left(3+2\mathbf{1}_{\delta=1}\log n+\frac{2}{1-\delta}\mathbf{1}_{\delta\neq 1}n^{1-\delta}\right).
    \end{align*}
    The estimates for the truncated field are proved in a similar way.
\end{proof}

To be able to prove that phase transition to the region $\delta>\alpha-1$, we need the following auxiliary result.

\begin{lemma}\label{expoentes}
    Let $f:[1,\infty)^2\rightarrow\mathbb{R}$ be given by
    \begin{align*}
        f(x,y)=Kx^\nu-[(x+y)^\mu-(y+1)^\mu],
    \end{align*}
    where $K>0$ and $0<\mu<\nu<1$. Then, there exists $\bar{y}$ such that $f(x,y)\geq 0$ if $y>\bar{y}$.
\end{lemma}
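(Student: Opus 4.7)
The plan is to split the analysis into two regimes based on the relative sizes of $x$ and $y$, exploiting in each that $\mu < \nu < 1$. The pivot point will be $x = y+1$, chosen so that the two different estimates match up and produce the same decay in $y$.

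First, in the regime $1 \leq x \leq y+1$, I would apply the mean value theorem to $t \mapsto t^\mu$ on $[y+1, y+x]$. Since $\mu - 1 < 0$, the derivative $\mu t^{\mu-1}$ is decreasing, so
\begin{align*}
(x+y)^\mu - (y+1)^\mu \leq \mu(x-1)(y+1)^{\mu-1}.
\end{align*}
Using the crude estimate $(x-1)x^{-\nu} \leq x^{1-\nu} \leq (y+1)^{1-\nu}$, one obtains
\begin{align*}
\frac{(x+y)^\mu - (y+1)^\mu}{x^\nu} \leq \mu(y+1)^{\mu-\nu},
\end{align*}
which tends to $0$ as $y \to \infty$ because $\mu - \nu < 0$.

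Second, in the regime $x > y+1$, I would use the trivial bound $(x+y)^\mu - (y+1)^\mu \leq (x+y)^\mu \leq (2x)^\mu = 2^\mu x^\mu$. Dividing by $x^\nu$ and invoking $x > y+1$ together with $\mu - \nu < 0$ gives
\begin{align*}
\frac{(x+y)^\mu - (y+1)^\mu}{x^\nu} \leq 2^\mu x^{\mu-\nu} \leq 2^\mu (y+1)^{\mu-\nu},
\end{align*}
which again vanishes as $y \to \infty$.

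Choosing $\bar{y}$ large enough that both right-hand sides above are bounded by $K$ completes the proof: for $y > \bar{y}$ and every $x \geq 1$, one has $(x+y)^\mu - (y+1)^\mu \leq K x^\nu$, i.e.\ $f(x,y) \geq 0$. There is no genuine obstacle; the only subtle point is the choice of pivot $x=y+1$, which makes the MVT estimate (sharp for small $x-1$) agree in scaling with the crude power-law estimate (sharp for large $x$), so that both contributions exhibit the same $(y+1)^{\mu-\nu}$ decay.
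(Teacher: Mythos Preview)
Your proof is correct and takes a genuinely different route from the paper's. The paper argues via critical-point analysis: it observes that $f(1,y)=K>0$ and $f(x,y)\to+\infty$ as $x\to\infty$, then shows that for $y$ large enough the equation $\partial_x f(x,y)=0$ has no solution in $x\ge 1$ (by solving for $y$ as a function $g(x)$ and taking $\bar y=\max_{x\ge 1}g(x)$), so $f(\cdot,y)$ is monotone increasing and hence positive. You instead bound the ratio $[(x+y)^\mu-(y+1)^\mu]/x^\nu$ directly, splitting at the pivot $x=y+1$ so that the mean-value estimate (good for small $x-1$) and the crude power bound (good for large $x$) both yield the same decay $(y+1)^{\mu-\nu}$. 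Your argument is more elementary---no calculus on $f$ beyond the one-variable MVT---and actually produces a simpler explicit threshold, namely $\bar y=(2^\mu/K)^{1/(\nu-\mu)}-1$. The paper's approach, on the other hand, yields its particular closed-form $\bar y$ which is then plugged verbatim into the definition of $R_2(\alpha,\delta,h_*)$ in the application; your bound would serve equally well there.
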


\begin{proof}
We claim that $f(x,y)> 0$ if 
\begin{align}
    y> \bar{y}\coloneqq\left(\frac{\mu}{K\nu}\right)^{\frac{1}{\nu-\mu}}\left[\left(\frac{1-\nu}{1-\mu}\right)^{\frac{1-\nu}{\nu-\mu}}-\left(\frac{1-\nu}{1-\mu}\right)^{\frac{1-\mu}{\nu-\mu}}\right].
\end{align}
Indeed, since $f(1,y)=K>0$ and $\lim_{x\rightarrow\infty}f(x,y)=\infty$, for all $y\in [1,\infty)$. It is sufficient to show that $\partial_xf(x,y)\neq 0$ if $y\geq\bar{y}$. This is a simple calculus problem, one obtains that
\begin{align*}
    \partial_xf(x,y)=0\iff y=-x+\left(\frac{\mu}{K\nu}\right)^{\frac{1}{1-\mu}}x^{\frac{1-\nu}{1-\mu}}\eqqcolon g(x).
\end{align*}
Noting that $0<1-\nu<1-\mu$, we have that $\lim_{x\rightarrow\infty}g(x)=-\infty$. Therefore, it is sufficient to take $y>\bar{y}=\max_{x\geq 1}g(x)$. It is, once again, a calculus exercise to check that
\begin{align*}
    g\left[\left(\frac{\mu}{\nu K}\right)^{\frac{1}{\nu-\mu}}\left(\frac{1-\nu}{1-\mu}\right)^{\frac{1-\nu}{1-\mu}}\right]=\max_{x\geq 1}g(x)=\left(\frac{\mu}{\nu K}\right)^{\frac{1}{\nu-\mu}}\left[\left(\frac{1-\nu}{1-\mu}\right)^{\frac{1-\nu}{\nu-\mu}}-\left(\frac{1-\nu}{1-\mu}\right)^{\frac{1-\mu}{\nu-\mu}}\right].
\end{align*}
The lemma has, thus, been proved.  
\end{proof}

\begin{theorem}
    If $\delta>\alpha-1$, there is phase transition. If $\delta=\alpha-1$, there is phase transition if $h_*$ is sufficiently small.
\end{theorem}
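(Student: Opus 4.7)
The plan is to apply the sufficient criterion \eqref{critcampo} together with the min-max reduction introduced just above it: it suffices to exhibit $\eta \in (0,1)$ and a finite set $A = [-R,R]$ such that $\bar{E}_{\hat{h}}(n) \leq \eta \underline{H}_J(n)$ for every $n \geq 1$, since then the preceding stability theorem yields phase transition. All the analytic work has been packaged into Lemmas \ref{campos} and \ref{expoentes}, so the proof reduces to a case analysis on the exponents.

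For the case $\delta > \alpha-1$, I would split according to the size of $\delta$. If $\delta > 1$ (which forces $\alpha < 2$ whenever the constraint $\delta > \alpha-1$ is non-vacuous) or $\delta = 1$, then by Lemma \ref{campos} the untruncated field satisfies $\bar{E}_h(n) \leq h_*(3 + 2\log n)$ or $\bar{E}_h(n) = O(1)$, while $\underline{H}_J(n)$ grows at least like $n^{2-\alpha}$ (for $\alpha < 2$) or like $\log n$ (for $\alpha = 2$, where necessarily $\delta > 1$); the ratio tends to $0$, so a truncation on a large enough box $[-R,R]$ drives $\bar{E}_{\hat{h}}(n)/\underline{H}_J(n)$ below any desired $\eta \in (0,1)$ for all $n \geq 1$. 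If instead $0 < \delta < 1$, then $1-\delta < 2-\alpha$ and I would apply Lemma \ref{expoentes} with $\nu = 2-\alpha$, $\mu = 1-\delta$: for any constant $K > 0$ there exists $\bar{R}$ with $(R+n)^{1-\delta} - (R+1)^{1-\delta} \leq K n^{2-\alpha}$ for all $n \geq 1$ and $R \geq \bar{R}$. Choosing first $K$ small enough so that $\tfrac{2h_* K}{1-\delta} \leq \tfrac{\eta \cdot 4}{(\alpha-1)(2-\alpha)}\,\tfrac{1}{2}$, then $R$ large enough so that Lemma \ref{expoentes} holds and $2h_*(R+1)^{-\delta} \leq 2\eta$, the truncated bound from Lemma \ref{campos} yields $\bar{E}_{\hat{h}}(n) \leq \eta \underline{H}_J(n)$ uniformly.

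For the critical case $\delta = \alpha - 1$, the exponents in the numerator and denominator agree: $1-\delta = 2-\alpha$ (and similarly $\log n$ versus $\log n$ when $\alpha = 2$, $\delta = 1$). From Lemma \ref{campos} the ratio $\bar{E}_h(n)/\underline{H}_J(n)$ admits a limit of the form $c_\alpha h_*$ for an explicit constant $c_\alpha$ (roughly $(\alpha-1)/2$ when $\alpha < 2$, and $1/2$ when $\alpha = 2$). Choosing $h_*$ so that $c_\alpha h_* < 1$, one fixes $\eta \in (c_\alpha h_*, 1)$; since the convergence to the limit is controlled by lower-order terms that vanish at infinity, a single further truncation on a sufficiently large $[-R,R]$ (to absorb the boundary constant $2h_*(R+1)^{-\delta}$ and the offset $-2^{2-\alpha}$ in $\underline{H}_J$) delivers the uniform bound $\bar{E}_{\hat{h}}(n) \leq \eta \underline{H}_J(n)$ for all $n \geq 1$.

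The step I expect to be slightly delicate is the uniformity in $n$ in the subcase $0 < \delta < 1$ with $\delta > \alpha-1$: on its face the bound from Lemma \ref{campos} pits $(R+n)^{1-\delta}$ against $(n+1)^{2-\alpha}$, and although $1-\delta < 2-\alpha$ guarantees the right asymptotic ordering, the two expressions are not immediately comparable at all scales because of the shift by $R$. This is exactly what Lemma \ref{expoentes} is engineered to fix, and once it is invoked the rest is bookkeeping. The critical line $\delta = \alpha-1$ then requires no new analytic input beyond matching constants, explaining the smallness assumption on $h_*$.
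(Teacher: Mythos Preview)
Your proposal is correct and follows the same strategy as the paper: reduce to criterion \eqref{critcampo}, feed in the estimates from Lemma \ref{campos}, and invoke Lemma \ref{expoentes} to handle the non-summable subcritical regime $\alpha-1<\delta<1$. The case split and the use of truncation are all sound.

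The organization differs from the paper's in two respects worth noting. First, the paper treats the critical line $\delta=\alpha-1$ \emph{without} truncation: it shows by direct algebra that the untruncated ratio $\bar{E}_h(n)/\underline{H}_J(n)$ stays below $\eta=h_*(2K_\alpha)^{-1}$ for all $n\geq1$ provided $h_*<2K_\alpha\coloneqq\frac{8-4\alpha}{2^{3-\alpha}+6-3\alpha}$ (and $h_*<\frac{4}{3+2\log 2}$ when $\alpha=2$), thereby obtaining an explicit smallness threshold. Your truncation argument there is valid but yields only a qualitative ``sufficiently small''. Second, the paper reuses this critical computation as a stepping stone for the subcritical case: having proved $K_\alpha(3+\tfrac{2}{2-\alpha}n^{2-\alpha})\leq\tfrac12\underline{H}_J(n)$, it only needs to dominate $\bar{E}_{\hat h}(n)$ by the left-hand side, which fixes the constant $K$ entering Lemma \ref{expoentes} as $K=\tfrac{K_\alpha(1-\delta)}{h_*(2-\alpha)}$ rather than leaving it as a free parameter to be chosen small. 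Your route of choosing $K$ first and then $R$ is equally valid, just slightly less streamlined.
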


\begin{proof}
    We begin by dealing with the critical exponent. In this case, there is no need to truncate the field. Let $1<\alpha<2$ and $\delta=\alpha-1$. By \eqref{critcampo} and Lemma \eqref{campos}, we need to show that for $h_*$ sufficiently small, we have that
    \begin{align*}
        h_*\left(3+\frac{2}{2-\alpha}n^{2-\alpha}\right)\leq \eta\left\{4+\frac{4}{(\alpha-1)(2-\alpha
        )}[(n+1)^{2-\alpha}-2^{2-\alpha}]\right\},
    \end{align*}
    for all $n\in\mathbb{N}$ and some $0<\eta<1$. We claim this is the case if
    \begin{align}\label{critexponent}
        h_*<\frac{8-4\alpha}{2^{3-\alpha}+6-3\alpha}\eqqcolon 2K_\alpha,
    \end{align}
    in which case the criterion is true for $\eta=h_*(2K_\alpha)^{-1}$. Indeed, the original inequality is equivalent to 
    \begin{align*}
        3h_*-4\eta+\frac{4\cdot 2^{2-\alpha}\eta}{(\alpha-1)(2-\alpha)}\leq \left[\frac{4\eta}{(\alpha-1)(2-\alpha)}-\frac{2h_*}{2-\alpha}\left(\frac{n}{n+1}\right)^{2-\alpha}\right](n+1)^{2-\alpha}\eqqcolon f(n).
    \end{align*}
    On the other hand,
    \begin{align*}
        f(n)\geq \left[\frac{4\eta-2h_*(\alpha-1)}{(\alpha-1)(2-\alpha)}\right](n+1)^{2-\alpha}\geq \left[\frac{4\eta-2h_*(\alpha-1)}{(\alpha-1)(2-\alpha)}\right]2^{2-\alpha}.
    \end{align*}
    It is, therefore, sufficient that
    \begin{align*}
        3h_*-4\eta+\frac{4\cdot2^{2-\alpha}\eta}{(\alpha-1)(2-\alpha)}\leq \left[\frac{4\eta-2h_*(\alpha-1)}{(\alpha-1)(2-\alpha)}\right]2^{2-\alpha},
    \end{align*}
    for some $0<\eta<1$. This gives us \eqref{critexponent}.
    
    Given the computations with the critical exponent, we have that
    \begin{align*}
        K_\alpha\left(3+\frac{2}{2-\alpha}n^{2-\alpha}\right)\leq\frac{1}{2}\left(4+\frac{4}{(\alpha-1)(2-\alpha)}\left[(n+1)^{2-\alpha}-2^{2-\alpha}\right]\right)\leq\frac{1}{2}\underline{H}_J(n),
    \end{align*}
    for all $n\geq 1$. 
    
    Therefore, by \eqref{critcampo} and Lemma \eqref{campos}, to deal with the non-summable subcritical exponents $\alpha-1<\delta<1$, it is sufficient to find $R=R(\alpha,\delta,h_*)$ such that
    \begin{align*}
        2h_*\left\{\frac{1}{(R+1)^{\delta}}+\frac{1}{1-\delta}\left[(R+n)^{1-\delta}-(R+1)^{1-\delta}\right]\right\}\leq K_\alpha\left(3+\frac{2}{2-\alpha}n^{2-\alpha}\right).
    \end{align*}
    It is clearly sufficient to take $R$ sufficiently large so that
    \begin{align*}
        &2h_*(R+1)^{-\delta}\leq 3K_\alpha\textrm{ and}\\
        &(R+n)^{1-\delta}-(R+1)^{1-\delta}\leq \frac{K_\alpha(1-\delta)}{h_*(2-\alpha)}n^{2-\alpha}.
    \end{align*}
    The first one is satisfied if $R\geq\left(\frac{2h_*}{3K_\alpha}\right)^{\frac{1}{\delta}}\eqqcolon R_1(\alpha,\delta,h_*)$. As for the second one, by Lemma \eqref{expoentes}, with $K=\frac{K_\alpha(1-\delta)}{h_*(2-\alpha)}$, $\mu=1-\delta$ and $\nu=2-\alpha$, it is sufficient to take
    \begin{align*}
        R>\left[\left(\frac{\alpha-1}{\delta}\right)^{\frac{\alpha-1}{\delta-(\alpha-1)}}-\left(\frac{\alpha-1}{\delta}\right)^{\frac{\delta}{\delta-(\alpha-1)}}\right]\left(\frac{h_*}{K_\alpha}\right)^{\frac{1}{\delta-(\alpha-1)}}\eqqcolon R_2(\alpha,\delta,h_*).
    \end{align*}
    In which case, phase transition for sub-critical exponents may proved by taking $R=2\max\{R_1,R_2\}$.
    \begin{remark}
        If $1<\alpha<2$, there is no need to treat $\delta=1$ separately. Indeed, let $\delta'=\frac{\alpha}{2}$. Note that $\alpha-1<\delta'<1$. For any $h_*>0$, we have that $h_*\lvert x\rvert^{-1}< h_*\lvert x\rvert^{-\delta'}$, for all $x\neq 0$. By criterion \eqref{critcampo}, there is phase transition for $\delta=1$.
    \end{remark}

    Finally, let $\alpha=2$, the only interesting exponent to analyze is precisely the critical one $\delta=1$. We claim there is phase transition if
    \begin{align}
        h_*<\frac{4}{3+2\log 2}.
    \end{align}
    The reasoning is very similar to what was a done in the critical case for $1<\alpha<2$. The theorem has, thus, been proved.
\end{proof}

\section{Concluding Remarks}

Since Affonso, Bissacot, Endo, and Handa were able to prove \cite{Affonso.2021} (see also \cite{Johanes}) via a multidimensional adaptation of the multiscaled contours, introduced by Fr\"ohlich and Spencer in \cite{FS81} and \cite{Frohlich.Spencer.82}, the existence of phase transition for $\alpha>d$ ($d\geq 2$), it became of interest to revisit the one-dimensional case. This is particularly the case due to how widespread the hypothesis that \textit{$J(1)$ be sufficiently large} is in the one-dimensional case. One of our objectives was precisely to show that such a hypothesis can be relaxed, at least to obtain a proof via contours of phase transition. Still, it is natural to expect that we can remove this condition from all the subsequent papers after \cite{Cassandro.05}, for instance, the results from \cite{Cassandro.Merola.Picco.17, Cassandro.Merola.Picco.Rozikov.14,Cassandro.Picco.09} should be true when we assume $J(1)=1$, and no further restrictions on $\alpha$, except for those relevant to the particular problem at hand. The proof of uniqueness in the case of models with decaying fields is not standard; see \cite{Bissacot_2015, CV}. It remains an open problem to prove uniqueness for models with long-range interactions and decaying fields.

\section*{Acknowledgements}

This study was financed, in part, by the S\~{a}o Paulo Research Foundation (FAPESP), Brazil. Process Number 2023/00854-0. RB was supported by CNPq grant 408851/2018-0; HC and KW are supported by CAPES and CNPq grant 160295/2024-6. RB and HC were partially supported by USP-COFECUB Uc Ma 176/19, "{\it Formalisme Thermodynamique des quasi-cristaux \`a temp\'erature z\'ero}". The authors thank Pierre Picco for many interesting discussions about the 1d contours in Marseille and in S\~ao Paulo; his motivation about the topic inspired us.

\bibliographystyle{habbrv}
\bibliography{refs}

\end{document}